\documentclass[english,cleveref,thm-restate]{lipics-v2021}


\hideLIPIcs  


\bibliographystyle{plainurl}

\title{Structure and Independence in Hyperbolic Uniform Disk Graphs}


\author{Thomas Bläsius}{Karlsruhe Institute of Technology, Karlsruhe,
  Germany \and
  \url{http://scale.iti.kit.edu}}{thomas.blaesius@kit.edu}{https://orcid.org/0000-0003-2450-744X}{%
}

\author{Jean-Pierre {von der Heydt}}{Karlsruhe Institute of Technology, Karlsruhe, Germany \and \url{http://scale.iti.kit.edu}}{heydt@kit.edu}{https://orcid.org/0009-0000-3852-350X}{Funded by the pilot program Core-Informatics of the Helmholtz Association (HGF).}

\author{Sándor Kisfaludi-Bak}{Aalto University, Espoo, Finland \and \url{}}{sandor.kisfaludi-bak@aalto.fi}{https://orcid.org/0000-0002-6856-2902}{Supported by the Research Council of Finland, Grant 363444.}

\author{Marcus Wilhelm}{Karlsruhe Institute of Technology, Karlsruhe, Germany \and \url{http://scale.iti.kit.edu}}{marcus.wilhelm@kit.edu}{https://orcid.org/0000-0002-4507-0622}{funded by the Deutsche Forschungsgemeinschaft (DFG, German Research Foundation) -- 524989715.}

\author{Geert {van Wordragen}}{Aalto University, Espoo, Finland \and
  \url{}}{geert.vanwordragen@aalto.fi}{https://orcid.org/0000-0002-2650-638X}{}


\authorrunning{T. Bläsius, S. Kisfaludi-Bak, G. van Wordragen, J.-P. von der Heydt, M. Wilhelm} 

\Copyright{Thomas Bläsius, Jean-Pierre von der Heydt, Sándor
  Kisfaludi-Bak, Marcus Wilhelm,\\ Geert van Wordragen}

\begin{CCSXML}
<ccs2012>
   <concept>
       <concept_id>10003752.10003809</concept_id>
       <concept_desc>Theory of computation~Design and analysis of algorithms</concept_desc>
       <concept_significance>300</concept_significance>
       </concept>
   <concept>
       <concept_id>10003752.10010061.10010063</concept_id>
       <concept_desc>Theory of computation~Computational geometry</concept_desc>
       <concept_significance>500</concept_significance>
       </concept>
   <concept>
       <concept_id>10003752.10003809.10003635</concept_id>
       <concept_desc>Theory of computation~Graph algorithms analysis</concept_desc>
       <concept_significance>500</concept_significance>
       </concept>
 </ccs2012>
\end{CCSXML}

\ccsdesc[300]{Theory of computation~Design and analysis of algorithms}
\ccsdesc[500]{Theory of computation~Computational geometry}
\ccsdesc[500]{Theory of computation~Graph algorithms analysis}

\keywords{hyperbolic geometry, unit disk graphs, independent set, treewidth} 

\category{} 

\relatedversiondetails[cite=Struc_Indep_Hyper_Unifo_pre2024]{Full Version}{https://arxiv.org/abs/2407.09362}




\nolinenumbers 

%
%

\usepackage{uniinput}  
\newcommand{\eps}{\varepsilon}
\newcommand{\sig}{\varsigma}

\newcommand{\bags}{\mathrm{Bags}}
\newcommand{\Hyp}{\mathbb{H}}

\newcommand{\twformulak}{\ensuremath{\frac{\log k}{r}}}
\newcommand{\Oh}{\mathcal{O}}
\newcommand{\norm}[1]{\left\lVert#1\right\rVert}

\newcommand{\arcosh}{\mathrm{arcosh}}


\DeclareMathOperator{\dist}{\mathrm{dist}}
\DeclareMathOperator{\area}{\mathrm{Area}}

\newcommand{\HUDG}{\mathrm{HUDG}}
\newcommand{\UDG}{\mathrm{UDG}}
\newcommand{\DG}{\mathrm{DG}}

\newcounter{ctr}
\loop
 \stepcounter{ctr}
 \expandafter\edef\csname c\Alph{ctr}\endcsname{\noexpand\mathcal{\Alph{ctr}}}
\ifnum\thectr<26
\repeat

\usepackage{tikz}
\usepackage[framemethod=tikz]{mdframed}
\usepackage{xspace}
\usepackage[disable,size=tiny]{todonotes}

%
%

\EventEditors{John Q. Open and Joan R. Access}
\EventNoEds{2}
\EventLongTitle{42nd Conference on Very Important Topics (CVIT 2016)}
\EventShortTitle{CVIT 2016}
\EventAcronym{CVIT}
\EventYear{2016}
\EventDate{December 24--27, 2016}
\EventLocation{Little Whinging, United Kingdom}
\EventLogo{}
\SeriesVolume{42}
\ArticleNo{23}
\EventEditors{Oswin Aichholzer and Haitao Wang}
\EventNoEds{2}
\EventLongTitle{41st International Symposium on Computational Geometry (SoCG 2025)}
\EventShortTitle{SoCG 2025}
\EventAcronym{SoCG}
\EventYear{2025}
\EventDate{June 23--27, 2025}
\EventLocation{Kanazawa, Japan}
\EventLogo{socg-logo.pdf}
\SeriesVolume{332}
\ArticleNo{XX}     

\begin{document}

\maketitle

\begin{abstract}
  We consider intersection graphs of disks of radius $r$ in the
  hyperbolic plane. Unlike the Euclidean setting, these graph classes
  are different for different values of $r$, where very small $r$
  corresponds to an almost-Euclidean setting and
  $r \in \Omega(\log n)$ corresponds to a firmly hyperbolic
  setting. We observe that larger values of $r$ create simpler graph
  classes, at least in terms of separators and the computational
  complexity of the \textsc{Independent Set} problem.

  First, we show that intersection graphs of disks of radius $r$ in
  the hyperbolic plane can be separated with $\Oh((1+1/r)\log n)$
  cliques in a balanced manner.  Our second structural insight
  concerns Delaunay complexes in the hyperbolic plane and may be of
  independent interest. We show that for any set $S$ of $n$ points
  with pairwise distance at least $2r$ in the hyperbolic plane, the
  corresponding Delaunay complex has outerplanarity
  $1+\Oh(\frac{\log n}{r})$, which implies a similar bound on the
  balanced separators and treewidth of such Delaunay complexes.

  Using this outerplanarity (and treewidth) bound we prove that
  \textsc{Independent Set} can be solved in
  $n^{\Oh(1+\frac{\log n}{r})}$ time. The algorithm is based on
  dynamic programming on some unknown sphere cut decomposition that is
  based on the solution. The resulting algorithm is a far-reaching
  generalization of a result of Kisfaludi-Bak (SODA 2020), and it is
  tight under the Exponential Time Hypothesis. In particular,
  \textsc{Independent Set} is polynomial-time solvable in the firmly
  hyperbolic setting of $r\in \Omega(\log n)$. Finally, in the case
  when the disks have ply (depth) at most $\ell$, we give a PTAS for
  \textsc{Maximum Independent Set} that has only quasi-polynomial
  dependence on $1/\eps$ and $\ell$. Our PTAS is a further
  generalization of our exact algorithm.
\end{abstract}

\section{Introduction}
\label{sec:introduction}
Given a set of disks in the plane, one can assign to them a
\emph{geometric intersection graph} whose vertices are the disks, and
edges are added between pairs of intersecting disks. The study of
intersection graphs is usually motivated by physically realized
networks: such networks require spatial proximity between vertices for
successful connections, and the simplest model allows for connections
within a distance $2$, which is equivalent to the intersection graph
induced by disks of radius $1$ centered at the vertices. Such graphs are called \emph{unit disk graphs}.

Unit disk graphs have received a lot of attention in the theoretical computer science literature. Their intriguing structural properties yield profound mathematical insights and facilitate the development of efficient algorithms.  Historically, unit disk graphs have been studied in the Euclidean plane, where they are well motivated due to their relevance for, e.g., sensor networks. The relevance of unit disk graphs in the hyperbolic plane is less obvious.  However, originating in the network science community, it has been observed~\cite{Hyper_Geomet_Compl_Networ-Kriouk10} that the intersection graph of randomly sampled disks of equal radius $r$ yields graphs that resemble complex real-world networks in regards to important properties\footnote{For this to work, the choice of the radius $r$ is crucial.  It is chosen as $r = \log n + c$ for a constant $c$ that controls the average degree, and the disk centers are all sampled within a disk of radius $2r$.}. They are, e.g., heterogeneous with a degree distribution following a power law \cite{Random_Hyper_Graph-Gugel12}, have high clustering coefficient~\cite{Random_Hyper_Graph-Gugel12}, and exhibit the small-world property~\cite{Diamet_Hyper_Random_Graph-FriedKrohm18,Diamet_KPKVB_Random_Graph-MüllerStaps19}.

Besides numerous structural results, these hyperbolic random graphs also allow for the design of more efficient algorithms~\cite{BlasiusFK16, Effic_Short_Paths_Scale_jour2022, BlasiusFFK23, BlasiusFK23}.  However, all these results rely on the
fact that the disks are chosen randomly.  So far, there is only little
research on hyperbolic uniform disk graphs from a deterministic, more
graph-theoretic perspective. It is important to note that for the intersection graphs of radius-$r$ disks the resulting graph classes are different for different values of $r$. Moreover, it is natural to allow for the radius $r$ to be a (monotone) function of the number of vertices. Unlike the Euclidean setting where a simple scaling shows that the choice of ``unit'' does not matter, it is a crucial parameter in the hyperbolic setting.

For those unfamiliar with hyperbolic geometry there is a way to conceptually understand these graphs in a Euclidean setting as follows. One can think of a hyperbolic disk graph of radius $r$ disks as a Euclidean disk graph where the radius of a disk of center $p$ is set to $f_r(\dist(o,p))$, where $f_r$ is decreasing very quickly at a rate set by $r$, and $\dist(o,p)$ is the distance between the origin and the disk center $p$. In particular, as $r$ goes to $0$, the rate of decrease is negligible, and we get almost Euclidean unit disk graphs, while large $r$ corresponds to a very different graph class. See Figure~\ref{fig:impact-of-radius} for an illustration.

This observation is formalized by stating that hyperbolic disk graphs of uniform (that is, equal) radius $1/n^3$ and uniform radius $\log n$ are incomparable: The former class contains the $k\times k$ grid graph with $n = k^2$ for all $k$, but does not contain stars of size $n \geq 8$.  The latter contains all star graphs but does not contain any $k\times k$ grid of size $k\geq 5$. See \Cref{fig:impact-of-radius} for an illustration and \Cref{thm:gridandstar} for a formal proof of these claims. We call the regime $r\leq 1/\sqrt{n}$ \emph{almost Euclidean}, while $r\in\Omega(\log n)$ is called \emph{firmly hyperbolic}.

\begin{figure}[tbp]
  \centering
  \hfill
  \begin{subfigure}[t]{0.31\textwidth}
    \centering
    \includegraphics[page=1]{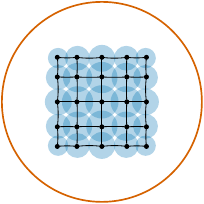}
    \subcaption{Smaller $r$ enables larger grids.}
    \label{fig:impact-of-radius-grid}
  \end{subfigure}
  \hfill
  \begin{subfigure}[t]{0.31\textwidth}
    \centering
    \includegraphics[page=2]{figures/impact-of-radius.pdf}
    \captionsetup{textformat=simple}
    \subcaption{Only small stars for small $r$.}
    \label{fig:impact-of-radius-small-star}
  \end{subfigure}
  \hfill
  \begin{subfigure}[t]{0.31\textwidth}
    \centering
    \includegraphics[page=3]{figures/impact-of-radius.pdf}
    \captionsetup{textformat=simple}
    \subcaption{Larger $r$ enables larger stars.}
    \label{fig:impact-of-radius-big-star}
  \end{subfigure}
  \hfill
  \caption{Realizing a grid is only possible for small radii, while large stars are only possible in the firmly hyperbolic setting of $r \in \Omega(\log n)$.}%
  \label{fig:impact-of-radius}
\end{figure}

The primary goal of our paper is to explore the following:

\mdfdefinestyle{highlight}{backgroundcolor=gray!20,roundcorner=10pt}
\begin{mdframed}[style=highlight]
How does the structure of hyperbolic uniform disk graphs depend on the radius $r=r(n)$?
\end{mdframed}

We note that one of the main challenges for large radii is that unlike the Euclidean plane, the kissing number\footnote{The kissing number of radius $r$ disks is the maximum number of interior-disjoint disks of radius $r$ that can touch a fixed disk of radius $r$.} of hyperbolic disks of large radius is unbounded, and as a result, there may be arbitrarily many ``nearby disks'' to any given disks in the independent set. At the same time the larger the hyperbolic disks are, the harder it is to ``surround'' them with equal radius disks, which intuitively leads to a simpler structure.

The case of constant radius hyperbolic disks (with constant kissing number) has been studied by Kisfaludi-Bak~\cite{Kisfaludi-Bak20}, but these methods fail for super-constant kissing number. The case of radius roughly $\log n$ has only been studied in a more specialized setting (inspired by hyperbolic random graph models) in~\cite{Blasius0KS23}. They call the relevant graph class \emph{strongly hyperbolic uniform disk graph}s. The above-mentioned hyperbolic random graphs are randomly sampled strongly hyperbolic uniform disk graphs. Apart from these particular choices of the radius, we have very limited understanding about the structure of hyperbolic uniform disk graphs, and the results of \cite{Blasius0KS23} do not imply anything for our hyperbolic uniform disk graphs.

The radius' impact on problem complexity was studied in the context of the traveling salesman problem, where one can use the minimum distance between points as a parameter with a similar flavor. In the hyperbolic plane, Kisfaludi-Bak~\cite{Kisfaludi-Bak21} showed that the complexity of TSP decreases as one increases the minimum pairwise distance $r$ among input points. In this paper, we show that the structure of hyperbolic uniform disk graphs behaves similarly: their structure becomes easier for larger values of $r$, and \textsc{Independent Set} can be solved faster for larger $r$.

Let $\HUDG(r(n))$ denote the set of intersection graphs where each
$n$-vertex graph can be realized as the intersection graph of disks of
uniform (equal) radius $r(n)$ in the hyperbolic plane of Gaussian
curvature\footnote{Equivalently, one can fix the radius to be $1$ and
  set the Gaussian curvature to be $-r^2$.} $-1$. We denote by $\HUDG$
the union of these classes for all radii, that is,
$\HUDG=\bigcup_{\rho>0}\HUDG(\rho)$. Let $\UDG$ denote the class of
unit disk graphs in the Euclidean plane. The authors
in~\cite{Blasius0KS23} have shown that $\UDG\subset \HUDG$. When a
graph is given without the geometric realization, it is NP-hard (even
$\exists\mathbb{R}$-complete) to decide if the graph is a
$\UDG$~\cite{McDiarmidM13} or $\HUDG$~\cite{BiekerBDJ03}; for this
reason, we will assume throughout this article that the input
intersection graphs of our algorithms are given by specifying their
geometric realization. In the hyperbolic setting, one can specify the
disk centers using a so-called \emph{model} of the hyperbolic plane,
which is simply an embedding of the hyperbolic plane into some
specific part of the Euclidean plane, e.g., inside the Euclidean unit
disk. See our preliminaries for a brief introduction to Euclidean models
of the hyperbolic plane.

From the perspective of graph algorithms, unit disk graphs in the Euclidean plane have been serving an important role as a graph class that is not comparable, but similar to planar graphs. The circle packing theorem~\cite{koebe1936kontaktprobleme} states that every planar graph can be realized as an intersection graph of disks (of arbitrary radii); thus, disk graphs serve as a common generalization of planar and unit disk graphs. Using \emph{conformal} hyperbolic models, one can observe that hyperbolic disks appear as Euclidean disks\footnote{In the Poincar\'e disk and half-plane models, all hyperbolic disks are Euclidean disks, but the Euclidean and hyperbolic radii of these disks are different.} in the model, which means that all $\HUDG$s are realized as Euclidean disk graphs. Denoting Euclidean disk graphs with $\DG$, we thus have $\UDG\subseteq \HUDG \subseteq \DG$\footnote{With a little effort, one can show that both containments are strict: $\UDG\subsetneq \HUDG \subsetneq \DG$.}.

A key structural tool in the study of both planar and (unit) disk graphs has been \emph{separator theorems}. By a result of Lipton and Tarjan\cite{LiptonT79}, any $n$-vertex planar graph can be partitioned into three vertex sets, $A,B$, and the separator $S$, such that no edges go between $A$ and $B$, the separator set $S$ has size $\Oh(\sqrt{n})$, and $\max\{|A|,|B|\}\leq \frac{2}{3}n$.
Separator theorems are closely related to the \emph{treewidth}\footnote{See \Cref{sec:prelim} for the definitions of treewidth and $\cP$-flattened treewidth.\label{ftn:treewidthdef}} of graphs~\cite{Bodlaender98}, for example, the above separator implies a treewidth bound of $\Oh(\sqrt{n})$ on planar graphs.

For (unit) disk graphs, similar separators and treewidth bounds are not possible, because one can represent cliques of arbitrary size. There have been at least three different approaches to deal with large cliques. The first option is to bound the cliques in some way. One can, for example, obtain separators for a set of disks of bounded \emph{ply}, i.e., assuming that each point of the ambient space is included in at most $\ell$ disks. There are several separators for disks (and even balls in higher dimensions) that involve ply. The strongest and most general among these is by Miller, Teng, Thurston, and Vavasis~\cite{MillerTTV97}. The second way to deal with large cliques is to use so-called clique-based separators~\cite{BergBKMZ20,BergKMT23}, where the separator is decomposed into cliques and the cliques are sometimes assigned some small weight depending on their size. In the hyperbolic setting, Kisfaludi-Bak~\cite{Kisfaludi-Bak20} showed that for any constant $c$, the graph class $\HUDG(c)$ admits a balanced separator consisting of $\Oh(\log n)$ cliques. The same paper shows that $\HUDG(c)$ also has clique-based separators of weight $\Oh(\log^2 n)$ and extends these techniques to balls of constant radius in higher-dimensional hyperbolic spaces, along with much of the machinery of~\cite{BergBKMZ20}. The third option is to use random disk positions to disperse large cliques with high probability. Bl\"asius, Friedrich, and Krohmer~\cite{BlasiusFK16} gave high-probability bounds on separators and treewidth for certain radius regimes in hyperbolic random graphs.

The above separators and treewidth can be used to obtain fast divide-and-conquer or dynamic programming algorithms~\cite{LiptonT80, BergBKMZ20}. In the \textsc{Independent Set} problem, the goal is to decide if there are $k$ pairwise non-adjacent vertices in the graph. In general graphs, the best known algorithms run in $2^{\Oh(n)}$ or $n^{\Oh(k)}$ time, and these running times are optimal under the Exponential Time Hypothesis (ETH)~\cite{fptbook} (see \cite{ImpagliazzoP01} for the definition of ETH). In planar and disk graphs however, the separators allow algorithms with running time $2^{\Oh(\sqrt{n})}$ or $n^{\Oh(\sqrt{k})}$~\cite{fptbook,MarxP22}.
Moreover, the separator implies that planar graphs have treewidth\textsuperscript{\ref{ftn:treewidthdef}} $\Oh(\sqrt{n})$ and unit disk graphs have so-called \emph{$\cP$-flattened treewidth} $\Oh(\sqrt{n})$ for some clique-partition $\cP$. As a consequence, one can adapt treewidth-based algorithms~\cite{fptbook} to these graph classes. In the hyperbolic setting, the ($\cP$-flattened) treewidth bounds yield an $n^{\Oh(\log n)}$ quasi-polynomial algorithm for \textsc{Independent Set} and many other problems on graphs in $\HUDG(c)$ for any constant $c$, which is matched by a lower bound under ETH~\cite{Kisfaludi-Bak20} in case of \textsc{Independent~Set}.

When approximating \textsc{Independent Set}, the main algorithmic tool in planar graphs is Baker's shifting technique~\cite{Baker94} that gives a $(1-\eps)$-approximate independent set in $2^{\Oh(1/\eps)}n$ time. A conceptually easier grid shifting was discovered for unit disks by Hochbaum and Maas~\cite{HochbaumM85}, which was later (implicitly) improved by Agarwal, van Kreveld, and Suri~\cite{AgarwalKS98}. This line of research culminated in the algorithm of Chan~\cite{Chan03} with a running time of $n^{\Oh(1/\eps)}$ for disks, which also can be generalized for balls in higher dimensions. Baker's algorithm and Chan's algorithm (even for unit disks) are optimal under standard complexity assumptions~\cite{Marx07a}. To the best of our knowledge, there are no published approximation algorithms for $\HUDG(r(n))$ or $\HUDG$ other than what is already implied from the algorithms on disk graphs. However, for hyperbolic random graphs,
there is an $\Oh(m \log n)$ algorithm that yields a $1 - o(1)$-approximation for \textsc{Independent Set} asymptotically almost surely~\cite{BlasiusFK23}.

An interesting setting, where the treewidth of planar graphs has been used for intersection graphs, is that of covering and packing problems in the work of Har-Peled and later Marx and Pilipczuk~\cite{Har-Peled14,MarxP22}. For the \textsc{Independent Set} problem on unit disk graphs, they consider the \emph{Voronoi diagram} of the disk centers of a maximum independent set of size $k$. For a set of points (often called \emph{sites}) $S$, the Voronoi diagram is a planar subdivision (a planar graph) that partitions the plane according to the closest site of $S$, that is, the \emph{cell} of a site $s\in S$ consists of those points whose nearest neighbor from $S$ is $s$. As the Voronoi diagram is a planar graph on $\Oh(k)$ vertices, it has a balanced separator of size $\Oh(\sqrt{k})$. Although this Voronoi diagram is based on the solution, and thus unknown from the algorithmic perspective, the idea of Marx and Pilipczuk is that one can enumerate all possible separators of the Voronoi diagram in $n^{\Oh(\sqrt{k})}$ time, and one of these separators can then be used to separate the (still unknown) solution in a balanced manner. A similar ``separator guessing'' technique has been used elsewhere in approximation and parameterized algorithms~\cite{MarxPP18,BringmannKPL19}.

\subparagraph*{Our contribution.}

Our first structural result is a balanced clique-based separator theorem presented in Section~\ref{sec:sep}. We note that all of our results in hyperbolic uniform disk graphs assume that the graph is given via some geometric representation, i.e., with the Euclidean coordinates of its disk centers in some Euclidean model of the hyperbolic plane.

\begin{restatable}{theorem}{thmSeparator}
  \label{thm:separator}
  Let $G$ be a hyperbolic uniform disk graph with radius $r$.  Then
  $G$ has a separator $S$ that can be covered with
  $\Oh\big(\log n \cdot \big(1 + \frac{1}{r}\big)\big)$ cliques, such that all connected components of $G-S$ have at most $\frac{2}{3}n$ vertices. The separator can be computed in $\Oh(n \log n)$ time.
\end{restatable}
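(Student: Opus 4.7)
My plan is to prove the theorem via the standard recipe of bootstrapping an unbalanced separator into a balanced one. Concretely, I would first establish a \emph{weak separator lemma}: any $n$-vertex graph in $\HUDG(r)$ admits a separator $S_0$ covered by $\Oh(1+1/r)$ cliques such that the largest connected component of $G-S_0$ has at most some constant fraction, say $3n/4$, of the vertices. Given such a lemma, iteratively applying it to the heaviest remaining component over $\Oh(\log n)$ levels and taking the union of all separators yields the desired balanced separator with $\Oh((1+1/r)\log n)$ cliques in its covering. The $\Oh(n\log n)$ algorithmic bound should follow by computing each weak separator in near-linear time and using standard divide-and-conquer bookkeeping.

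The weak separator itself is built from concentric hyperbolic circles. The plan is to pick a reference point $o$ in the hyperbolic plane (for example, a disk center or a hyperbolic centerpoint of the input) and consider the circles $C_i$ around $o$ at radii $2ir$ for $i=1,2,\ldots$. Because each disk has hyperbolic diameter $2r$, any two disks whose centers lie in non-consecutive annuli are vertex-disjoint, so removing the disks in a single annulus is already a valid separator. To cover the disks in an annulus at radius $R$ by cliques, I would partition it into angular wedges of angular width $\Theta(\sinh r/\sinh R)$. A direct application of the hyperbolic law of cosines shows that two disks whose centers lie in the same wedge are within pairwise hyperbolic distance at most $2r$ and therefore form a clique; hence the annulus at radius $R$ is covered by $\Theta(\sinh R/\sinh r)$ cliques. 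Crucially, this count is $\Oh(1+1/r)$ exactly when $R=\Oh(r+1)$.

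The geometric task is therefore to guarantee an $\Oh(r+1)$-radius annulus that is balanced, which is the most delicate step and will require a case split. On the one hand, if some disk center $o$ already has the property that a ball of radius $\Oh(r)$ around it contains a constant fraction of all other centers, the annulus bordering that ball works directly. On the other hand, if the centers are spread out enough that no such ball exists, then distance-based layers from a well-chosen point $o$ decrease in density quickly enough that one of the first few annuli can be used together with a short geodesic arc ``closing'' the separator in a localized angular direction, still respecting the $\Oh(1+1/r)$-clique budget. In both regimes the same wedge-based cover argument applies.

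The main obstacle I expect is establishing the weak separator lemma: naively, a concentric-annulus separator at radius $R$ uses $\Theta(\sinh R/\sinh r)$ cliques, which is too large when the balancing circle is far from $o$. Bringing this number down to $\Oh(1+1/r)$ while still achieving a constant-fraction cut will likely require either adaptively changing the reference point (for instance via a hyperbolic centerpoint or a density-based argument to place $o$ near a concentrated region) or combining radial with geodesic sweeps in the spread-out regime. The $\Oh(n\log n)$ runtime in turn should come from sorting vertices once by hyperbolic distance from $o$ and maintaining the angular buckets on the fly as the separating circle slides outward.
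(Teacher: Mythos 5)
Your plan hinges on a weak separator lemma asserting that every graph in $\HUDG(r)$ has a $3/4$-balanced separator covered by $\Oh(1+1/r)$ cliques. This lemma is false, and the failure is fatal to the bootstrapping strategy. For constant $r$, regular hyperbolic tilings (e.g.\ $\{p,q\}$-tilings with $1/p+1/q<1/2$) can be realized in $\HUDG(r)$; they have constant maximum clique size and treewidth $\Theta(\log n)$ (indeed the paper's concluding remarks point out exactly this example as evidence that the $\log n$ factor in Theorem~\ref{thm:separator} is tight). A separator covered by $\Oh(1+1/r)=\Oh(1)$ cliques of size $\Oh(1)$ would contain only $\Oh(1)$ vertices, which cannot be a constant-fraction balanced separator for a graph of treewidth $\Theta(\log n)$. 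So no amount of bootstrapping from such a lemma can produce the theorem: the $\log n$ factor must already appear in the size of a single balanced separator, not from $\Oh(\log n)$ rounds of constant-size cuts. (In fact a constant number of rounds of $3/4$-balanced separation already yields $2/3$-balance, so your scheme would allegedly give $\Oh(1+1/r)$ cliques total, strictly better than the theorem and contradicting the tiling lower bound.)

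The annulus-plus-wedges covering also has a local issue: angular width alone does not bound pairwise distance, you would also need to bound the radial extent of each cell, as two points in the same wedge at the inner and outer boundary of a thick annulus can easily be more than $2r$ apart. The paper avoids all of this by taking a fundamentally different route. It chooses a hyperbolic centerpoint $p$, finds the largest angular gap among the directions from $p$ to the vertices (which is $\Omega(1/n)$ wide), and lets the separator be the set of vertices within distance $r$ of the angular bisector $m$ of that gap. Balance follows from the centerpoint property of $m$. The key to the clique count is that the empty double wedge around $m$ truncates the separator strip: beyond hyperbolic distance $\Oh(\log \frac{1}{\varphi})=\Oh(\log n)$ from $p$ along $m$, the strip lies inside the empty wedge and contains no vertices. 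The bounded-length strip is then tiled by $\Oh(\log n / \tanh r)=\Oh((1+1/r)\log n)$ boxes each of diameter at most $2r$, proved via a Saccheri quadrilateral computation. Without the empty-double-wedge idea there is no way to bound the number of cliques along a geodesic cut, which is exactly the part your second (``spread out'') regime leaves unresolved.
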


Our separator is a carefully chosen line; the proof bounds the number of cliques intersecting the separator via decomposing some neighborhood of the chosen line into a small collection of small diameter regions. Notice that the theorem guarantees a separator with $\Oh(\log n)$ cliques in the hyperbolic setting, and deteriorates linearly as a function of $1/r$. In the almost-Euclidean setting of $r=1/\sqrt{n}$ the separator has $\Oh(\sqrt{n}\log n)$ cliques, which almost matches the $\Oh(\sqrt{n})$ clique  separator for unit disk graphs~\cite{BergBKMZ20}. This is also a direct generalization of the $2$-dimensional separator of Kisfaludi-Bak~\cite{Kisfaludi-Bak20} from $r\in \Theta(1)$ to general radii.

Moreover, 
we observe that the techniques of~\cite{BergBKMZ20,Kisfaludi-Bak20} can be utilized as the above separator implies $\cP$-flattened treewidth $\Oh\left(\log^2 n\left(1+\frac{1}{r}\right)\right)$ for any natural clique partition~$\cP$.

\begin{restatable}{corollary}{corIndepSetAlgo}\label{cor:indepset_algo}
  Let $G$ be a hyperbolic uniform disk graph with radius $r$.
  Then a maximum independent set of $G$ can be
  computed in $n^{\Oh((1+1/r)\log^2 n)}$ time.
\end{restatable}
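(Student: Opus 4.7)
The plan is to convert the clique-based separator from \Cref{thm:separator} into a tree decomposition whose bags admit small clique covers, and then run the standard dynamic program for \textsc{Independent Set} on such a decomposition.

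First, I would apply \Cref{thm:separator} recursively. Starting from $G$, take the separator $S$ with $\Oh\!\bigl((1+1/r)\log n\bigr)$ covering cliques. Put $S$ at the root of a recursion tree and recurse on each connected component of $G-S$. Because the components have at most $\tfrac{2}{3}n$ vertices, the recursion depth is $\Oh(\log n)$. The natural tree decomposition associated with this recursion takes, for each node $v$ of the recursion tree corresponding to a subinstance $G_v$, the bag $B_v$ to be the union of all separators on the root-to-$v$ path, restricted to vertices that still appear in $G_v$ or in the current separator. Standard arguments (as in Lipton--Tarjan style recursive separator decompositions, and as used in~\cite{BergBKMZ20,Kisfaludi-Bak20}) show that this really is a tree decomposition of $G$. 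Each bag is the union of at most $\Oh(\log n)$ levels of separators, and each level contributes $\Oh((1+1/r)\log n)$ cliques; concatenating these clique covers produces, for each bag, a cover by $\Oh\!\bigl((1+1/r)\log^2 n\bigr)$ cliques.

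Next, I would turn this clique-cover bound into a bound on the $\cP$-flattened treewidth for a suitable clique partition $\cP$. Collecting the cliques from all recursive separators and refining them into a partition yields a natural $\cP$ with the property that every bag of the decomposition meets only $\Oh\!\bigl((1+1/r)\log^2 n\bigr)$ parts of $\cP$. This is exactly the definition of $\cP$-flattened treewidth $\Oh\!\bigl((1+1/r)\log^2 n\bigr)$; the argument is the same as the passage from a clique separator to $\cP$-flattened treewidth in the unit disk graph setting~\cite{BergBKMZ20} and its hyperbolic analogue~\cite{Kisfaludi-Bak20}.

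Finally, I would invoke the standard DP for \textsc{Independent Set} on a tree decomposition whose bags admit such a clique cover. Since any independent set can contain at most one vertex from each clique, the number of relevant states per bag is at most $(n+1)^{\Oh((1+1/r)\log^2 n)}$, and the DP processes each bag in time polynomial in this quantity, yielding the claimed $n^{\Oh((1+1/r)\log^2 n)}$ total running time. The main obstacle is bookkeeping rather than novelty: one must verify that the recursive separator construction genuinely produces a tree decomposition and that the concatenated clique covers behave correctly under the $\cP$-flattened DP; both of these are by now standard in the clique-based separator literature, so the corollary follows by assembling \Cref{thm:separator} with these existing machineries.
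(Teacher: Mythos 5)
Your proposal is correct and reaches the claimed bound, but it takes a mildly different formal route than the paper. The paper does not build a tree decomposition at all: it finds one balanced clique separator, enumerates all $(n+1)^{\Oh((1+1/r)\log n)}$ possible intersections of an optimum independent set with the separator's cliques, removes the corresponding closed neighborhood, and recurses on the two sides directly, solving the recurrence $T(n)=2^{\Oh((1+1/r)\log^2 n)}T(\tfrac{2}{3}n)+n^{\Oh(1)}=2^{\Oh((1+1/r)\log^3 n)}=n^{\Oh((1+1/r)\log^2 n)}$. You instead build a recursive-separator tree decomposition whose bags accumulate $\Oh(\log n)$ levels of separators (hence $\Oh((1+1/r)\log^2 n)$ cliques per bag) and run the clique-partition dynamic program on it. These are two packagings of the same argument, and the exponent $\Oh((1+1/r)\log^2 n)$ shows up in both for the same reason: the $\log n$-deep recursion is ``paid for'' multiplicatively (by you, because ancestor separators pile up in the bags; by the paper, because the guessed intersections multiply across recursion levels). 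Neither approach is tight; the paper explicitly remarks that a proper $\cP$-flattened-treewidth bound of $\Oh((1+1/r)\log^2 n)$ (\`a la Robertson--Seymour, which avoids accumulating all ancestor levels) would shave a $\log n$ from the exponent, but deliberately does not optimize this, as \Cref{thm:exact} already dominates.

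One terminological nit: having $\Oh((1+1/r)\log^2 n)$ parts of $\cP$ per bag does \emph{not} by itself give $\cP$-flattened treewidth $\Oh((1+1/r)\log^2 n)$; since each clique can have size up to $n$ and therefore weight up to $\log(n+1)$, your construction only yields $\cP$-flattened treewidth $\Oh((1+1/r)\log^3 n)$. This does not affect the running time you derive — the DP state space is governed by the \emph{number} of cliques per bag, and $(n+1)^{\Oh((1+1/r)\log^2 n)}=n^{\Oh((1+1/r)\log^2 n)}$ is exactly the bound you want — but if you want to invoke the $\cP$-flattened treewidth terminology with the stronger $\Oh((1+1/r)\log^2 n)$ bound as in the paper's introduction, you would need the less wasteful bag construction rather than the one that stacks all ancestor separators.
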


Thus, a quasi-polynomial $n^{\Oh(\log^2 n)}$ algorithm is possible for all $r\in \Omega(1)$. If $r\leq 1$ --- or even $r\in \Oh(1)$ --- then for any $G\in \HUDG(r)$ the neighborhood of a disk can be covered by a constant number of cliques of $G$ (see also~\cite{Kisfaludi-Bak20} for the case $r \in \Theta(1)$).  Together with our separator in Theorem~\ref{thm:separator}, the machinery of~\cite{BergBKMZ20,Kisfaludi-Bak20} now yields the following result.

\begin{restatable}{corollary}{corAlgos}\label{cor:algos}
Let $G$ be a hyperbolic uniform disk graph with radius $r\in \Oh(1)$.  Then
\textsc{Dominating Set, Vertex Cover, Feedback Vertex Set, Connected
Dominating Set, Connected Vertex Cover, Connected Feedback Vertex Set}
can be solved in $n^{\Oh((1/r)\log n)}$ time, and \textsc{$q$-Coloring}
for $q\in \Oh(1)$, \textsc{Hamiltonian Path} and \textsc{Hamiltonian Cycle}
can be solved in  $n^{\Oh(1/r)}$ time.
\end{restatable}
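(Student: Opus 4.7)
The plan is to combine the clique-based separator from Theorem~\ref{thm:separator} with the dynamic-programming machinery of Kisfaludi-Bak~\cite{Kisfaludi-Bak20} (developed for the constant-radius hyperbolic setting) and de Berg et al.~\cite{BergBKMZ20} (for the Euclidean disk setting). Both frameworks take as input a balanced clique-based separator of the graph plus the property that every vertex has a closed neighborhood coverable by a constant number of cliques; given these, they yield dynamic programs for precisely the problems listed, with running times parameterized by the number of cliques in the separator.

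First I verify that the local-clique-cover hypothesis holds when $r\in\Oh(1)$: the closed neighborhood of a radius-$r$ disk consists of disks whose centers lie within hyperbolic distance $2r$ of the chosen center, so this neighborhood is contained in a region of hyperbolic diameter $\Oh(r)=\Oh(1)$. Such a region can be partitioned into $\Oh(1)$ subregions each of diameter at most $2r$, and any two radius-$r$ disks whose centers lie in the same subregion must intersect, so each subregion contributes a clique. Second, since $r\in\Oh(1)$ gives $1+1/r=\Oh(1/r)$, Theorem~\ref{thm:separator} produces a balanced separator formed by $\Oh((1/r)\log n)$ cliques. Plugging these two ingredients into the frameworks above yields the stated running times: for \textsc{Vertex Cover}, \textsc{Dominating Set}, \textsc{Feedback Vertex Set} and their connected variants, the interface state per separator clique is polynomially bounded, so enumerating interfaces over the $\Oh((1/r)\log n)$ cliques gives $n^{\Oh((1/r)\log n)}$; for \textsc{$q$-Coloring} with $q\in\Oh(1)$ and for \textsc{Hamiltonian Path}/\textsc{Cycle}, the symmetry across color classes and the rank-based representations of path endpoints used in~\cite{Kisfaludi-Bak20} cut the per-clique state down to $n^{\Oh(1)}$, and a careful accounting over the logarithmic recursion tree yields $n^{\Oh(1/r)}$.

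The main obstacle I foresee is a bookkeeping one: verifying that the specific separator produced by Theorem~\ref{thm:separator}---a carefully chosen hyperbolic line together with a decomposition of a neighborhood of that line into small-diameter regions---matches the interface conditions demanded by~\cite{Kisfaludi-Bak20,BergBKMZ20}, in particular that each separator clique corresponds to disks whose centers lie in a geometrically localized region. This check is essentially immediate from the construction used to prove Theorem~\ref{thm:separator}, so no genuinely new idea should be needed beyond invoking the existing frameworks.
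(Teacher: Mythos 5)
Your outline follows essentially the same path as the paper's proof: invoke Theorem~\ref{thm:separator}, observe that $r\in\Oh(1)$ makes $1+1/r=\Oh(1/r)$, combine this with a geometric clique-covering argument, and then hand everything to the dynamic-programming machinery of~\cite{BergBKMZ20} and~\cite{Kisfaludi-Bak20}. There are, however, two places where your argument is looser than what the framework actually demands.

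First, the property that the~\cite{BergBKMZ20} framework needs is not quite the ``local clique cover'' you verify (each closed neighborhood covered by $\Oh(1)$ cliques) but a $\kappa$-partition: a clique partition $\cP$ of the entire vertex set in which each partition class is adjacent to at most $\kappa=\Oh(1)$ other classes, together with a bound on the $\cP$-flattened treewidth. The paper builds this $\cP$ explicitly using a subdivision of $\Hyp^2$ into cells of diameter $\Theta(r)$ (Lemma~\ref{lem:customtile}) and argues that any ball of radius $\Oh(r)$ meets $\Oh(1)$ cells, giving both the partition and the constant degree of $G_{\cP}$ in one stroke. Your geometric observation is the right idea and can be upgraded to this, but the switch from a per-vertex clique cover to a global partition with bounded degree is a step that the cited machinery does require, so it should be made explicit.

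Second, your accounting for the $n^{\Oh(1/r)}$ bound does not close. You claim the per-clique state for \textsc{$q$-Coloring} and \textsc{Hamiltonian Path/Cycle} is ``$n^{\Oh(1)}$''; with $\Oh((1/r)\log n)$ cliques per bag this would give $n^{\Oh((1/r)\log n)}$, not $n^{\Oh(1/r)}$. The actual reason these problems are faster is that, thanks to the constant-degree $\kappa$-partition, each clique contributes only a constant number of bits of relevant interface information (e.g.\ for $q$-coloring one records multiplicities of colors in a clique; for Hamiltonicity each clique interacts with $\Oh(1)$ other cliques), so the number of states per bag is $2^{\Oh((1/r)\log n)}=n^{\Oh(1/r)}$. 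This distinction is exactly what the $\kappa$-partition buys you, which is why the first point matters.

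Aside from these two imprecisions the approach is correct and is the one the paper takes.
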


While the above separator is already powerful and can be used as a basis for quasi-polynomial divide-and-conquer algorithms, the separator size stops improving when increasing the radius $r$ beyond a constant. Nonetheless, a more powerful result is possible for super-constant radius $r$, which requires a different type of separator.  Rather than separating hyperbolic uniform disk graphs, our second structural result is about separating the Delaunay complex, which is the dual of the Voronoi diagram, of a set of sites with pairwise distance at least $2r$.  Note that in some graph $G\in \HUDG(r)$, the disk centers corresponding to an independent set will have pairwise distance more than $2r$, which is what motivated us to study such point sets.

A plane graph (a planar graph with a fixed planar embedding) is called \emph{outerplanar} or $1$-\emph{outerplanar} if all of its vertices are on the unbounded face. A plane graph is called $k$-outerplanar if deleting the vertices of the unbounded face (and all edges incident to them) yields a $(k-1)$-outerplanar graph.
Our result concerns the outerplanarity of a Delaunay complex. We prove the following tight result on the outerplanarity of the Delaunay complex $\cD(S)$ of such a site set $S$. We believe that this result 
may be of independent interest.

\begin{restatable}{theorem}{thmOuterplanarity}
\label{thm:treewidth-of-voronoi-diagram}
  Let $S$ be a set of $n$ sites (points) in $\Hyp^2$ with pairwise distance at least $2r$. Then the Delaunay complex $\cD(S)$ is $1+\Oh\big(\frac{\log n}{r}\big)$-outerplanar.
\end{restatable}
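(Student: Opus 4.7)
My plan is to bound the outerplanarity depth of each vertex $p \in S$ in $\cD(S)$ by translating it into a graph-distance question and bounding that distance geometrically via a geodesic ray argument.

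First I would observe that in any plane graph, the outerplanarity depth of a vertex is at most one plus its graph distance to the outer face of the planar embedding. For $\cD(S)$ drawn in $\Hyp^2$ with geodesic edges, the outer face consists precisely of the convex-hull vertices of $S$, since these are exactly the sites with unbounded Voronoi cells. So it suffices to show that for every $p \in S$ there is a path in $\cD(S)$ of length $\Oh(\log n / r)$ from $p$ to some convex-hull vertex.

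To construct such a path I would shoot a geodesic ray $\gamma_\theta$ from $p$ and follow it to infinity. The ray visits a sequence of Voronoi cells $V(q_0)=V(p), V(q_1),\dots, V(q_{k(\theta)})$ in which consecutive sites are Delaunay-adjacent, and since $\gamma_\theta$ eventually lies in an unbounded Voronoi cell the site $q_{k(\theta)}$ is a convex-hull vertex. Thus $q_0q_1\cdots q_{k(\theta)}$ witnesses a graph distance of at most $k(\theta)$. By averaging over the direction $\theta$, it suffices to bound the expected number of Voronoi-edge crossings of a random ray from $p$; via an integral-geometric identity this reduces to summing the angular extents at $p$ of the Delaunay edges. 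The key geometric input is the interplay between the $2r$-separation and the exponential growth of $\Hyp^2$: each Voronoi cell contains a disk of radius $r$, so the number of sites inside a hyperbolic ball of radius $R$ around $p$ is $\Oh(e^{R-r})$, while a Delaunay edge of length $\ell\geq 2r$ at hyperbolic distance $d$ from $p$ has angular extent $\Oh(\ell/\sinh d)$, which decays like $e^{-d}$. Summing shell by shell, these two effects should balance so that the total expected angular extent is $\Oh(\log n / r)$, yielding a direction with $k(\theta)=\Oh(\log n / r)$.

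The main obstacle will be executing this averaging step rigorously. Crude per-edge angular-extent bounds can be loose because Delaunay edges close to $p$ may have substantial angular extent and there can be many of them; the $2r$-separation is precisely what keeps this under control, but turning it into a clean bound requires care. An alternative, possibly more transparent route I would consider is a direct ``onion layer'' argument: show that a vertex of outerplanarity depth $d$ in $\cD(S)$ is encircled by $d-1$ nested closed walks in the planar drawing, all at pairwise hyperbolic distance $\Omega(r)$ from one another, using that Delaunay edges have length at least $2r$ and that consecutive peel layers must be Delaunay-adjacent. Hyperbolic exponential growth then forces the $i$-th layer to contain at least $\Omega(e^{r(d-i)}/r)$ points, so $n \geq e^{\Omega(rd)}$ and consequently $d = \Oh(1+\log n / r)$.
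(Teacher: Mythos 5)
Your plan diverges substantially from the paper's proof, and both of your proposed routes have real gaps.

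The paper splits into two regimes. For $r\geq 1.8$ it shows (Lemma~\ref{lem:number-of-voronoi-neighbors}) that every \emph{inner} vertex of $\cD(S)$ has degree at least $e^{r}$ via an angle-of-parallelism bound, and then Euler's formula forces a constant fraction of vertices to be outer, so peeling terminates in $O(\log_{e^{r}/6}n)$ rounds. For $r\le 1.8$ it considers the layer polygons $P_\ell$ around a fixed site $s$, drawn through Voronoi boundary points rather than as geodesic Delaunay edges, and proves $\area(P_\ell)\ge\tfrac{10}{10-r}\area(P_{\ell-1})$ by sandwiching each polygon between triangles whose shapes are controlled by the $2r$-separation. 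Neither of your approaches reproduces the large-$r$ degree argument, and your small-$r$ arguments have the following issues.

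For the geodesic-ray averaging route: what the ray crosses are \emph{Voronoi} edges, whose dual Delaunay edges have length at least $2r$, but the Voronoi edges themselves have uncontrolled (possibly unbounded) length, so the ``angular extent $O(\ell/\sinh d)$'' you attach to a Delaunay edge of length $\ell$ has nothing to do with the quantity $\sum_e\alpha_e$ you need. Your ball-counting claim is also off: disjoint disks of radius $r$ inside a ball of radius $R$ give $O(e^{R})$ sites for large $r$ (or $O(e^{R}/r^{2})$ for small $r$), not $O(e^{R-r})$. With the correct exponents the shell-by-shell cancellation you are hoping for is no longer apparent, and the summation step stays genuinely open.

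For the onion-layer route: the claim that the $d-1$ nested peel walks have pairwise hyperbolic distance $\Omega(r)$ does not follow from ``Delaunay edges have length $\geq 2r$''. If the closed walks are drawn with geodesic Delaunay edges, two such geodesics with $2r$-separated endpoints can pass arbitrarily close to one another in $\Hyp^2$, so consecutive walks need not be $r$-separated. This is exactly why the paper routes the layer polygons through Voronoi boundary points $v_\ell^{i+0.5}$ and reasons about \emph{area} growth rather than point counts; the empty disk of radius $r$ around each site sits inside its own Voronoi cell, which is what makes the area lower bound (Lemma~\ref{lem:trianglearea2}) go through, whereas ``each layer must contain $\Omega(e^{r(d-i)}/r)$ points'' is an unproven claim about the combinatorics of peel layers. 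Finally, a small but real error: in $\Hyp^2$ the outer vertices of $\cD(S)$ are the sites with unbounded Voronoi cells, and these are \emph{not} the same as the convex-hull vertices of $S$ (the paper explicitly notes the outer face is not the convex hull); your argument only uses the correct characterization, so this is cosmetic, but worth fixing. The reduction to graph distance to the outer face at the start is correct and matches the paper's (implicit) step.
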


In plane graphs, the treewidth, the outerplanarity, and the dual graph's treewidth and outerplanarity are all within a constant multiplicative factor of each other~\cite{Bodlaender98,BouchitteMT03}.
Consequently, we get the bound $\Oh\big(1+\frac{\log n}{r}\big)$ for the treewidth and outerplanarity of the Voronoi diagram and Delaunay complex of such point sets.  Note that this implies constant outerplanarity for the firmly hyperbolic setting of $r \in \Omega(\log n)$.

\begin{figure}[tbp]
\centering
\includegraphics[]{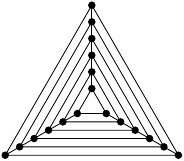}
\caption{A Euclidean Delaunay complex of $n$ points with outerplanarity $n/3$.}\label{fig:Euclouterplan}
\end{figure}

In particular, the theorem shows that for some constant $c$ and $r>c\log n$ the Delaunay complex is outerplanar. We recover the outerplanarity bound of~\cite{Kisfaludi-Bak20} for $r=\Theta(1)$, and get a sublinear outerplanarity bound ($\Oh(\sqrt{n}\log n)$) even in the almost-Euclidean setting of $r=1/\sqrt{n}$. This is surprising in light of the fact that Delaunay-triangulations can have outerplanarity $\Omega(n)$ in the Euclidean setting, as demonstrated by the set of sites in Figure~\ref{fig:Euclouterplan}. Notice however that the construction requires a point set where the ratio of the maximum and minimum distance among the points is $\Omega(n)$; mimicking this construction in $\Hyp^2$ with a point set of minimum distance $r=1/\sqrt{n}$ is not possible, as at distance $n\cdot r=\sqrt{n}$ the hyperbolic distortion is very significant compared to the Euclidean setting.

Using the outerplanarity bound, we are able to give the following algorithm for \textsc{Independent Set}.

\begin{restatable}{theorem}{thmExactalgorithm}
\label{thm:exact}
  Let $G$ be a hyperbolic uniform disk graph with radius $r$ and let $k\geq 0$.
  Then we can decide if there is an independent set of size $k$ in $G$ in
  $\min\big\{n^{\Oh\big(1+\frac{1}{r}\log k\big)},2^{\Oh(\sqrt{n})},n^{\Oh(\sqrt{k})}\big\}$
  time.
\end{restatable}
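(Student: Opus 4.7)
The plan is to prove each of the three running-time bounds with its own algorithm and return the best of the three. Two of the bounds follow by applying known results on Euclidean disk graphs: since every hyperbolic uniform disk graph is realized as a Euclidean disk graph inside any conformal model of $\Hyp^2$ (for instance, the Poincar\'e disk), we can invoke the standard subexponential-time algorithm for \textsc{Independent Set} on disk graphs (based on the Miller--Teng--Thurston--Vavasis separator~\cite{MillerTTV97}) to obtain the bound $2^{\Oh(\sqrt{n})}$, and the Marx--Pilipczuk algorithm~\cite{MarxP22} on disk graphs to obtain the bound $n^{\Oh(\sqrt{k})}$.

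The main new bound $n^{\Oh(1 + (\log k)/r)}$ uses Theorem~\ref{thm:treewidth-of-voronoi-diagram}. Suppose that an independent set $X$ of size $k$ exists, and let $P^{\star} \subseteq \Hyp^2$ denote the centers of its disks. Since $X$ is independent, the disks of $X$ are pairwise disjoint, so $P^{\star}$ has pairwise hyperbolic distance strictly larger than $2r$. Theorem~\ref{thm:treewidth-of-voronoi-diagram} then implies that the Delaunay complex $\cD(P^{\star})$ is $(1 + \Oh((\log k)/r))$-outerplanar, and since the branchwidth of a plane graph is $\Oh$ of its outerplanarity, $\cD(P^{\star})$ admits a \emph{sphere-cut decomposition} of width $w = \Oh(1 + (\log k)/r)$.

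Because $\cD(P^{\star})$ depends on the unknown solution, we perform dynamic programming on an \emph{unknown} sphere-cut decomposition, in the spirit of~\cite{MarxP22}. Each cut of such a decomposition corresponds to a Jordan curve in the plane that meets $\cD(P^{\star})$ at a cyclic sequence of at most $w$ vertices of $P^{\star}$. The DP state at each node records this cyclic sequence (chosen as any tuple of at most $w$ vertices from the input set $V(G)$) together with the subset of these cut vertices that lie in the partial solution on the ``inside'', giving $n^{\Oh(w)} \cdot 2^{w} = n^{\Oh(w)}$ states per node. Merging two children cuts into the parent cut requires another $n^{\Oh(w)}$ enumeration, subject to a planar-topology consistency check, and since the decomposition has $\Oh(k)$ nodes the overall running time is $n^{\Oh(w)} = n^{\Oh(1 + (\log k)/r)}$.

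The main technical obstacle is the formalization of this DP over an unknown sphere-cut decomposition: one must verify that the cyclic tuples of vertices guessed at each node and the way they are merged at internal nodes are always realizable as cuts of \emph{some} valid sphere-cut decomposition of the Delaunay complex of the unknown solution. Establishing this topological consistency together with the geometric constraints imposed by $\cD(P^{\star})$, and then arguing that the optimum of the DP coincides with the size of a genuine maximum independent set of $G$, is the heart of the argument.
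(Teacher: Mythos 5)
Your proposal correctly identifies the overall strategy and even the right ingredient (Theorem~\ref{thm:treewidth-of-voronoi-diagram} applied to the Delaunay complex of the unknown solution, followed by dynamic programming on an unknown sphere cut decomposition), and the two fallback bounds are obtained the same way as in the paper, by viewing the hyperbolic disks as Euclidean disks in the Poincar\'e model and invoking the disk-graph algorithms. However, you explicitly stop at what you call ``the heart of the argument'' --- the verification that the guessed DP states and their merges actually correspond to a legitimate sphere cut decomposition of the Delaunay complex of \emph{some} independent set, and that the optimum of the DP really is the size of a maximum independent set --- and that is precisely where the paper's novel technical content lives. Without it the argument is not yet a proof.

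Concretely, the missing piece has two parts. First, your DP state is a \emph{combinatorial} cyclic tuple of at most $w$ vertices of $V(G)$, possibly with an inside/outside subset annotation; but a purely combinatorial tuple is not enough to check soundness, because the criterion that forces the visited sites to be pairwise far apart is \emph{metric}. The paper instead works with \emph{normalized nooses}: generalized polygons alternating between sites and either Voronoi vertices (each determined by three sites of $G$) or pairs of ideal Voronoi vertices joined by an ideal arc (each determined by two sites of $G$). This keeps the number of candidates at $n^{\Oh(w)}$ while giving each candidate a concrete curve in the Poincar\'e disk, so distances from the curve to sites can be computed. Second, the paper replaces ``is a noose of a sphere cut decomposition of $\mathcal D(P^\star)$'' --- which cannot be checked because $P^\star$ is unknown --- by two \emph{local, checkable} conditions on each parent/children triple $(O_P,O_L,O_R)$: \emph{validity} ($O_P\setminus\{p_1,p_2\}$ is the symmetric difference of $O_L$ and $O_R$, and their interiors nest correctly), which is the combinatorial/topological consistency you allude to, and \emph{well-spacedness} (the sites visited by the triple are pairwise at distance $\geq 2r$, and each noose stays at distance $\geq r$ from sites of the triple it does not visit). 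The completeness direction (every independent set yields such a hierarchy) follows from Lemmas~\ref{lem:normalized_noose_combination} and~\ref{lem:normalized_nooses_distance}; the soundness direction (every valid, well-spaced hierarchy visits an independent set) is Lemma~\ref{lem:noose_hierarchy_yields_is}, proved by an induction that crucially uses both validity (nested interiors) and the two metric conditions. Without well-spacedness, your ``planar-topology consistency check'' would accept hierarchies whose visited vertices are not independent, and the DP would overcount. Also note that in the paper's formulation there is no subset-of-cut-vertices annotation: every site visited by any polygon in the hierarchy is automatically in the solution, so the $2^w$ factor is not needed and would in any case not rescue an argument that lacks the metric conditions.

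A smaller mismatch: the paper does not perform DP over the $\Oh(k)$ nodes of a single fixed decomposition (there is no known decomposition to iterate over); it enumerates all $n^{\Oh(w)}$ candidate nooses up front and does a bottom-up DP indexed by candidate nooses, in time cubic in their number. Your phrasing suggests a Marx--Pilipczuk-style divide-and-conquer on guessed balanced separators, which the paper explicitly avoids because it loses an extra logarithmic factor in the exponent.
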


Our algorithm yields the first component of the running time
($n^{\Oh\big(1+\frac{1}{r}\log k\big)}$), but for very small $r$ we can
switch to the general disk graph algorithms with running times
$2^{\Oh(\sqrt{n})}$ by the authors in~\cite{BergBKMZ20} and
$n^{\Oh(\sqrt{k})}$ by Marx and Pilipczuk~\cite{MarxP22}.
Our algorithm is best possible under ETH
for $r=\Theta(1)$ as~\cite{Kisfaludi-Bak20} showed a lower bound of
$n^{\Omega(\log n)}$ (for large $k$). Notice moreover that the almost-Euclidean case of $r=1/\sqrt{n}$
gives a running time of $n^{\Oh(\sqrt{n}\log k)}$, which almost
recovers the Euclidean running time of $2^{\Oh(\sqrt{n})}$. Recall
that the Euclidean running time is also ETH-tight for Euclidean unit
disks~\cite{BergBKMZ20}. While this Euclidean lower bound cannot be
directly applied to $\HUDG(1/\sqrt{n})$, it can be adapted to the
this setting, so the running time lower bound~$2^{\Omega(\sqrt{n})}$
holds also in the hyperbolic
plane (assuming ETH). See~\cite{Kisfaludi-Bak21} for a similar adaptation of a
Euclidean lower bound to the setting of $r= 1/\sqrt{n}$.

Finally, but most surprisingly, we get a polynomial running time for \textsc{Independent Set} in the firmly hyperbolic setting of $r\in\Omega(\log n)$; in fact, our algorithm is polynomial already for $r\in\Omega(\log k)$. In particular, this provides a polynomial exact algorithm for hyperbolic random graphs.  It thereby directly generalizes the algorithm of~\cite{BlasiusFFK23} which gives a polynomial running time in hyperbolic random graphs with high probability: our algorithm has no assumptions on the input distribution and provides a polynomial worst-case running time.

The underlying idea of our exact algorithm can be regarded as a dynamic programming algorithm along the (unknown) tree decomposition of the Voronoi diagram of the solution disks, inspired by Marx and Pilipczuk~\cite{MarxP22}. More precisely, both~\cite{MarxP22} and the present paper use so-called \emph{sphere cut decompositions}, a variant of branch decompositions for plane graphs~\cite{Dorn_Penninkx_Bodlaender_Fomin_2010} to guide the algorithm. However, due to our setting, a simple adaptation of the divide-and-conquer approach of~\cite{MarxP22} does not give the desired running time for us (it loses another logarithmic factor in the exponent). To get around this problem, we extend Marx and Pilipczuk's technique of noose-based separators to a noose-based dynamic programming algorithm. Additionally, we introduce \emph{well-spaced} and \emph{valid} sphere cut decompositions, and prove that they are in one-to-one correspondence with independent sets of $G$.

Finally, we consider approximation algorithms for \textsc{Maximum Independent Set} when the graph in question has low ply, that is, each point of $\Hyp^2$ is covered by at most $\ell$ disks.  We show the following.

\begin{restatable}{theorem}{thmApprox}
  \label{thm:approx-independent-set}
  Let $\eps\in (0,1)$ and let $G$ be a hyperbolic uniform disk graph of radius $r$ and ply $\ell$.
  Then a $(1-\eps)$-approximate maximum independent set of $G$ can be found in
  $\Oh\left( n^4 \log n \right) + n \cdot \left( \frac{\ell}{\eps} \right)^{\Oh(1 + \frac{1}{r} \log\frac{\ell}{\eps})}$ time.
\end{restatable}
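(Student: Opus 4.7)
The plan is to generalise the exact algorithm of \Cref{thm:exact} by combining it with a Baker-type shifting argument that exploits the ply bound $\ell$. An independent set in a ply-$\ell$ instance still has pairwise distance at least $2r$, so its Delaunay structure inherits the small outerplanarity guaranteed by \Cref{thm:treewidth-of-voronoi-diagram}; the task is therefore to reduce the full ply-$\ell$ instance to many small sub-instances, each solvable via \Cref{thm:exact}.

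\textbf{Preprocessing.} First I would compute a decomposition of the relevant portion of the hyperbolic plane into $\Oh(n)$ \emph{cells} whose hyperbolic diameters are chosen so that each cell together with its $r$-neighbourhood contains at most $\Oh(\ell/\eps)$ input disk centres (this bound follows from the ply $\ell$ and the controlled area of a cell-plus-buffer). The cells come in $s = \lceil 1/\eps \rceil$ shifted variants $\cC_0,\dots,\cC_{s-1}$ forming a \emph{Baker family}: for every fixed independent set $I$, the average over $i$ of the number of disks of $I$ whose centres lie within distance $r$ of a cell boundary in $\cC_i$ is at most $\eps|I|$. Building this family together with the adjacency / neighbour-search data structures needed for the shifts is where the $\Oh(n^4\log n)$ preprocessing time is absorbed.

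\textbf{Shift selection and exact subproblems.} By the averaging property there is a shift $i^\ast$ for which the ``boundary disks'' near $\cC_{i^\ast}$ have total weight at most $\eps\cdot\OPT$; deleting them loses at most an $\eps$-fraction of $\OPT$. After this deletion the problem decomposes into independent sub-problems, one per cell, because the $r$-buffer ensures that disks retained in different cells cannot intersect. Each sub-instance has $\Oh(\ell/\eps)$ disks and (again by the ply bound) an optimum of size $\Oh(\ell/\eps)$, so \Cref{thm:exact} applied with $n,k=\Oh(\ell/\eps)$ solves it in $(\ell/\eps)^{\Oh(1+\frac{1}{r}\log(\ell/\eps))}$ time. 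Summing over the $\Oh(n)$ non-empty cells yields the second term, and the union of per-cell optima is the desired $(1-\eps)$-approximation.

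\textbf{Main obstacle.} The hardest part is designing the hyperbolic cell family and its shifts. The exponential area growth of $\Hyp^2$ rules out a naive Euclidean grid argument: the cells must simultaneously (i) cap the local instance size at $\Oh(\ell/\eps)$, (ii) admit $\Omega(1/\eps)$ shifts whose boundary-disk counts average to at most $\eps|I|$ for every independent set $I$, and (iii) be computable in polynomial time. A natural route is a layered decomposition formed by concentric circles (or horocycles) around a well-chosen centre, shifted radially; here the exponential growth of hyperbolic circles actually helps, because the relative width of the $r$-buffer decreases rapidly with radius, so a uniformly random radial shift incurs only an $\eps$-fraction loss on any fixed independent set.
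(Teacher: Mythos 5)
Your proposal takes a genuinely different route from the paper, but it has a real gap precisely at the step you flag as the ``main obstacle.''

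The paper does not use Baker shifting at all. Instead it proves a degeneracy bound (every hyperbolic uniform disk graph of ply $\ell$ is $(4\ell-1)$-degenerate, hence $\OPT = \Omega(n/\ell)$), then builds an $r$-division in the style of Frederickson by \emph{recursively} applying the $\Oh(\sqrt{n})$ clique-based separator of de~Berg et al.\ until all pieces have at most $t = \ell^2/\eps^2$ vertices. This discards $\Oh(n/\sqrt{t}) = \Oh(\eps n/\ell)$ cliques, each contributing at most one vertex to any independent set, so the loss is an $\Oh(\eps)$-fraction of $\OPT$; each piece is then solved exactly via Theorem~\ref{thm:exact}. No shift family is ever constructed, and the correctness rests entirely on the degeneracy-based lower bound on $\OPT$.

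Your approach would need a hyperbolic Baker-style shift family, and the sketch you give does not establish one. A single radial shift of concentric annuli of width $\Theta(r/\eps)$ cannot cap the number of disks per cell at $\Oh(\ell/\eps)$: the area of an annulus of fixed width at hyperbolic distance $d$ grows like $e^{d}$, so a cell deep in the instance can contain arbitrarily many disks even at ply $\ell$. You therefore also need an angular (or otherwise second-dimensional) subdivision, and the shifts in that direction have to be coordinated with the radial ones so that (a) each cell still has bounded size, (b) the buffer zones have total ``boundary'' weight $\Oh(\eps\lvert I\rvert)$ averaged over shifts for every independent set $I$, and (c) the whole family is polynomial-time computable. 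None of this is argued. The paper's recursive-separator route exists precisely because it sidesteps this construction: the number of cliques cut is bounded by the separator recursion, and the approximation guarantee follows from the degeneracy bound on $\OPT$ rather than from any averaging over shifts. If you want to pursue the Baker route you would at minimum need to (i) supply the two-dimensional hyperbolic shift family, and (ii) either prove the averaging property directly or fall back on the same degeneracy lower bound the paper uses; as written, the correctness of the approximation ratio is not established.
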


An important component of the algorithm's correctness proof is bounding the so-called \emph{degeneracy} of hyperbolic uniform disk graphs in terms of their clique size and in terms of their ply. This generalizes the Euclidean results of~\cite{MaratheBHRR95}.
When compared to the $n^{\Oh(1/\eps)}$ algorithm for disk graphs by Chan~\cite{Chan03} or the $2^{\Oh(1/\eps)}n$ algorithm in planar graphs by Baker~\cite{Baker94}, both of which are conditionally optimal~\cite{Marx07a}, our algorithm has a surprising quasi-polynomial dependence on $1/\eps$ rather than exponential. The dependence on $r$ exhibits the classic Euclidean-to-hyperbolic scaling seen in our earlier structural results. For $\eps=1/n$, we are guaranteed to get the optimal solution, and setting $\ell=n$ covers the general case. The resulting running time for $1/\eps=\ell=n$ is $n^{\Oh\left(1+\frac{1}{r}\log n\right)}$, which matches our exact algorithm. Consequently, our approximation scheme is a generalization of our exact algorithm.

\subparagraph{Outline.}
The remainder of this article is structured as follows.  After the
introduction of important definitions and notation in
\cref{sec:prelim}, we first give a high-level overview of our results
in \cref{sec:overview}.  The subsequent sections then discuss all of
our results in more detail.  Finally, we conclude with a short summary
and discussion of future directions in \cref{sec:conclusion}.

\section{Preliminaries}\label{sec:prelim}

We introduce fundamental definitions and notation used throughout the
paper.  Additional concepts and notation are introduced in their
relevant sections as needed.

\subparagraph*{Graphs and treewidth.}

A \emph{graph} $G = (V, E)$ is a tuple of $n$ \emph{vertices} $V$ and
$m$ \emph{edges} $E \subseteq V \times V$.  We sometimes write $V(G)$
and $E(G)$ to make explicit which graph we refer to.  Unless mentioned
otherwise, we only consider \emph{simple} graphs, i.e., there are no
multi edges and no loops.  A graph $H$ is a \emph{subgraph} of $G$ if
$V(H) \subseteq V(G)$ and $E(H) \subseteq E(G)$.  The subgraph $H$ is
\emph{induced} by $V(H)$ if $u, v \in V(H)$ and $\{u, v\} \in E(G)$
implies that $\{u, v\} \in E(H)$.  Two vertices are \emph{adjacent} if
they appear together in the same edge and an edge is \emph{incident}
to its vertices. The \emph{degree} of a vertex is the number of
incident edges.  A \emph{cycle} is a connected graph in which every
vertex has degree~$2$.

An \emph{independent set} in a graph is a set of pairwise non-adjacent
vertices. In the \textsc{Independent Set} problem, we are
given\footnote{In this article, the graphs are always intersection
  graphs given through their geometric representation.} a graph $G$
and a number $k$, and the goal is to decide if there is an independent
set of size $k$. In \textsc{Maximum Independent Set}, the input is a
graph $G$, and the goal is to output the maximum value $k$ such that
$G$ has an independent set of size $k$.

Let $G = (V, E)$ be a graph. A vertex set $S \subseteq V$ is a
\emph{separator} if $V$ can be partitioned into non-empty subsets $S$,
$V_1$, and $V_2$ such that there is no edge between $V_1$ and $V_2$,
i.e., for every $v_1 \in V_1$ and $v_2 \in V_2$, it holds that
$\{v_1, v_2\} \notin E$.  The separator is $\beta$-\emph{balanced} if
$|V_1|, |V_2| \le \beta n$ for some $\beta<1$.

A \emph{tree decomposition} of a graph $G=(V,E)$ is a pair $(T,\sig)$
where $T$ is a tree and $\sig$ is a mapping from the vertices of $T$
to subsets of $V$ called \emph{bags}, with the following
properties. Let $\bags(T,\sig) := \{ \sig(u): u \in V(T) \}$ be the
set of bags associated to the vertices of $T$. Then we have: (1) For
any vertex $u\in V$ there is at least one bag in $\bags(T,\sig)$
containing it. (2) For any edge $(u,v)\in E$ there is at least one bag
in $\bags(T,\sig)$ containing both $u$ and $v$. (3) For any vertex
$u\in V$ the collection of bags in $\bags(T,\sig)$ containing $u$
forms a subtree of~$T$.  The \emph{width} of a tree decomposition is
the size of its largest bag minus~1, and the \emph{treewidth} of a
graph $G$ equals the minimum width of a tree decomposition of $G$.

A \emph{clique-partition} of $G$ is a partition $\cP$ of $V(G)$ where
each partition class $C\in \cP$ forms a clique in $G$.  The
\emph{$\cP$-contraction} of $G$ is the graph obtained by contracting
all edges induced by each partition class, and removing parallel
edges; it is denoted by $G_\cP$.  The \emph{weight} of a partition
class (clique) $C\in \cP$ is defined as $\log(|C|+1)$. Given a set
$S\subset \cP$, its weight $\gamma(S)$ is defined as the sum of the
class weights within, i.e.,
$\gamma(S):= \sum_{C\in S} \log(|C|+1)$. Note that the weights of the
partition classes define vertex weights in the contracted graph
$G_\cP$.

We will need the notion of {\em weighted
  treewidth}~\cite{EijkhofBK07}.  Here each vertex has a weight, and
the {\em weighted width} of a tree decomposition is the maximum over
the bags of the sum of the weights of the vertices in the bag (note:
without the minus $1$).  The {\em weighted treewidth} of a graph is
the minimum weighted width over its tree decompositions.  Now let
$\cP$ be a clique partition of a given graph $G$.  We apply the
concept of weighted treewidth to $G_{\cP}$, where we assign each
vertex $C$ of $G_\cP$ the weight~$\log(|C|+1)$, and refer to this
weighting whenever we talk about the weighted treewidth of a
contraction $G_\cP$. For any given $\cP$, the weighted treewidth of
$G_\cP$ with the above weighting is referred to as the
\emph{$\cP$-flattened treewidth} of $G$.

\subparagraph*{Planarity.}

A \emph{drawing} $\Gamma$ of $G$ maps its vertices to different points
in $\mathbb R^2$ and its edges to curves between its endpoints, i.e.,
$\Gamma(v) \in \mathbb R^2$ and $\Gamma(\{u, v\})$ is a curve from
$\Gamma(u)$ to $\Gamma(v)$.  We sometimes also identify a vertex with
its position, i.e., $v$ is used to refer to the vertex as well as to
the point $\Gamma(v)$.  A drawing is \emph{planar} if no two edges
intersect except at a common endpoint and the graph $G$ is planar if
it has a planar drawing.  Two planar drawings $\Gamma_1$ and
$\Gamma_2$ are \emph{combinatorially equivalent} if there is a
homeomorphism of the plane onto itself that maps $\Gamma_1$ to
$\Gamma_2$.  The equivalence classes with respect to this equivalence
relation are called \emph{(planar) embeddings} and the drawings within
such a class are said to \emph{realize} the embedding.  A planar graph
together with a fixed embedding is also called a \emph{plane} graph.

Consider a graph $G$ with a planar drawing $\Gamma$.  Removing all
edges (i.e., their drawing) from $\mathbb R^2$ potentially disconnects
$\mathbb R^2$ into several connected components, which are called
\emph{faces}.  Exactly one of these faces is unbounded.  It is called
the \emph{outer face}; all other face are \emph{inner faces}.  The
boundary of each face is a cyclic sequence of vertices and edges.  For
different drawings realizing the same embedding, these sequences are
the same.  Thus, to talk about faces and their boundaries, we do not
require a drawing but only an embedding.
Vertices and edges on this boundary are \emph{incident} to the face
and if a vertex appears multiple times on the boundary, it has
multiple \emph{incidences} to the face.  An edge can also have two
incidences to the same face, which is the case if and only if it is a
\emph{bridge}, i.e., removing it separates the graph in two
components.
In a plane graph, a vertex incident to the outer face is called
\emph{outer vertex}; other vertices are \emph{inner vertices}.

Given a plane graph $G$, the \emph{dual} $G^\star$ has one vertex for
each face of $G$.  For every edge $e$ in $G$, the dual $G^\star$
contains the dual edge $e^\star$ that connects the two faces incident
to $e$ (which results in a loop, if $e$ is a bridge).  The dual
$G^\star$ is itself a plane graph and its dual is
$G^{\star\star} = G$.  The \emph{weak dual} of $G$ is $G^\star$
without the vertex representing the outer face of $G$.

A \emph{polygon} is a cycle together with a planar drawing in which
all edges are drawn as line segments. Note that this enables the use
of graph notation, e.g., using $V(P)$ to refer to the vertices of
polygon $P$.

\subparagraph*{Hyperbolic geometry.}

\begin{figure}[tbp]
	\centering
	\begin{minipage}[b]{0.31\textwidth}
		\centering
		\includegraphics[page=1]{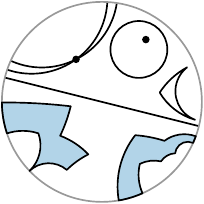}
		\subcaption{}
		\label{fig:prelim:poincare}
	\end{minipage}
	~ 
	\begin{minipage}[b]{0.31\textwidth}
		\centering
		\includegraphics[page=2]{figures/poincare_klein.pdf}
		\subcaption{}
		\label{fig:prelim:klein}
	\end{minipage}
	~ 
	\begin{minipage}[b]{0.31\textwidth}
		\centering
		\includegraphics[page=1]{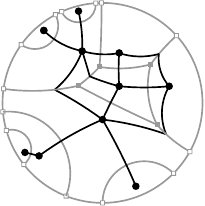}
		\subcaption{}
		\label{fig:prelim:delaunay}
	\end{minipage}
	\caption{Part (a) shows a visualization of the Poincaré disk
		including (parallel) lines, a circle with its center, a triangle
		and generalized polygons.  Part (b) shows a visualization of the
		Beltrami-Klein model including (parallel) lines, a triangle, and a
		generalized polygon. Part (c) shows a set of sites $S \subseteq \Hyp^2$
		in the Poincaré disk model, $\mathcal{D}(S)$ (black), and
		$\mathcal{V}(S)$ (gray); Voronoi vertices are filled squares,
		ideal Voronoi vertices are empty squares.}%
	\label{fig:prelim}
\end{figure}

We write $\Hyp^2$ to refer to the hyperbolic plane and $\mathbb{R}^2$
to refer to the Euclidean plane.  The \emph{Poincaré disk} is a model
of $\Hyp^2$ that maps the whole hyperbolic plane into the interior of
a Euclidean unit disk; see Figure~\ref{fig:prelim:poincare}.  We refer to
the center of the Poincaré disk as \emph{origin}.
Straight lines in $\Hyp^2$ are represented as circular arcs
perpendicular to the boundary of the Poincaré disk or as chords
through the origin.  Hyperbolic circles are represented as Euclidean
circles.  The center of the hyperbolic circle is, however, farther
from the origin than the Euclidean center of its representation.  The
Poincaré disk is \emph{conformal}, i.e., angle preserving.

Points on the boundary of the Poincaré disk are called \emph{ideal}
points.  These are not part of the hyperbolic plane, but form a
boundary of infinitely far points.  An \emph{ideal arc} between two
ideal points $q_1$ and $q_2$ is the set of ideal points between $q_1$
and $q_2$ when moving clockwise on the boundary of the Poincaré disk.
A \emph{generalized polygon} is a cycle together with a planar drawing
that maps each vertex to a point in $\mathbb H^2$ or to an ideal
point.  Each edge is either a line segments between points, a ray from
a point to an ideal point, a line between two ideal points, or an
ideal arc between two ideal points.  Note that the vertices do not
completely determine the generalized polygon, as two ideal points can
be connected via a line or an ideal arc.  Moreover, a 2-cycle can
yield a generalized polygon by mapping both vertices to ideal points
and one edge to the line and the other to the ideal arc between the
two.  Figure~\ref{fig:prelim:poincare} shows two generalized polygons
with their interior shaded in blue.

For a straight line $\ell$ and a point $p \notin \ell$, there are
infinitely many \emph{parallel} lines through $p$, i.e., lines through
$p$ that do not intersect $\ell$.  Two of these parallel lines are
special in the sense that they are the closest to not being parallel.
They each share an ideal endpoint with $\ell$ and are called
\emph{limiting parallels}.  Let $q \in \ell$ be such that $pq$ is
perpendicular to $\ell$.  Then the angle between $pq$ and the two
limiting parallels is the same on both sides and only depends on the
length $x = |pq|$.  It is called the \emph{angle of parallelism} and
denoted by $\Pi(x)$.  It holds that
$\sin(\Pi(x)) = 1 /
\cosh(x)$~\cite[page~402]{Euclid_Non_Euclid_Geomet-Green93}.

Let $a$ be the sum of interior angles of a hyperbolic triangle.  Then
$a < \pi$ and the area of the triangle is $\pi - a$.  Note that this
implies that the area of a triangle is upper bounded by $\pi$.  More
generally, the area of a $k$-gon is bounded by $(k - 2) \pi$.  The
area of a disk with hyperbolic radius $r$ is $4π \sinh^2(r/2)$, which
is in $\Theta(e^r)$ for $r \to \infty$ and in $\Theta(r^2)$ for $r \to 0$.

The \emph{Beltrami--Klein model} also uses the interior of the
Euclidean unit disk as ground space.  Hyperbolic straight lines are
represented as chords of the unit disk, see Figure~\ref{fig:prelim:klein}.
This model is not conformal, but enables an easy translation of
Euclidean line arrangements into the hyperbolic plane.

\subparagraph*{Delaunay complexes and Voronoi diagrams.}

Let $S$ be a set of \emph{sites}, i.e., a set of designated points in
$\Hyp^2$.  The \emph{Voronoi cell} of a site $s \in S$ is the set of
points that are closer to $s$ than to any other site.  When considered
in the Poincaré disk, the boundary of each Voronoi cell is a
generalized polygon\footnote{We note that Voronoi cells containing
	ideal points in their boundary are actually unbounded in the
	hyperbolic plane.}.  The non-ideal segments are part of the
perpendicular bisector of two sites and are called \emph{Voronoi
	edges}; they are illustrated in blue in
Figure~\ref{fig:prelim:delaunay}.  The points in which three or more
Voronoi edges meet are called \emph{Voronoi vertices}.  The ideal
points in which unbounded Voronoi edges end are called \emph{ideal
	Voronoi vertices}.  We define the \emph{Voronoi diagram}
$\mathcal V(S)$ of $S$ as the following plane graph.  Its vertex set
is the set of all (ideal) Voronoi vertices.  The edge set of
$\mathcal V(S)$ is comprised of the Voronoi edges and the set of ideal
arcs connecting consecutive ideal vertices (red in
Figure~\ref{fig:prelim:delaunay}).

The weak dual of the Voronoi diagram is called \emph{Delaunay complex}
and denoted by $\mathcal D(S)$; it is illustrated in black in
Figure~\ref{fig:prelim:delaunay}.  The edges of $\mathcal D(S)$ are
exactly the edges dual to the Voronoi edges, i.e., the edges of
$\mathcal V(S)$ that are not ideal arcs.  Thus, sites $s_1 \in S$ and
$s_2 \in S$ are connected in the $\mathcal D(S)$ if and only if their
Voronoi cells share a boundary.  In contrast to the Euclidean case,
the outer face of $\mathcal D(S)$ is not the convex hull of $S$.  In
fact, its boundary is not necessarily a simple cycle; see
Figure~\ref{fig:prelim:delaunay}.  A site is an outer vertex of
$\mathcal D(S)$ if and only if the corresponding Voronoi cell is
unbounded.

Let $f_\infty$ denote the outer face of the Delaunay complex
$\mathcal D(S)$.  For each incidence of an edge $e$ of $\mathcal D(S)$
to $f_\infty$, the dual edge $e^\star$ has one ideal vertex as
endpoint.  If $e$ is not a bridge, $e^\star$ is a ray connecting a
Voronoi vertex with an ideal Voronoi vertex.  If $e$ is a bridge
$e^\star$ is a line connecting two ideal Voronoi vertices; one for
each incidence of $e$ to $f_\infty$.

The Voronoi cells are convex.  Thus, drawing each edge $\{u, v\}$ of
the Delaunay complex $\mathcal D(S)$ by choosing an internal point $p$
on the Voronoi edge dual to $\{u, v\}$ and then connecting $u$ via $p$
to $v$ using the two line segments $up$ and $pv$ yields a planar
drawing of $\mathcal D(S)$; see Figure~\ref{fig:prelim:delaunay}.  It
has the property that each edge of $\mathcal D(S)$ intersects its dual
Voronoi edge in exactly one point and it intersects no other Voronoi
edges.

\section{Overview of our techniques}\label{sec:overview}

\subsection{Balanced separators in HUDGs}

Let $G$ be a hyperbolic uniform disk graph with $n$ vertices and
radius $r$.  We show that $G$ has a balanced separator that can be
covered by $\Oh((1 + 1 / r)\log n)$ cliques; see
Theorem~\ref{thm:separator}.  The overall argument is as follows.  We
find a double wedge bounded by two lines $\ell_1$ and $\ell_2$ that
contains no vertex in its interior and separates the other vertices of
$G$ in a balanced fashion, which is ensured by selecting the center of the double wedge to be the hyperbolic \emph{centerpoint}~\cite{Kisfaludi-Bak21}. In Figure~\ref{fig:separator-hypercycle},
the double wedge with apex $p$ is shaded gray and contains no vertices
and the regions above and below the wedge each contain a constant
fraction of the vertices.

Let $m$ be the angular bisector of the wedge.  As separator, we use
the set of all vertices that have distance at most $r$ from $m$.  The
set of points with distance exactly $r$ from $m$ forms two curves that
are called \emph{hypercycles} with axis $m$.  Thus, vertices belong to
the separator if they lie on or between the two hypercycles.  The
hypercycles are shown in Figure~\ref{fig:separator-hypercycle} and the
region where separator vertices can lie is shaded blue.  In the
Poincaré disk, hypercycles are arcs of Euclidean circles that meet the
boundary of the disk at the same ideal points as their axis but at a
non-right angle.  Observe that any vertex from above the top
hypercycle has distance greater than $2r$ to any vertex below the
bottom hypercycle.  Thus, the vertices in the blue region indeed form
a balanced separator.

It remains to show that the graph induced by vertices in the blue
region can be covered with few cliques.  For this, we cover the blue
region with boxes as shown in Figure~\ref{fig:separator-boxes}.  We
show that each of these boxes has diameter at most $2r$, implying that
the vertices within one box induce a clique.  Moreover, we show that
we only need $\Oh((1 + 1/r) \log n)$ such boxes.  For the latter, we
need a lower bound on the opening angle of the empty double wedge.
Observe that a larger opening angle, $\varphi$, in
Figure~\ref{fig:separator-triangle}, shrinks the blue region, which
reduces the number of boxes required to cover it.

In Section~\ref{sec:cover-separ-with-cliques}, we give a bound on the
number of cliques required to cover the separator, parameterized by
the opening angle of the empty double wedge.  In
Section~\ref{sec:balanced-separators} we show that there is a wedge
with not too small opening angle that yields a balanced separator.
The proof is constructive, i.e., given the graph with vertex
positions, we can efficiently find the wedge and thereby the balanced
separator of~\Cref{thm:separator}.

\begin{figure}[tbp]
  \centering
  \begin{minipage}[b]{0.48\textwidth}
    \centering
    \includegraphics[page=2]{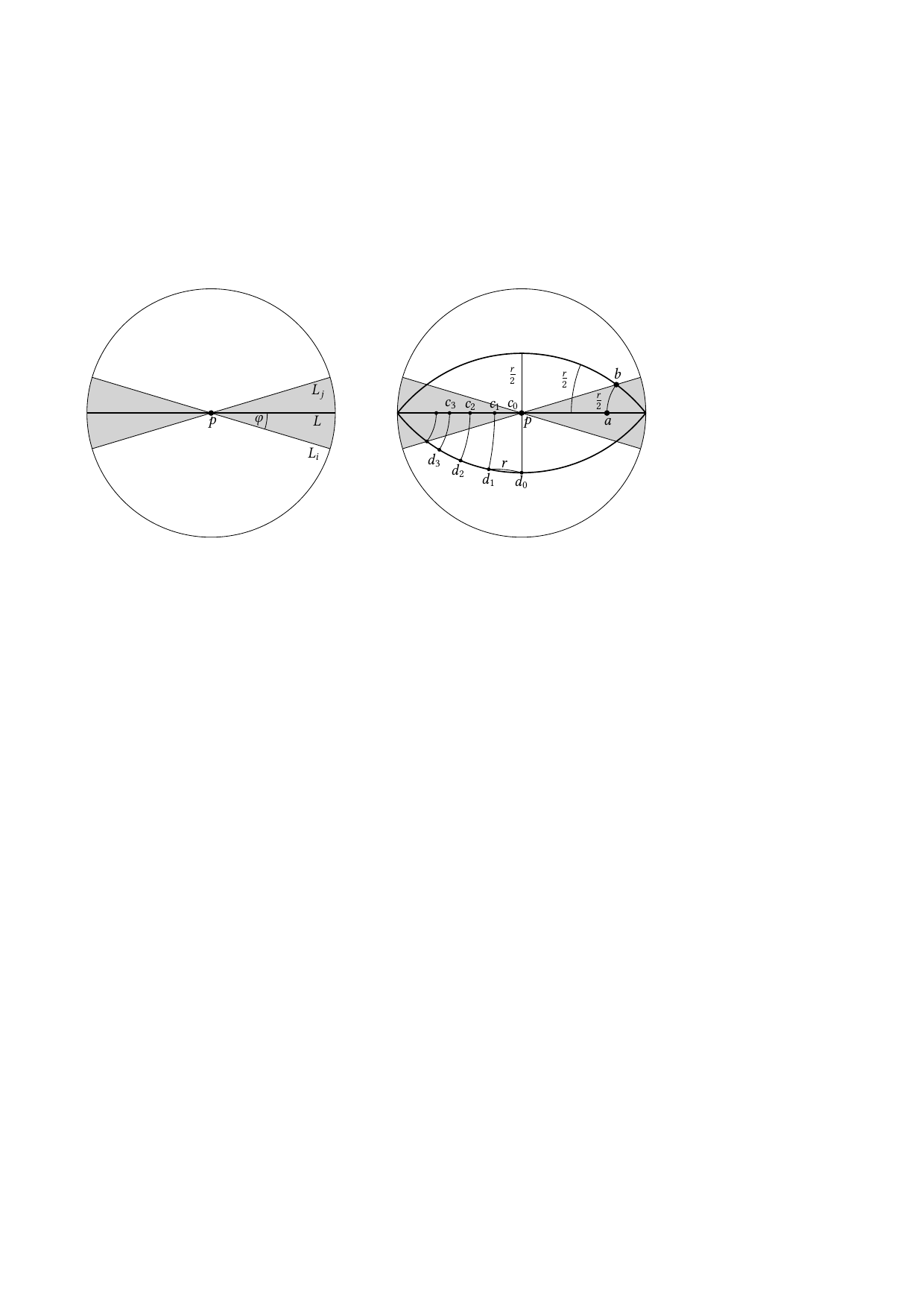}
    \subcaption{\label{fig:separator-hypercycle}}
  \end{minipage}
  \hfill
  \begin{minipage}[b]{0.48\textwidth}
    \begin{minipage}[b]{\textwidth}
      \centering
      \includegraphics[page=3]{figures/separator}
      \subcaption{\label{fig:separator-boxes}}
    \end{minipage}\\[\baselineskip]
    \begin{minipage}[b]{\textwidth}
      \centering
      \includegraphics[page=4]{figures/separator}
      \subcaption{\label{fig:separator-triangle}}
    \end{minipage}
  \end{minipage}
  \caption{Illustration of our separator using the Poincaré disk
    model.  (\subref{fig:separator-hypercycle}) The axis $m$ is chosen such that it separates the vertices in a balanced fashion.  The double wedge between  $\ell_1$ and $\ell_2$ contains no vertices.  The separator consists of the blue set of all vertices of distance at most $r$ to the axis $m$.  (\subref{fig:separator-boxes}) We cover the separator with boxes that have diameter $2r$ and thus form cliques.  (\subref{fig:separator-triangle}) If the opening angle $\varphi$ is sufficiently large, then $x$ is sufficiently small and we only need few boxes to cover the whole separator.}
  \label{fig:separator}
\end{figure}

\subsection{Outerplanarity of hyperbolic Delaunay complexes}

Let $S$ be a set of $n$ sites in $\Hyp^2$ with pairwise distance at
least $2r$.  Note that $S$ interpreted as a hyperbolic uniform disk
graph with radius $r$ forms an independent set.  To prove
Theorem~\ref{thm:treewidth-of-voronoi-diagram}, giving an upper bound
on the outerplanarity of the Delaunay complex $\mathcal D(S)$, we use
two different types of arguments.  The first argument is based on the
observation that sufficiently large radius implies that any inner
vertex of $\mathcal D(S)$ requires many other sites around it to
shield it from being an outer vertex.  This is similar to the
observation in the introduction (recall
Figure~\ref{fig:impact-of-radius}) that for high radius $r$, large
stars can be realized.  The argument then goes roughly as follows: If
all inner vertices have degree more than $6$, then we need a linear
fraction of vertices on the outer face to account for the fact that
the average degree in planar graphs is bounded by $6$.  This already
implies that iteratively deleting the vertices on the outer face takes
at most $\log n$ steps as each step deletes a constant fraction of the
vertices.  A larger radius $r$ leads to higher degrees of inner vertices, which in turn yields stronger bounds for the outerplanarity.

This type of argument can only work if the radius $r$ is sufficiently
large.  For smaller radii, the second type of argument considers
layers of bounded Voronoi cells (which correspond to inner vertices of
$\mathcal D(S)$) around one fixed center cell.  As the union of these
layers is bounded by a polygon with a linear number of vertices, its
area is linear.  In contrast to that, we will see that the area of the
layers grows exponentially with the layer (with the base depending on
$r$), showing that there cannot be too many layers.

In the following, we give a sketch of how the second argument based on
area works for small radii.  Although this is the argument of choice
for radii that are small constants or even decreasing with $n$, we
first give a simple area-based argument for why $r \geq \log n$
implies that the Delaunay complex has no inner vertices.  This gives a
good intuition how the area behaves in the hyperbolic plane and how
this can help us in proving outerplanarity, even though our argument
for small radii is more involved.  Let $s \in S$ be a site.  If $s$ is
an inner vertex of $\mathcal D(S)$, then its Voronoi cell is bounded
and its boundary is a polygon of at most $n - 1$ vertices.  Thus, the
area of this Voronoi cell is less than $(n - 3) \pi$.  Simultaneously,
the disk of radius $r$ around $s$ is included in the Voronoi cell of
$s$ as all sites have pairwise distance at least $2r$.  The area of
this disk is $4\pi\sinh^2(\frac{r}{2})$ which is larger than $e^r$ for
sufficiently large $r$.  From this it follows that bounded cells can
only exist when $r < \log n$.

To make an argument that works for smaller $r$, let $s \in S$ be an
arbitrary but fixed vertex of the Delaunay complex $\mathcal D(S)$.
We partition the vertices of $\mathcal D(S)$ into \emph{layers} by hop
distance from $s$ in $\mathcal D(S)$, i.e., $V_\ell$ is the set of
vertices with distance $\ell$ from $s$.  Let $L$ be the largest
integer such that for all $\ell \le L$ the layer $V_\ell$ contains
only inner vertices.  Note that our goal is to prove an upper bound on
$L$, as this bounds the distance of $s$ to the outer face.

As the Delaunay complex is an internally triangulated plane graph,
each layer $V_\ell$ induces a cycle.  We denote the vertices in
$V_\ell$ with $v_\ell^i$ and number the vertices modulo $|V_\ell|$,
such that vertex $v_\ell^i$ is adjacent to vertices $v_\ell^{i-1}$ and
$v_\ell^{i+1}$; see
Figure~\ref{fig:outerplanarity_small_radius_area_overview}.  Moreover,
we assume the Delaunay graph to be drawn as follows.  For every edge
$\{v_\ell^i, v_\ell^{i + 1}\}$ we choose a point on its dual edge,
i.e., the boundary between the two corresponding Voronoi cells, denote
it with $v_\ell^{i + 0.5}$, and connect $v_\ell^{i}$ and
$v_\ell^{i + 1}$ with two line segments via $v_\ell^{i + 0.5}$.  We
denote the resulting polygon with $P_\ell$ and call it a \emph{layer
  polygon} of $s$.  Note that this yields a sequence of nested
polygons where $P_{\ell - 1}$ lies in the interior of $P_\ell$.

\begin{figure}[tbp]
  \centering
  \begin{subfigure}[b]{0.32\textwidth}
    \centering
    \includegraphics[page=1]{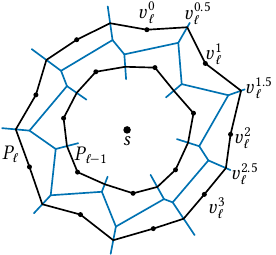}
    \subcaption{\label{fig:outerplanarity_small_radius_area_overview}}
  \end{subfigure}
  \hfill
  \begin{subfigure}[b]{0.32\textwidth}
    \centering
    \includegraphics[page=2]{figures/outerplanarity_small_radius_area}
    \subcaption{\label{fig:outerplanarity_small_radius_area_polygon}}
  \end{subfigure}
  \hfill
  \begin{subfigure}[b]{0.32\textwidth}
    \centering
    \includegraphics[page=3]{figures/outerplanarity_small_radius_area}
    \subcaption{\label{fig:outerplanarity_small_radius_area_layer}}
  \end{subfigure}
  \caption{Illustration of the small-radius case.  (a) shows two
    consecutive layer polygons $P_{\ell - 1}$ and $P_\ell$ in the
    Delaunay complex around $s$.  The Voronoi diagram is shown in
    blue.  We show an upper bound of the area inside $P_\ell$ by
    covering it with the red triangles (b).  We show a lower bound for
    the area between $P_{\ell - 1}$ to $P_\ell$ via the red triangles
    in (c).  Relating these two bounds yields an exponential area
    growth with a basis depending on $r$.}
  \label{fig:outerplanarity_small_radius_area}
\end{figure}

Our goal then is to show that the area inside $P_\ell$, denoted by
$\area(P_\ell)$, grows exponentially in $\ell$ with a basis $b(r)$
depending on $r$.  As $\area(P_L)$ for the outermost layer $L$ is
upper bounded by something linear in $n$, there cannot be too many
layers.  The interesting part is proving the exponential growth.  For
this, we show that the area gain $\area(P_\ell) - \area(P_{\ell - 1})$
in layer $\ell$ makes up at least some sufficiently large fraction of
the area $\area(P_\ell)$.  For this, we give an upper bound for
$\area(P_\ell)$ and relate it to a lower bound for
$\area(P_\ell) - \area(P_{\ell - 1})$.

How we derive these bounds is illustrated in
Figure~\ref{fig:outerplanarity_small_radius_area_polygon} and
Figure~\ref{fig:outerplanarity_small_radius_area_layer}, respectively.
For the upper bound on $\area(P_\ell)$, we cover $P_\ell$ with
triangles connecting every edge of $P_\ell$ with the vertex $s$; see
the two red triangles in
Figure~\ref{fig:outerplanarity_small_radius_area_polygon} for the two
edges $\{v_\ell^{0.5}, v_\ell^{1}\}$ and
$\{v_\ell^{1}, v_\ell^{1.5}\}$ of $P_\ell$.  For the lower bound, we
find a set of disjoint triangles that lie between $P_\ell$ and
$P_{\ell - 1}$.  For this, observe that for every vertex
$v_\ell^i \in V_\ell$ in layer $\ell$, the Voronoi cell of $v_\ell^i$
completely contains the disk of radius $r$ around $v_\ell$ as the
sites have pairwise distance at least $2r$.  Thus, the two triangles
illustrated in red for $v_\ell^1$ in
Figure~\ref{fig:outerplanarity_small_radius_area_layer} satisfy the
property of lying between $P_\ell$ and $P_{\ell - 1}$.  As the two
triangles can in principle intersect, we choose for each vertex in
layer $\ell$ the larger of the two triangles.  Note that this gives a
collection of $|V_\ell|$ disjoint triangles, as each chosen triangle
lies in a different Voronoi cell.  Thus, the total area of these
triangles gives a lower bound for
$\area(P_\ell) - \area(P_{\ell - 1})$.

It then remains to relate the upper bound for $\area(P_\ell)$, i.e.,
the area of the red triangles in
Figure~\ref{fig:outerplanarity_small_radius_area_polygon}, with the
lower bound for $\area(P_\ell) - \area(P_{\ell - 1})$, i.e., the area
of larger red triangle in
Figure~\ref{fig:outerplanarity_small_radius_area_layer}.  Intuitively,
this does not seem too far fetched for the following reasons.  First,
the triangles in
Figure~\ref{fig:outerplanarity_small_radius_area_polygon} share a side
with the triangles in
Figure~\ref{fig:outerplanarity_small_radius_area_layer}.  Secondly,
the other sides of the triangles in
Figure~\ref{fig:outerplanarity_small_radius_area_layer} have length at
least $r$.  Thus, the triangles in
Figure~\ref{fig:outerplanarity_small_radius_area_layer} cannot be too
much smaller than those in
Figure~\ref{fig:outerplanarity_small_radius_area_polygon}.

The details of this area-based argument for small radii can be found
in Section~\ref{sec:small-radius-case}.  Together with the
degree-based bound for large radii, see
Section~\ref{sec:large-radius-case} for details, we
obtain~\Cref{thm:treewidth-of-voronoi-diagram}.

\subsection{Exact algorithm}

Our exact algorithm stated in Theorem~\ref{thm:exact} works roughly as
follows; see Section~\ref{sec:exact_algo_appendix} for details.
Instead of computing the independent set directly, we define some
additional structure for a given independent set.  The algorithm then
finds this additional structure such that the corresponding
independent set is maximized.  The additional structure is illustrated
for two example independent sets in
Figure~\ref{fig:exact_algo_overview}.  In the following, we explain
this from top to bottom.

\begin{figure}[t]
	\centering
	\includegraphics{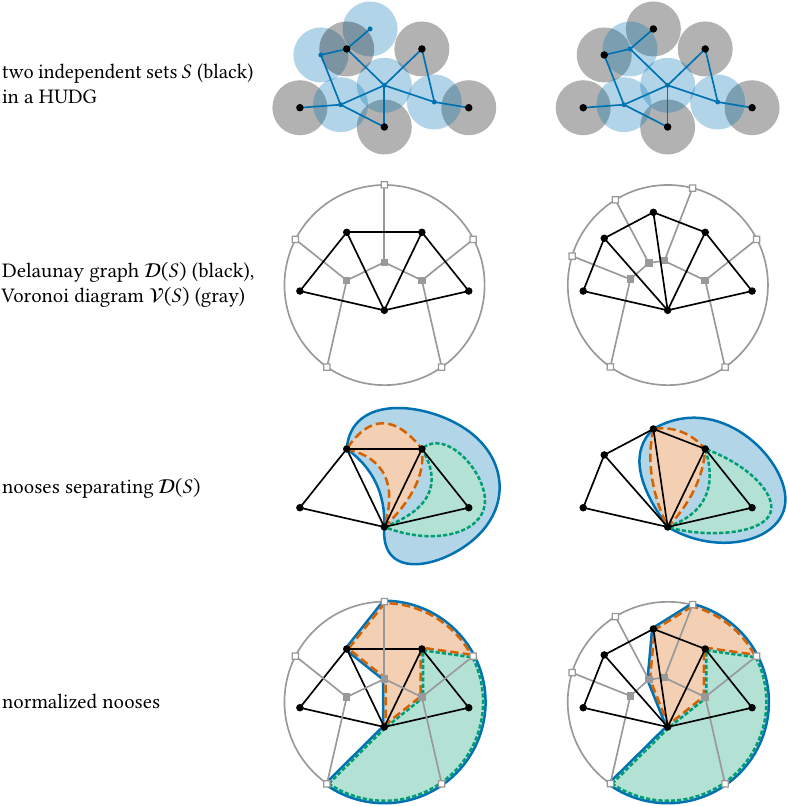}
	\caption{Illustration of our exact algorithm in
		Theorem~\ref{thm:exact}.}
	\label{fig:exact_algo_overview}
\end{figure}

Let $G$ be a hyperbolic uniform disk graph of radius $r$ together with
an independent set $S$ (top row in
Figure~\ref{fig:exact_algo_overview}).  As there are no edges between
vertices in $S$, they have pairwise distance at least $2r$.  Thus, the
Delaunay complex $\mathcal D(S)$ (second row in
Figure~\ref{fig:exact_algo_overview}) has low outerplanarity due to
Theorem~\ref{thm:treewidth-of-voronoi-diagram}.  Low outerplanarity
implies small treewidth, which implies small
branchwidth~\cite{Robertson_Seymour_1991}.  Moreover, as
$\mathcal D(S)$ is a planar graph, we can wish for a so-called
\emph{sphere cut decomposition} of low width, which is a branch
decomposition with some additional
properties~\cite{DBLP:journals/combinatorica/SeymourT94,Dorn_Penninkx_Bodlaender_Fomin_2010,MarxP22}.
In a nutshell, we get a hierarchy of \emph{nooses}, where each noose
is a closed curve that intersects the graph only at vertices and
functions as a vertex separator, splitting interior edges from
exterior ones.  The hierarchy is a binary tree in the sense that every
noose containing more than one edge in its interior has two child
nooses partitioning these edges further.  Row three in
Figure~\ref{fig:exact_algo_overview} shows three nooses of the
Delaunay complex $\mathcal D(S)$.  The solid blue noose is the parent
of the dashed red and green noose.

The width of a sphere cut decomposition is the maximum number of
vertices on a noose.  As $\mathcal D(S)$ has low outerplanarity, we
have a low-width sphere cut decomposition, i.e., each noose goes
through only few vertices.  As a last step, we normalize the nooses by
specifying their exact geometry, keeping their combinatorial structure
intact; see the last row of Figure~\ref{fig:exact_algo_overview}.  At
its core, the normalized nooses are polygons that alternate between
vertices in $S$ and Voronoi vertices, with some special treatment for
the outer face of $\mathcal D(S)$.

Observe that the two examples (left and right column) illustrate
different independent sets $S$, but share the green normalized noose.
This is not a coincidence, but actually happens a lot.  In fact,
despite the potentially exponential number of independent sets, we can
bound the total number of normalized nooses that we can obtain in this
way by $n^{\Oh(1 + \frac{\log n}{r})}$.  This follows from the fact
that each noose visits only $\Oh(1 + \frac{\log n}{r})$ vertices due
to Theorem~\ref{thm:treewidth-of-voronoi-diagram}.  Thus, the
normalized nooses are essentially polygons with
$\Oh(1 + \frac{\log n}{r})$ corners and the set of potential corners
is polynomial, as each corner is a vertex of $G$ or a Voronoi vertex
(which is defined by three vertices of $G$).

Our independent set algorithm works as follows.  We start by listing
all candidates for normalized nooses.  We then use a dynamic program
to combine nooses into a hierarchy of polygons.  We make sure that the
resulting hierarchy is valid in the sense that it corresponds to a
sphere cut decomposition.  Additionally, we enforce that it is
\emph{well-spaced}, which lets us show that all vertices of $G$
visited by the polygons in the hierarchy form an independent set.
Then, finding a maximum independent set is equivalent to maximizing
the number of vertices visited by polygons in this polygon hierarchy,
which can be done with the dynamic program.

\subsection{Approximation algorithm}
The approximation algorithm of Theorem~\ref{thm:approx-independent-set}
(fully described in \Cref{sec:approx_algo_appendix}) works
similarly to Lipton and Tarjan's version for planar graphs~\cite{LiptonT80}:
we repeatedly apply a separator until we get small patches and then solve each
patch individually using the exact algorithm of Theorem~\ref{thm:exact}.
Vertices in the separators get ignored, so we get an $(1-\eps)$-approximation
when these make up only an $\eps$ fraction of the maximum independent set.
We guarantee this using a lower bound of $\Omega(n / \ell)$ on the size of a
maximum independent set, which follows from a proof that hyperbolic uniform
disk graphs of ply $\ell$ always have a vertex of degree at most $4\ell - 1$,
making them $(4\ell - 1)$-degenerate.

\section{Grids, and stars in hyperbolic uniform disk graphs}

In this section we revisit a claim from the introduction and show that
hyperbolic uniform disk graphs with radius $1/n^{3}$ and radius
$\log n$ are incomparable.  The former class contains grids but no
large stars while the latter contains all stars but no large grids.

\begin{theorem}\label{thm:gridandstar}
Let $\Gamma_k$ denote the $k \times k$ grid graph on $n = k^2$ vertices, and let $S_n$ denote the star on $n$ vertices. Then we have the following:
\begin{description}
\item[(i)] $\Gamma_k\in \HUDG(1/n^3)$ for all $k$,
\item[(ii)] $S_n\not\in \HUDG(1/n^3)$ for $n \ge 8$,
\item[(iii)] $\Gamma_k \not\in \HUDG(\log n)$ for $k \ge 5$,
\item[(iv)] $S_n\in \HUDG(\log n)$ for all $n$.
\end{description}
\end{theorem}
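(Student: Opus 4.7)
The plan is to dispatch the four claims with distinct but related arguments: (i) and (iv) by explicit construction, (ii) by a hyperbolic kissing-number computation, and (iii) via a separator-size obstruction combined with a small case check.

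For claim (i), working in the Poincar\'e disk I would place $v_{ij}$ at the Euclidean point $(i\delta, j\delta)$ for a spacing $\delta$ of order $r = 1/n^3$. Since the whole grid sits in a Euclidean ball of radius $\Oh(k/n^3) = \Oh(n^{-5/2})$ around the origin, hyperbolic and Euclidean distances agree there up to a $(1+o(1))$ factor. Choosing $\delta$ slightly less than $r$ (in the Euclidean sense) then makes the hyperbolic distance satisfy $d_H \le 2r$ for grid-neighbors and $d_H > 2r$ for all other pairs (the tightest non-neighbor case being the diagonal pairs at Euclidean distance $\sqrt{2}\,\delta$). For claim (iv), I would place the center of the star at the origin and the $n-1$ leaves on the hyperbolic circle of radius $s$ about the origin, with $s$ just below $2r = 2\log n$, equally spaced at angular intervals $\theta = 2\pi/(n-1)$. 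The hyperbolic law of cosines $\cosh d = \cosh^2 s - \sinh^2 s \cos\theta$ gives the pairwise leaf distance $d > 2r$ iff $\sin(\theta/2) > \sinh r/\sinh s$, which as $s \uparrow 2r$ becomes $\sin(\theta/2) > 1/(2\cosh r)$. With $r = \log n$ the right-hand side is $\Theta(1/n)$, easily dominated by $\sin(\pi/(n-1)) \approx \pi/n$ for every $n$.

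Running the same law-of-cosines inequality in reverse yields claim (ii). If $S_n \in \HUDG(1/n^3)$, the $n-1$ leaf centers lie within hyperbolic distance $2r$ of the star center and are pairwise more than $2r$ apart, which forces each pair of leaves to subtend an angle at the center exceeding some threshold $\theta^\star(r)$ with $\theta^\star(r) = \pi/3 - \Oh(r^2)$ as $r \to 0$. For $r = 1/n^3$ the $\Oh(r^2)$ correction is negligible, so the maximum number of leaves that fit (equally spaced) is $\lfloor 2\pi/\theta^\star\rfloor = 6$; hence $n-1 \le 6$, ruling out $n \ge 8$.

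For claim (iii), I would combine \Cref{thm:separator} with the classical balanced-separator lower bound for the grid. If $\Gamma_k \in \HUDG(\log n)$ with $n = k^2$, then \Cref{thm:separator} yields a $\tfrac{2}{3}$-balanced separator coverable by $\Oh((1+1/\log n)\log n) = \Oh(\log k)$ cliques; since $\Gamma_k$ is triangle-free, each such clique has at most two vertices, so the separator has $\Oh(\log k)$ vertices. On the other hand, every $\tfrac{2}{3}$-balanced vertex separator of $\Gamma_k$ has size at least $k$, giving $k \le \Oh(\log k)$, which is false for all $k$ above some threshold $k_0$. The main obstacle is matching the paper's sharp threshold $k \ge 5$: the implicit constants in \Cref{thm:separator} likely only yield some $k_0 > 5$, so to cover the remaining values $k \in \{5,\dots,k_0\}$ I expect the proof to supplement this with a direct geometric check on a small induced subgraph of $\Gamma_5$ (for instance the inner $3\times 3$ block together with its cardinal extensions), ruling out its realization by radius-$\log 25$ disks via an angle-sum or hyperbolic-law-of-cosines computation on the local configuration around a central vertex.
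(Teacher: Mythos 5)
Parts (i), (ii) and (iv) of your proposal are essentially the paper's own proof. For (i) you use the Poincar\'e disk whereas the paper works in the half-plane and gives the explicit placement $p_{ij}=(2i/n^3,\,1+2j/n^3)$ with $\arcosh$-based distance estimates, but the underlying idea (explicit near-origin placement plus Euclidean/hyperbolic distance comparison) is the same. For (ii) the paper also reduces to the isosceles triangle $\sin(\alpha/2)\sinh 2r\le\sinh r$ and applies a $2\pi/7$ pigeonhole; your phrase ``maximum number of leaves that fit (equally spaced) is $6$'' should be stated as a pigeonhole argument rather than an extremality claim about equal spacing, and you should justify (as the paper does, via unimodality of the distance function) that the worst case is both leaves sitting at radius exactly $2r$, but these are presentational issues rather than gaps. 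For (iv) you and the paper both place leaves on a circle of radius $2r$ about the origin and verify $\sin(\pi/(n-1))\sinh 2r>\sinh r$.

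The real divergence is (iii). Your approach --- invoke \Cref{thm:separator}, observe that $\Gamma_k$ is triangle-free so every clique in the separator has at most two vertices, and contrast $\Oh(\log k)$ against the $\Omega(k)$ balanced-separator lower bound of the grid --- is logically sound (there is no circularity, since \Cref{thm:separator} is proved independently) and would give $\Gamma_k\notin\HUDG(\log n)$ for all $k$ above some unspecified constant $k_0$. But it cannot produce the sharp threshold $k\ge 5$, because the hidden constants in \Cref{thm:separator} are unknown; you acknowledge this but only gesture at ``a direct geometric check on a small induced subgraph of $\Gamma_5$'' without carrying it out, so the claim as stated for all $k\ge 5$ is not proved. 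The paper instead finds a subdivided-$K_4$ subgraph $H$ of $\Gamma_5$, argues that the straight-line drawing of $H$ induced by the disk centers is planar (no crossings since $H$ is triangle-free), so some degree-$3$ vertex $w$ of $H$ is enclosed by a cycle $C_w$ of at most $16$ vertices disjoint from $w$'s neighborhood, and then notes that the area of the polygon $C_w$ is at most $14\pi$, while the disk of radius $\log n\ge\log 25$ around $w$ must fit inside it and already has area exceeding $14\pi$ --- a contradiction. That single area comparison is exactly what your outline leaves unfilled, and it is what gets from ``large enough $k$'' to ``$k\ge 5$.''
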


\begin{proof}
(i) Consider a mapping from grid vertices to coordinates in the Poincar\'e half-plane model, where the vertex $v=(i,j)$ is mapped to the coordinate $p_v=(2i/n^3,1+2j/n^3)\in\mathbb{R}\times\mathbb{R}_{>0}$ for $(i,j)\in [k]\times[k]$. For a detailed introduction to the upper half-plane model see \cite[p.~128]{ratcliffe_foundations_2019}.
The distance between two points $p_u=(x_u,y_u)$ and $p_v=(x_v,y_v)$ in the upper half-plane can be expressed by $d_{\Hyp^2}(p_u,p_v)=\arcosh(1+\frac{\norm{p_u-p_v}^2}{2y_uy_v})$.

For two grid vertices $u=(i,j)$, $v=(i',j')$, we first consider the
terms $\norm{p_u-p_v}^2$ and $2y_uy_v$ appearing in the distance
function.  We start by giving a lower and upper bound for $2y_uy_v$.
Observe that $y_u, y_v \ge 1$ and thus $2y_uy_v \ge 2$.  Conversely,
as the $j,j' \le \sqrt{n}$, we have $y_u, y_v \le 1 + 2\sqrt{n}/n^3$.
As we can assume $n \ge 4$, we get $y_u, y_v \le 1 + \frac{1}{16}$ and
thus $2y_uy_v \le 2 (1 + \frac{1}{16})^2 < 3$.  Thus,
$2 \le 2y_uy_v < 3$.  For the term $\norm{p_u-p_v}^2$ we get
\begin{equation*}
  \norm{p_u-p_v}^2 = \frac{4(i-i')^2}{n^6} + \frac{4(j-j')^2}{n^6} = \frac{4}{n^6} \left((i - i')^2 + (j - j')^2\right).
\end{equation*}
In case $u$ and $v$ are adjacent, one of $(i-i')^2$ and $(j-j')^2$ is
$1$ and the other is $0$ and thus we get $\norm{p_u-p_v}^2 = 4/n^6$.
In case $u$ and $v$ are not adjacent, we get
$(i-i')^2 + (j-j')^2 \ge 2$ and thus $\norm{p_u-p_v}^2 \ge 8/n^6$.
Together with the bound $2 \le 2y_uy_v < 3$, we obtain
\begin{equation*}
  \frac{\norm{p_u-p_v}^2}{2y_uy_v} \le \frac{4}{2n^6} = \frac{2}{n^6}
  \text{ if } \{u, v\} \in E \quad\text{ and }\quad
  \frac{\norm{p_u-p_v}^2}{2y_uy_v} > \frac{8}{3n^6} \text{ if } \{u, v\} \notin E.
\end{equation*}

As $\arcosh$ is strictly increasing (and $2 < 8/3$), this already
shows that there exists a disk radius $r$ such that we get a uniform
disk representation of $\Gamma_k$.  It remains to show that
specifically $r = 1/n^3$ works.

For this, we use that the series expansion of $\arcosh(1 + x)$ gives
\begin{equation*}
  \sqrt{2x} - \frac{x^{3/2}}{6\sqrt{2}} \le \arcosh(1 + x) \le \sqrt{2x}.
\end{equation*}
If $u$ and $v$ are adjacent, this yields
$d_{\Hyp^2}(p_u,p_v) \le \arcosh(1 + 2 / n^6) \le 2/n^3$.  Thus, the
two disks with radius $1/n^3$ intersect.  If $u$ and $v$ are not
adjacent, we obtain
\begin{equation*}
  d_{\Hyp^2}(p_u,p_v)
  > \arcosh(1+\frac{8}{3n^6})
  \ge \sqrt{2\frac{8}{3n^6}} - \frac{\left(\frac{8}{3n^6}\right)^{3/2}}{6\sqrt{2}}
  > \frac{2.3}{n^3} - \frac{0.6}{n^9}.
\end{equation*}
For $n \ge 4$, this is clearly greater than $2/n^3$ and thus the two
disks of radius $1/n^3$ do not intersect.  Thus, the above defined
coordinates yield an intersection representation of $\Gamma_k$ with
disk of uniform radius $r = 1 / n^3$.

(ii) Suppose for contradiction that $S_n$ has a uniform disk representation with radius $r=1/n^3$ for $n \ge 8$.
Let $c$ be the center vertex of $S_n$.
As $S_n$ has at least $7$ leaves, there are at least two leaves $u$ and $v$ such that the angle between the line segments $cu$ and $cv$ is at most $2\pi /7$.
We will show, that the distance $x$ between $u$ and $v$ is at most $2r$; a contradiction.
As $c$ is connected to $u$ and $v$, both have distance at most $2r$ from $c$.
We first make an argument that the distance between $u$ and $v$ is maximized if either both have distance $2r$ from $c$ or if one has distance $2r$ and the other has the same position as $c$.
Assume we fix the position of $v$ and move $u$ along the line through $u$ and $c$.
Then the distance of $u$ to $v$ is unimodal with a unique minimum and moving $u$ in either direction from the minimum only increases the distance.
Thus, the distance between $u$ and $v$ is maximized by moving $u$ such that it has maximum distance from $c$ or is equal to $c$.

If $u$ has the same position as $c$, then $u$ and $v$ would be connected.
Thus, it remains to consider the case where $u$ and $v$ both have distance $2r$ from $c$ and thus form an isosceles triangle with side-length $2r$ and angle $\alpha \leq 2\pi / 7$; see Figure~\ref{fig:thm28_star_logn}.
This triangle consists of two right triangles for which we can apply the sine formula for hyperbolic triangles to get
\begin{align*}
  \sin\frac{α}{2} = \frac{\sinh\frac{x}{2}}{\sinh 2r}
  \quad \Leftrightarrow\quad
  \sinh\frac{x}{2} = \sin\frac{α}{2}\cdot\sinh 2r.
\end{align*}
Note that $u$ and $v$ are adjacent if and only if $x\leq 2r$, from which we can derive the following condition.
\begin{align*}
  x\leq 2r \quad\Leftrightarrow\quad \sinh\frac{x}{2} \leq \sinh r \quad\Leftrightarrow\quad \sin\frac{\alpha}{2}\cdot \sinh 2r \leq \sinh r.
\end{align*}
As $\alpha \leq 2\pi/7$, we get $\sin\frac{\alpha}{2} \le 0.44$.  Moreover, $\sinh r / \sinh 2r$ is a decreasing function for $r > 0$ and it holds that $\sinh r / \sinh 2r > 0.49$  for $r \le 1/8^3$.  Thus, the above inequality $\sin\frac{\alpha}{2}\cdot \sinh 2r \leq \sinh r$ holds and hence $x \le 2r$, which means that $u$ and $v$ are connected; a contradiction.

\begin{figure*}[tbp]
  \centering
  \begin{minipage}[b]{0.45\textwidth}
    \centering
    \includegraphics[]{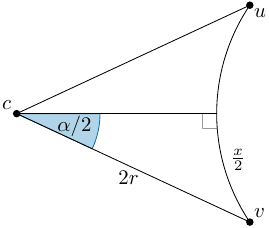}
    \subcaption{}
    \label{fig:thm28_star_logn:a}
  \end{minipage}
  ~
  \begin{minipage}[b]{0.45\textwidth}
    \centering
    \includegraphics[]{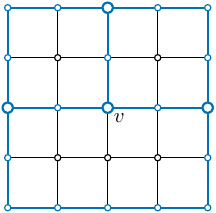}
    \subcaption{}
    \label{fig:thm28_star_logn:b}
  \end{minipage}
  \caption{Visualizations for the proof of
    Theorem~\ref{thm:gridandstar}: (a) drawing of center $c$ and
    leaves $v$, $w$ together with the relevant right triangle in an
    enlarged detail view of the Poincaré disk model, relevant in part
    (ii); (b) subgraph $H$ of the $5 \times 5$ grid with highlighted
    degree $3$ vertices, relevant in part (iii).}
  \label{fig:thm28_star_logn}
\end{figure*}

(iii) We define a subgraph $H$ as follows, see also
Figure~\ref{fig:thm28_star_logn:b}.  Consider the standard Euclidean
drawing of the $5\times 5$ grid and let $C$ denote the cycle of length
16 given by the set of edges on the outer face, and let $v$ be the
vertex in the center of the grid.  Consider $C$ and the $2$-hop paths
connecting $v$ to the middle vertices of the top, left, and right
sides of the square of $C$.  These paths together with $C$ form a
subdivision of a $4$-clique; let $H$ be this subgraph of $\Gamma_5$.

Consider now a realization of $\Gamma_k$ for $k \ge 5$ with hyperbolic
disks of radius $\log n$. We find $H$ as an induced subgraph of
$\Gamma_k$ (as $\Gamma_5$ is a subgraph of $\Gamma_k$) and the
corresponding disks in the realization of $\Gamma_k$ realize
$H$. Consider the drawing of $H$ obtained by connecting the centers of
adjacent disks in $H$ with straight line segments. This drawing is
planar for the following reason. Assume two edges $\{a, b\}$ and
$\{c, d\}$ intersect in a point $x$ and assume without loss of
generality that $a$ is the vertex among $\{a, b, c, d\}$ that is
closest to $x$. Then it follows from the triangular inequality that
$a$ must have an edge to $c$ and $d$, a contradiction to the fact that
$H$ has no triangles.

Thus, we obtain a plane straight-line drawing of a subdivided
$4$-clique and one of the vertices of $H$ (say, $w$) of degree $3$
cannot be on the outer face. Notice that the outer face is bounded by
a cycle $C_w$ that is disjoint from the neighborhood of $w$.  Thus,
$C_w$ forms a polygon and the disk of $w$ must be fully contained in
this polygon. Since $C_w$ has at most $16$ vertices, the area of its
polygon is at most that of $14$ hyperbolic triangles, i.e., at most
$14\pi$, which is less than the area of a disk of radius $\log n$ if
$n \ge 25$; a contradiction.

(iv) We may assume $n\geq 3$ and realize $S_{n}$ with radius $r=\log n$, by placing the center $c$ at the origin and arranging the leaves evenly on a circle with radius $2r$ around $c$.
Clearly, $c$ is adjacent to the leaves.
It remains to show that the leaves are pairwise non-adjacent, i.e., they have distance larger than $2r$.
Consider two neighboring leaves $u$ and $v$ on the circle and the line segments $cu$ and $cv$.
We obtain the isosceles triangle from Figure~\ref{fig:thm28_star_logn} as in the proof of (ii) with side-length $2r$ and angle $\alpha = \frac{2\pi}{n-1}$.
To show that the distance between $u$ and $v$ is greater than $2r$, we have to confirm that $\sin \frac{\alpha}{2}\cdot\sinh 2r> \sinh r$.
Using definition of $\sinh$, $r = \log n$, and $\sin x \ge 2x/\pi$ for $x \in [0, \pi/2]$, we obtain
\begin{align*}
  \sin \frac{\alpha}{2}\cdot \sinh 2r=
  \sin \frac{\pi}{n-1}\cdot \left(\frac{n^2}{2}-\frac{1}{2n^2}\right) > 
  n-\frac{1}{n^3} > \frac{n}{2}-\frac{1}{2n} = \sinh r,
\end{align*}
which concludes the proof.
\end{proof}

\section{Balanced separators in HUDGs}\label{sec:sep}

\subsection{Covering the separator with cliques}
\label{sec:cover-separ-with-cliques}

We start by defining the boxes between a hypercycle and its axis $m$
as illustrated in Figure~\ref{fig:separator-boxes}.  We place points
$b_0, \dots, b_k$ along $m$, such that $b_0 = p$ and they are evenly
spaced at distance~$\tanh(r)$.  Through each point $b_i$ we draw a
line perpendicular to $m$ and let $a_i$ denote its intersection with
the upper hypercycle, see Figure~\ref{fig:separator-boxes}.  We call
the region bounded by the line segments $a_{i - 1}b_{i - 1}$,
$b_{i - 1}b_i$, $b_ia_i$, and the part of the hypercycle between $a_i$
and $a_{i - 1}$ a \emph{box}.

\begin{lemma}
  The diameter of each box is at most $2r$.
\end{lemma}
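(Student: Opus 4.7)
My plan is to work in Fermi coordinates $(x,y)$ with the axis $m$ identified as $\{y=0\}$. In these coordinates the metric takes the form $ds^2 = \cosh^2(y)\,dx^2 + dy^2$, geodesics perpendicular to $m$ appear as vertical lines, and the hypercycle at distance $r$ from $m$ becomes the horizontal line $\{y = r\}$. Thus the $i$-th box corresponds exactly to the coordinate rectangle $R = [x_{i-1},x_i]\times[0,r]$ with $x_i - x_{i-1} = \tanh(r)$, and the hyperbolic distance between any two points in $R$ is given by
\[
\cosh d\bigl((x_1,y_1),(x_2,y_2)\bigr) = \cosh(y_1)\cosh(y_2)\cosh(x_2-x_1) - \sinh(y_1)\sinh(y_2).
\]

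First I would reduce the diameter computation to pairs of corners of $R$. The formula above is increasing in $|x_2-x_1|$ for fixed $y_1, y_2 \ge 0$, so I may assume $|x_2-x_1|=\tanh(r)$. For the other two variables, the partial derivative with respect to $y_1$ has at most one interior zero in $[0,r]$ (a local minimum), so the maximum over $y_1$ is attained at an endpoint of $[0,r]$, and the same argument applies to $y_2$. Examining the six resulting corner pairs, only $d(a_{i-1},a_i)$ and the symmetric diagonals $d(b_{i-1},a_i) = d(b_i,a_{i-1})$ can exceed the vertical side length $r$.

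Next I would bound each of these by $2r$. For the diagonal, applying the hyperbolic Pythagorean theorem to the right triangle $b_{i-1}b_ia_i$ with legs $\tanh(r)$ and $r$ yields $\cosh d(b_{i-1},a_i) = \cosh(\tanh r)\cosh(r)$; since $\tanh r \le r$ and $\cosh^2 r \ge 1$, this is bounded by $\cosh^2(r) \le 2\cosh^2(r) - 1 = \cosh(2r)$, as required. For the top edge, the displayed formula gives $\cosh d(a_{i-1},a_i) = \cosh^2(r)\cosh(\tanh r) - \sinh^2(r)$, and the desired bound $d(a_{i-1},a_i) \le 2r$ reduces to showing
\[
\cosh(\tanh r) \le 3 - \frac{2}{\cosh^2(r)}.
\]

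This last inequality is the main obstacle, and it is where the spacing of exactly $\tanh(r)$ along $m$ appears essential; a noticeably larger spacing would violate it. Both sides equal $1$ at $r=0$, so I would differentiate: a short computation shows that the derivative of the right-hand side minus the left-hand side equals $[4\tanh(r) - \sinh(\tanh r)]/\cosh^2(r)$, which is nonnegative since $\sinh x \le 2x$ on $[0,1]$ and $\tanh r < 1$. Hence the difference is nondecreasing and stays $\ge 0$ for all $r \ge 0$, completing the proof.
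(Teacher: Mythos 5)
Your proof is correct, and it takes a genuinely different route from the paper. You work in Fermi coordinates based on the axis $m$, in which the box becomes an honest coordinate rectangle $[x_{i-1},x_i]\times[0,r]$ and the distance has the closed form $\cosh d = \cosh y_1\cosh y_2\cosh\Delta x - \sinh y_1\sinh y_2$; the reduction to corners then follows from monotonicity in $|\Delta x|$ and convexity in each $y$-coordinate (indeed $\partial^2/\partial y_1^2$ of the right-hand side equals $\cosh d>0$, so it is genuinely convex, not merely having a unique interior minimum). The paper instead works in the Poincar\'e disk, reduces to corners via a geometric argument comparing the small Euclidean radius of a hyperbolic disk with the large Euclidean radius of the hypercycle arc, bounds the diagonals simply by the triangle inequality on $b_{i-1}b_i a_i$, and handles the top edge via the Saccheri-quadrilateral summit formula $\sinh(|a_0a_1|/2)=\cosh r\,\sinh(\tanh r/2)$ together with a case split on $r\le 1$ versus $r>1$. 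Your distance formula is equivalent to the Saccheri formula (expanding $\cosh s = 1+2\sinh^2(s/2)$ makes the two identical), so the remaining difference is that you replace the paper's case analysis by the monotonicity argument for $3 - 2/\cosh^2 r - \cosh(\tanh r)$. What your approach buys is a cleaner and more rigorous reduction to corners and a uniform treatment of all $r$; what the paper's buys is a shorter, more synthetic argument that leans on a standard hyperbolic trigonometric identity rather than an explicit coordinate computation.
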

\begin{proof}
  All boxes are congruent as $b_{i - 1}a_{i - 1}$ and $a_ib_i$ have length $r$, segment $b_{i-1}b_i$ has length $\tanh r$, and there are right angles at $b_{i-1}$ and $b_i$.
  Thus, without loss of generality, we consider only $i = 1$.

  We first argue that the points realizing the diameter must both be vertices.
  For this, fix a point $q$ of the box.
  Let $d$ be the maximum distance of $q$ to a vertex of the box, i.e., the closed disk $D$ of radius $d$ centered at $q$ contains each of the vertices $a_0, a_1, b_0, b_1$ (with one of them lying on its boundary).
  As the sides $a_0b_0$, $b_0b_1$, and $b_1a_1$ are straight line segments and the disk $D$ is convex, these three sides of the box are also contained in $D$ and thus have distance at most $d$ to $q$.
  The side between $a_0$ and $a_1$ is not a straight line but a piece of a hypercycle.
  To see that this side lies also inside $D$, we use that in the Poincaré disk, a hypercycle is a circular arc\footnote{Here we assume without loss of generality that the axis of the hypercycles goes through the center of the Poincaré disk.} as shown in Figure~\ref{fig:separator}.
  Its Euclidean radius in the Poincaré disk is bigger than $1$ (i.e., bigger than that of the Poincaré disk itself).
  Moreover, the disk $D$ is also a disk in the Poincaré disk, but with smaller radius.
  Thus, if $a_0$ and $a_1$ lie in $D$, then the hypercycle segment between $a_0$ and $a_1$ lies in $D$ and thus has distance at most $d$ form $q$.
  It follows that the point of the box with maximum distance to $q$ is one of the vertices, which proves the claim that the diameter is realized by two vertices.

  It remains to bound the distances between vertices.
  The length of both sides $a_0b_0$ and $a_1b_1$ is $r$ and the length of the side $b_0b_1$ is $\tanh r < r$.
  By the triangle inequality, the diagonals have length less than $2r$.
  It thus remains to bound the length of $a_0a_1$.

  To calculate the distance from $a_0$ to $a_1$, we can use that the
  polygon $a_0, b_0, b_1, a_1$ is a \emph{Saccheri quadrilateral} with
  \emph{base} $b_0b_1$ of length $\tanh r$, \emph{legs} $b_0a_0$ and
  $b_1a_1$ of length $r$, and \emph{summit} $a_0a_1$.  The length of
  the summit is given by the following
  formula~\cite[Theorem~10.8]{Euclid_Non_Euclid_Geomet-Green93}:
  \begin{align*}
    \sinh \frac{|a_0a_1|}{2}
    &= \cosh r \sinh \frac{\tanh r}{2} \\
    &< \cosh r \sinh \frac{\min\{1,r\}}{2},
      \intertext{where we used the simple bound $\tanh r<\min\{r,1\}$ for $r>0$ and the fact that $\sinh$ is strictly increasing.
      If $r \leq 1$, then we get}
      \sinh \frac{|a_0a_1|}{2}&< \cosh r \sinh(r/2)\\
    &<2\cosh(r/2)\sinh(r/2)\\
    &=\sinh r,
      \intertext{
      as $\cosh r <2 < 2\cosh(r/2)$ when $0< r \leq 1$. If $r>1$, then
      }
      \sinh \frac{|a_0a_1|}{2} &< \cosh r \sinh(1/2)<\sinh r,
  \end{align*}
  since $\sinh r/\cosh r=\tanh r>\tanh 1>\sinh(1/2)$ as $\tanh r$ is
  monotone increasing. Consequently, $|a_0a_1| < 2r$.  Thus, all
  distances between vertices of a box are at most $2r$, which
  concludes the proof.
\end{proof}

From this lemma, it follows that the vertices inside any box induce a
clique.  Next, we aim to give an upper bound on the number of boxes we
need to cover the separator.  For this, let $k$ be the smallest number
such that $a_k$ lies in the wedge formed by $\ell_1$ and $\ell_2$.  In
Figure~\ref{fig:separator-boxes}, $k = 3$ as $a_3$ lies in the gray
wedge. Clearly, we can cover the whole separator (blue in
Figure~\ref{fig:separator}), with $\Oh(k)$ boxes on both sides of the
line and thus with $\Oh(k)$ cliques.  The following lemma gives a
bound on $k$ depending on the opening angle of the wedge.

\begin{lemma}
  Let $k$ be the smallest number such that $a_k$ lies inside the wedge with opening angle $2\varphi$.
  Then the distance between $b_0$ and $b_k$ is in $\Oh\big(\log\frac{1}{\varphi}\big)$.
\end{lemma}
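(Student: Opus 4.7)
The plan is to reduce the claim to a hyperbolic right-triangle identity applied to the triangle $p b_k a_k$, and then perform an asymptotic estimate. Because the $b_i$ are placed on $m$ at spacing $\tanh r$ with $b_0 = p$, we have $|b_0 b_k| = k \tanh r$, so the task is to upper bound this single quantity.

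First I would observe that, since $m$ is the angular bisector of the double wedge, the line $\ell_1$ makes angle $\varphi$ with $m$ at the apex $p$. Moreover, $a_k$ lies perpendicularly above $m$ at distance $r$ with $b_k$ as its foot on $m$, so $a_k$ lies inside the wedge if and only if the angle $\alpha_k$ at $p$ in the right triangle $p b_k a_k$ (right angle at $b_k$) satisfies $\alpha_k < \varphi$. Since $\alpha_k$ is monotone decreasing in $k$ (as the fixed-height perpendicular leg becomes relatively shorter), such a minimal $k$ is well defined.

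Next I would apply the hyperbolic right-triangle identity
\[
  \tan \alpha_k \;=\; \frac{\tanh r}{\sinh |p b_k|} \;=\; \frac{\tanh r}{\sinh(k\tanh r)}.
\]
Thus $\alpha_k < \varphi$ is equivalent to $\sinh(k\tanh r) > \tanh(r)/\tan\varphi$. Minimality of $k$ gives $\sinh((k-1)\tanh r) \le \tanh r / \tan \varphi$, so
\[
  k \tanh r \;\le\; \sinh^{-1}\!\bigl(\tanh r / \tan\varphi\bigr) + \tanh r.
\]

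The final step is an asymptotic estimate, which is the one place where mild care is needed. Using $\sinh^{-1}(y) = \log(y + \sqrt{y^2 + 1})$, the bound $\tanh r \le 1$, and $\tan \varphi \ge \varphi$ on $(0,\pi/2)$, I would obtain $\sinh^{-1}(\tanh r/\tan \varphi) \le \log\!\bigl(1/\varphi + \sqrt{1/\varphi^2 + 1}\bigr) = \Oh(\log(1/\varphi))$. Combined with $\tanh r \le 1$, this yields $|b_0 b_k| = k\tanh r = \Oh(\log(1/\varphi))$. The main subtlety worth checking is that the bound is \emph{uniform} in $r$: the only place $r$ enters the right-hand side of the last inequality is through $\tanh r$, which is bounded by $1$, so larger $r$ never weakens the logarithmic bound in $\varphi$.
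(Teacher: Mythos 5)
Your proof is correct and follows essentially the same route as the paper's: you both apply the hyperbolic right-triangle identity $\tan(\text{angle at } p) = \tanh r / \sinh(\text{leg along } m)$ and then estimate via $\tanh r \le 1$ and $\tan\varphi \ge \varphi$. The only difference is presentational: the paper introduces the threshold triangle $p\,c\,d$ (where $d = \ell_1 \cap \text{hypercycle}$ and $c$ is its foot on $m$), writes $\tan\varphi = \tanh r / \sinh x$, and then remarks that $|b_0 b_k| = x$ up to an additive constant; you instead apply the identity directly at $b_k$ and use the minimality of $k$ to get $k\tanh r \le \sinh^{-1}(\tanh r/\tan\varphi) + \tanh r$, which replaces that additive-constant handwave with an explicit inequality. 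A minor point shared by both arguments: $\Oh(\log\frac{1}{\varphi})$ should be read as $\Oh(1 + \log\frac{1}{\varphi})$, since for $\varphi$ close to $\pi/2$ the logarithm is negative while the distance is a bounded constant; this is harmless in the subsequent application where $1/\varphi \in \Omega(1)$.
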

\begin{proof}
  Consider the triangle in Figure~\ref{fig:separator-triangle}, where $p = b_0$ is the apex of the wedge, $d$ is the intersection of the wedge boundary $\ell_1$ with the hypercycle, and $c$ is the point on the axis with distance $r$ from $d$.
  Recall that the distance between $b_{i - 1}$ and $b_i$ is $\tanh r < 1$.
  Thus, the distance between $b_0$ and $b_k$ equals the distance $x$ between $p$ and $c$ up to an additive constant.
  It remains to give an upper bound on $x$.

  Using hyperbolic trigonometry of right triangles, we get
  \begin{equation*}
    \tan \varphi = \frac{\tanh r}{\sinh x} \quad\Rightarrow\quad \sinh x = \frac{\tanh r}{\tan \varphi}.
  \end{equation*}
  Using that $\tanh r < 1$ and $\tan \varphi > \varphi$, we obtain $\sinh x < \frac{1}{\varphi}$.
  This yields the claim.
\end{proof}

As the distance between $b_{i - 1}$ and $b_i$ is $\tanh r$, we can cover the separator with $\Oh(\log \frac{1}{\varphi} / \tanh r) = \Oh(\log\frac{1}{\varphi}\cdot (1 + 1/r))$ many boxes, each of which is covered by one clique.
This directly yields the following corollary.

\begin{corollary}
  \label{cor:line-cliques}
  Let $G$ be a hyperbolic uniform disk graph with radius $r$.
  Let $m$ be a line through a point $p \in \Hyp^2$ such that all lines through $p$ and a vertex of $G$ have an angle of at least $\varphi$ with $m$.
  Then the subgraph of $G$ induced by vertices of distance at most $r$ from $m$ can be covered with $\Oh\big(\log\frac{1}{\varphi} \cdot (1 + \frac{1}{r})\big)$ cliques.
\end{corollary}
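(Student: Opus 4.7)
My plan is to combine the two preceding lemmas with the empty-wedge hypothesis, applied symmetrically on both sides of $p$ and both sides of $m$. First I would observe that the angular condition on $p$ means there is a double wedge with apex $p$, axis $m$, and opening angle $2\varphi$ whose interior contains no vertex of $G$. The complement of this wedge, restricted to the strip of width $2r$ around $m$, is exactly the region that can possibly contain vertices of the separator.

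Second, I would cover that region with boxes of the kind built in \Cref{sec:cover-separ-with-cliques}. On one side of $m$, marching away from $p$ in one of the two directions along $m$, I place boxes $a_0b_0b_1a_1,\, a_1b_1b_2a_2,\,\dots$ up to index $k$, where $k$ is the smallest integer such that $a_k$ lies inside the wedge. By the previous lemma, the distance from $b_0=p$ to $b_k$ is $\Oh(\log\frac{1}{\varphi})$, and since consecutive points $b_{i-1},b_i$ are spaced at distance $\tanh r$, this requires
\[
  k \;=\; \Oh\!\left(\frac{\log(1/\varphi)}{\tanh r}\right) \;=\; \Oh\!\left(\log\tfrac{1}{\varphi}\cdot\bigl(1+\tfrac{1}{r}\bigr)\right)
\]
boxes, where I use $1/\tanh r = \Theta(1+1/r)$ for $r>0$. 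Repeating this construction in the other direction from $p$ along $m$, and mirroring across $m$, gives four such box-chains, still of the same asymptotic total. The first lemma guarantees that each box has diameter at most $2r$, so the vertices of $G$ it contains form a clique of $G$.

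Third, I would verify that these boxes cover \emph{all} separator vertices. Any vertex $v$ of $G$ at hyperbolic distance at most $r$ from $m$ lies in the strip bounded by the two hypercycles. Since the line through $p$ and $v$ makes angle at least $\varphi$ with $m$, the point $v$ lies outside the empty wedge. By choice of $k$, the portion of the strip lying outside the wedge on each of the four sides is completely contained in the union of the corresponding box-chain (the box-chain is extended exactly up to where the hypercycle enters the wedge). Hence $v$ lies in some box and is therefore covered by one of the $\Oh(\log\frac{1}{\varphi}\cdot(1+\frac{1}{r}))$ cliques.

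The only real subtlety, and the one I would be careful about, is the boundary between the box-chain and the empty wedge: one must check that the hypercycle arc bounding the last box meets, rather than overshoots, the wedge boundary, so that no sliver of the strip outside the wedge is left uncovered. This follows directly from the minimality of $k$ in the definition above, together with the convexity argument from the diameter lemma showing that the hypercycle arc between $a_{i-1}$ and $a_i$ stays inside the bounding Saccheri quadrilateral. With that in place, the counting above yields the stated bound and the corollary follows immediately.
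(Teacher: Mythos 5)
Your proposal is correct and follows essentially the same approach as the paper's own argument: it invokes the two preceding lemmas (box diameter $\leq 2r$, and $|b_0b_k|=\Oh(\log\frac{1}{\varphi})$), counts boxes via the spacing $\tanh r$ together with $1/\tanh r = \Theta(1+1/r)$, and mirrors the box-chain across $m$ and in both directions from $p$. The paper states the conclusion more tersely, going directly from the two lemmas to the bound $\Oh(\log\frac{1}{\varphi}/\tanh r)=\Oh(\log\frac{1}{\varphi}\cdot(1+\frac{1}{r}))$ without the explicit coverage verification you spell out; your extra worry about a ``sliver'' at the last box is resolved just by the minimality of $k$ (each box is by definition the full slab of the strip between consecutive perpendiculars, so no appeal to convexity of the hypercycle arc is actually required there).
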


\subsection{Balanced separators}
\label{sec:balanced-separators}

Corollary~\ref{cor:line-cliques} already yields a separator that can
be covered with few cliques, if the angle $\varphi$ is not too small.
It remains to provide the point $p$ together with the line $m$ through
$p$ such that the separator is balanced and $\varphi$ is not too
small.  For this, let $V$ be a set of $n$ points (the vertices of
$G$).  We call a point $p \in \Hyp^2$ a \emph{centerpoint} of $V$ if
every line through $p$ divides $V$ into two subsets that both have
size at least $\frac{n}{3}$.  The following lemma provides such a
centerpoint (which follows directly from the existence of a Euclidean
centerpoint~\cite{centerpoint}). It was also observed
in~\cite{Kisfaludi-Bak21}, but we provide a proof for completeness.

\begin{lemma}[See also~{\cite[proof of Lemma 4]{Kisfaludi-Bak21}}]
  A centerpoint of a set of $n$ points in the hyperbolic plane exists
  and can be found in $\Oh(n)$ time.
\end{lemma}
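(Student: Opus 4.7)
The plan is to transfer the problem into the Euclidean plane using the Beltrami--Klein model, apply the classical Euclidean centerpoint theorem, and then transfer the resulting point back to the hyperbolic plane.

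Concretely, I would represent $\Hyp^2$ in the Beltrami--Klein model inside the open Euclidean unit disk $D \subset \Reals^2$. The key property of this model, already recalled in the preliminaries, is that hyperbolic straight lines correspond exactly to (maximal) Euclidean chords of $D$. In particular, every hyperbolic line through a point $p\in D$ is the restriction to $D$ of a unique Euclidean line through $p$, and conversely every Euclidean line through $p$ that meets $D$ restricts to a hyperbolic line. Since all $n$ points of $V$ lie in $D$, a Euclidean line $\ell$ through $p$ partitions $V$ into the same two subsets as the hyperbolic line it corresponds to.

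Now I would invoke the classical (Euclidean) centerpoint theorem: any finite set of $n$ points in $\Reals^2$ admits a point $p \in \Reals^2$ such that every Euclidean line through $p$ has at least $n/3$ points of the set on each of its closed sides. A standard argument shows that such a centerpoint may be chosen inside the convex hull of $V$, and hence inside $D$. By the correspondence above, $p$ is then a hyperbolic centerpoint: any hyperbolic line through $p$ corresponds to some Euclidean line through $p$, and that line leaves at least $n/3$ vertices of $V$ on each side.

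For the running time, I would cite the linear-time Euclidean centerpoint algorithm of Jadhav and Mukhopadhyay, which computes such a $p$ in $\Oh(n)$ time once the point coordinates are known in $\Reals^2$; the Beltrami--Klein coordinates of the input are obtained from the given Euclidean model coordinates by a constant-time coordinate transformation per point. The main (minor) subtlety to verify is simply that the Euclidean centerpoint can indeed be taken inside the convex hull of the input (and hence inside $D$); this is standard and follows from the fact that the centerpoint property is preserved by orthogonally projecting any point in $\Reals^2$ onto the convex hull. No other technical obstacle arises, because the Beltrami--Klein model collapses the geometry of hyperbolic lines through a point exactly into the Euclidean geometry of lines through that point.
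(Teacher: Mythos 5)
Your proof is correct and takes essentially the same approach as the paper: both convert to the Beltrami--Klein model, invoke the Euclidean centerpoint theorem with a linear-time algorithm, and observe that since hyperbolic lines are exactly Euclidean chords in this model, a Euclidean centerpoint is automatically a hyperbolic one. Your additional remark that the Euclidean centerpoint can be taken inside the convex hull (hence inside the unit disk) is a useful detail the paper leaves implicit.
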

\begin{proof}
  We do this by first converting the points to the Beltrami-Klein
  model of the hyperbolic plane. Here, any hyperbolic line is also a
  Euclidean line (see also~\cref{fig:prelim}). This means that a
  Euclidean centerpoint is also a hyperbolic centerpoint. Thus, we can
  simply find the Euclidean centerpoint (which takes $\Oh(n)$
  time~\cite{centerpoint}) and then convert it back to our original
  model of the hyperbolic plane.
\end{proof}

Now, consider the lines $\ell_1, \dots, \ell_n$, where $\ell_i$ goes
through $p$ and the vertex $v_i \in V$.  Let without loss of
generality $\ell_1$ and $\ell_2$ be the two consecutive lines with
maximum angle between them.  As there are $2n$ angles between
consecutive lines covering the full $2\pi$ angle, the angle between
$\ell_1$ and $\ell_2$ is at least $\pi / n$.
Set $m$ to be the angular bisector of $\ell_1$ and $\ell_2$.  It
follows that $\frac{1}{\varphi} \in \Oh(n)$ and thus
Corollary~\ref{cor:line-cliques} yields a separator that can be
covered by $\Oh(\log n \cdot (1 + \frac{1}{r}))$ cliques.  Moreover,
as $m$ goes through the centerpoint $p$, this separator is balanced.
Clearly, the line $m$ and thus the separator can be computed in
$\Oh(n \log n)$ time.  This concludes the proof of
Theorem~\ref{thm:separator}.

\thmSeparator*

\section{Outerplanarity of hyperbolic Delaunay complexes}\label{sec:outerplanar}

Before we start with the proof, we make an observation.  If the
sites $S$ are in general position, i.e., no four sites lie on the same
circle, then the Voronoi diagram is a $3$-regular graph and the
Delaunay complex is internally triangulated, i.e., each inner face is
a triangle.  Otherwise, we can slightly perturb the sites $S$ to
get a set $S'$ in general position, such that $\mathcal D(S')$ is obtained from
$\mathcal D(S)$ by triangulating each inner face.  Moreover
$\mathcal V(S')$ is obtained from $\mathcal V(S)$ by splitting each
vertex of degree more than $3$ into a binary tree.  Triangulating
$\mathcal D(S)$ only adds edges to it and thereby only increases its
outerplanarity.  Thus, any upper bound on the outerplanarity of
$\mathcal D(S')$ also holds for $\mathcal D(S)$.  For this section, we
assume without loss of generality that $S$ is in general position,
i.e., $\mathcal D(S)$ is internally triangulated.

\subsection{Large-radius case}
\label{sec:large-radius-case}

We start with the argument for the case where the radius is sufficiently large.
The core observation that any inner vertex of $\mathcal D(S)$ requires many other sites around it is summarized in the following lemma.

\begin{lemma}
  \label{lem:number-of-voronoi-neighbors}
  Let $S$ be a set of sites in $\Hyp^2$ with pairwise distance at least $2r$.
  Any inner vertex of the Delaunay complex $\mathcal D(S)$ has degree at least $e^r$.
\end{lemma}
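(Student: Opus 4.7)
The plan is to bound the angular contribution that each Delaunay neighbor of $s$ can make to the cyclic order around $s$, using the \emph{angle of parallelism}. Since $s$ is an inner vertex of $\cD(S)$, its Voronoi cell $V_s$ is a bounded convex polygon with $s$ in its interior, so the $\deg(s)$ Voronoi edges on the boundary of $V_s$ subtend angles at $s$ that together form a full turn and hence sum to $2\pi$. The strategy is to upper-bound the angle subtended by each single Voronoi edge by $2\Pi(r)$, and then to convert the resulting inequality into the claimed bound $e^r$.

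First, for a Delaunay neighbor $t$ of $s$, I observe that the Voronoi edge between the cells of $s$ and $t$ lies on the perpendicular bisector of $st$, which is a hyperbolic line $\ell_t$ at perpendicular distance $|st|/2 \geq r$ from $s$ (using the pairwise distance assumption). The two limiting parallels from $s$ to $\ell_t$ each make angle $\Pi(|st|/2)$ with the perpendicular foot, so every ray from $s$ to a point of $\ell_t$ lies strictly within the angular cone of total opening $2\Pi(|st|/2)$ around that perpendicular. Since $\Pi$ is monotonically decreasing, any subset of $\ell_t$---and in particular the actual Voronoi edge, which is bounded by two Voronoi vertices---subtends at most $2\Pi(r)$ at $s$.

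Next, summing over all Voronoi edges bounding $V_s$ gives $2\pi \leq \deg(s)\cdot 2\Pi(r)$, whence $\deg(s) \geq \pi/\Pi(r)$. To turn this into the bound $\deg(s)\geq e^r$, I would combine the identity $\sin\Pi(r) = 1/\cosh(r)$ recalled in the preliminaries with the elementary inequality $\sin x \geq 2x/\pi$ on $[0,\pi/2]$, yielding $\Pi(r) \leq \pi/(2\cosh r)$. Substituting back gives $\deg(s) \geq 2\cosh r \geq e^r$, as required.

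The only delicate step is the angular extent claim: one must verify that the angular region of $s$ occupied by the Voronoi edge lies inside the cone bounded by the two limiting parallels from $s$ to $\ell_t$. This is essentially the defining property of the angle of parallelism, but I would state it carefully to ensure the bound transfers from the full line $\ell_t$ to the (possibly much shorter) Voronoi edge. The rest of the argument reduces to routine hyperbolic identities and a single calculus inequality.
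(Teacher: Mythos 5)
Your proof is correct, and while it relies on the same core tool as the paper's proof (the angle of parallelism applied to a perpendicular bisector lying at distance at least $r$ from $s$, combined with $\sin\Pi(x)=1/\cosh x$ and $\sin x\ge 2x/\pi$), it decomposes the full angle $2\pi$ at $s$ differently. The paper works Delaunay-side: for two \emph{consecutive} neighbors $a,b$ of $s$, it takes the angular bisector $\ell$ of $\angle asb$, argues that since the Voronoi vertex of the triangle $sab$ exists one of the bisectors $p_a,p_b$ must meet $\ell$, and then (after a WLOG case split) bounds $\angle asb=2\alpha<2\Pi(d_a)\le 2\pi e^{-r}$. You instead work Voronoi-side: you bound the angle that each Voronoi edge of the cell $V_s$ subtends at $s$ by $2\Pi(r)$, using that the edge is a subset of a line at distance $\ge r$. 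Both decompositions have exactly $\deg(s)$ pieces summing to $2\pi$, so both close the argument the same way. Your route is arguably slightly cleaner — it avoids the case distinction because the Voronoi edge genuinely lives on the bisector $\ell_t$, whereas the paper must argue about which bisector meets $\ell$ — and it yields the marginally sharper intermediate bound $\deg(s)\ge \pi/\Pi(r)\ge 2\cosh r$, from which $e^r$ follows. The ``delicate step'' you flag (that the subtended directions of any subset of $\ell_t$ lie strictly inside the cone of opening $2\Pi(|st|/2)$ bounded by the two limiting parallels) is indeed just the definition of the angle of parallelism and poses no difficulty.
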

\begin{proof}
  Let $s\in S$ be a site that is an inner vertex of $\mathcal D(S)$, i.e., its Voronoi region is bounded.
  We show that the angle between consecutive neighbors of $s$ must not to be too large as the Voronoi region of $s$ would be unbounded otherwise.
  Consider two consecutive neighbors $a, b\in S$ of $s$ as shown in Figure~\ref{fig:angle-of-parallelism}.
  Let $p_a$ and $p_b$ be the perpendicular bisectors of $sa$ and $sb$, respectively, and let $\ell$ be the angular bisector of $sa$ and $sb$.
  Since $s$ is an inner vertex, $p_a$ and $p_b$ have to intersect and thus $p_a$ or $p_b$ has to intersect $\ell$.

  \begin{figure}[tbp]
    \centering
    \includegraphics{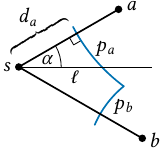}
    \caption{ Illustration of
      Lemma~\ref{lem:number-of-voronoi-neighbors}.  An inner vertex of
      the Delaunay complex $s$ with two consecutive neighbors $a$ and
      $b$.  From the fact that the Voronoi cell of $s$ is bounded, we
      can derive an upper bound for the angle $2\alpha$ between $sa$
      and $sb$.}
    \label{fig:angle-of-parallelism}
  \end{figure}

  Without loss of generality assume that $p_a$ intersects $\ell$, as in Figure~\ref{fig:angle-of-parallelism}.
  Because all sites in $S$ have pairwise distance at least $2r$, the distance $d_a$ between $s$ and the intersection of $sa$ with its perpendicular bisector $p_a$ is at least $r$.
  The angle $\alpha$ between $sa$ and $\ell$ has to be smaller than the angle of parallelism $\Pi(d_a)$, as $\ell$ and $p_a$ intersect.  Thus, we get
  \begin{align*}
    \alpha < \Pi(d_a)\leq \frac{\pi}{2}\sin \Pi(d_a)=\frac{\pi}{2}\frac{1}{\cosh d_a}\leq \pi e^{-d_a}\leq \pi e^{-r}.
  \end{align*}
  Therefore, the angle between $sa$ and $sb$ is at most $2\alpha \leq 2\pi e^{-r}$.
  It follows that $s$ has at least $\frac{2\pi}{2\pi e^{-r}}=e^r$ neighbors.
\end{proof}

Observe that if $r > \log 6$, then every inner vertex of $\mathcal D(S)$ has degree more than $6$.
As the average degree in planar graphs is at most $6$, a constant fraction of vertices have to be outer vertices to make up for the above-average degree of inner vertices.
This observation yields the following lemma.

\begin{lemma}
  \label{lem:highdegree-outerplanar}
  Let $G$ be a plane graph in which every inner vertex has degree at least $d > 6$.
  Then $G$ is $k$-outerplanar for $k < 1 + \log_{d/6} n$.
\end{lemma}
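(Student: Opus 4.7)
My plan is to proceed by iterated peeling of outer vertices and to track the vertex count geometrically. Let $L_1$ denote the set of outer vertices of $G$ and let $G_1 := G - L_1$. The heart of the proof is a propagation claim: every inner vertex $v$ of $G_1$ still satisfies $\deg_{G_1}(v) \ge d$, so the hypothesis of the lemma carries over to $G_1$ and the argument can be iterated on $G_1, G_2, \ldots$

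I would prove the propagation claim contrapositively. Suppose a vertex $v \in V(G_1)$ has a neighbor $u \in L_1$. Removing $u$ from the plane graph $G$ merges all faces of $G$ incident to $u$ into a single face; one of these is the outer face of $G$ (since $u \in L_1$), so the merged face is the new outer face of $G - \{u\}$, and every neighbor of $u$ — in particular $v$ — lies on its boundary. Since further vertex deletions can only enlarge the outer face, $v$ remains on the outer face of $G_1$, so $v$ is not an inner vertex of $G_1$. Hence inner vertices of $G_1$ have no neighbors in $L_1$, which yields $\deg_{G_1}(v) = \deg_G(v) \ge d$ for every inner vertex $v$ of $G_1$.

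Using this, I would combine Euler's formula $2|E(G)| \le 6n - 12 < 6n$ with the degree bound $\sum_{v \in V_{\mathrm{inner}}(G)} \deg_G(v) \ge d \cdot |V_{\mathrm{inner}}(G)|$ to conclude $|V(G_1)| = |V_{\mathrm{inner}}(G)| < 6n/d$. Iterating the propagation claim together with this Euler bound gives $|V(G_i)| < n (6/d)^i$, and choosing $k$ to be the smallest integer exceeding $\log_{d/6} n$ forces $|V(G_k)| < 1$, i.e., $V(G_k) = \emptyset$. Thus $G$ is $k$-outerplanar with $k < 1 + \log_{d/6} n$.

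The main obstacle is the propagation claim; it rests on two standard but easily overlooked properties of plane graphs — that all faces incident to a deleted vertex merge into a single face, and that the outer face is monotone non-decreasing under vertex deletions. Once these are in place, the remainder of the proof is routine counting.
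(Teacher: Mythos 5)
Your proposal follows exactly the same strategy as the paper — Euler's formula to bound the inner-vertex count by a factor $6/d$, plus iterated peeling of outer vertices — and you additionally supply a proof of the propagation claim (inner vertices of the peeled graph retain degree $\ge d$) that the paper states but leaves unjustified; both arguments and the face-merging/monotonicity facts you invoke are correct. One tiny rounding nit: when $\log_{d/6} n$ is an integer $m$, the ``smallest integer exceeding'' it is $m+1$, which is not strictly less than $1+m$; taking $k=\lceil\log_{d/6} n\rceil$ instead already gives $|V(G_k)|<1$ by the strict inequality $|V(G_i)| < n(6/d)^i$ and satisfies $k < 1 + \log_{d/6}n$.
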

\begin{proof}
  In a planar graph with $n$ vertices, Euler's formula implies that the number of edges is less than $3n$.
  Thus the sum of all vertex degrees is less than $6n$.
  Let $n_{\text{inner}}$ be the number of inner vertices.
  As the sum of inner degrees, which is at least $d \cdot n_{\text{inner}}$, is at most the sum of all degrees, we get $d \cdot n_{\text{inner}} \le 6n$.
  Consequently, $n_{\text{inner}} < n \cdot 6 / d$.

  It follows that removing all outer vertices yields a plane graph with less than $n \cdot 6 / d$ vertices in which every inner vertex again has degree at least $d$.
  Thus, repeatedly removing all outer vertices $k$ times yields a graph with less than $n \cdot (6 / d)^k$ vertices.
  Hence $G$ is reduced to at most one vertex in less than $\log_{d/6} n$ rounds, which concludes the proof.
\end{proof}

For $r \ge 1.8 > \log 6$, these two lemmas already yield the claim of Theorem~\ref{thm:treewidth-of-voronoi-diagram}; also see the formal proof in Section~\ref{sec:combining-small-and-large-radius}, where we combine the large- and small-radius case.

\subsection{Small-radius case}
\label{sec:small-radius-case}

Before formalizing the proof idea, we show two simple results
about hyperbolic triangle areas, both based on the fact that a
right-angled triangle with short sides of length $a$ and $b$ has area
$2\arctan \left( \tanh\frac{a}{2} \cdot \tanh\frac{b}{2} \right)$.  We
start with an upper bound on the area of a triangle where one side has
a specific length.  This will serve as an upper bound for the
triangles in
Figure~\ref{fig:outerplanarity_small_radius_area_polygon}.

\begin{figure}[tbp]
  \centering
  \begin{subfigure}[b]{0.32\textwidth}
    \centering
    \includegraphics[page=1]{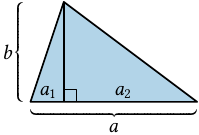}
    \subcaption{\label{fig:outerplanarity_triangles_perpendicular_inside}}
  \end{subfigure}
  \begin{subfigure}[b]{0.32\textwidth}
    \centering
    \includegraphics[page=2]{figures/outerplanarity_triangles}
    \subcaption{\label{fig:outerplanarity_triangles_perpendicular_outside}}
  \end{subfigure}
  \caption{Illustration of Lemma~\ref{lem:trianglearea1} giving an
    upper bound on the area of a triangle (blue) with side length
    $a$.}
  \label{fig:outerplanarity_triangles}
\end{figure}

\begin{lemma}
  \label{lem:trianglearea1}
  Any triangle with one side of length $a$ has area at most
  $\min\{a, \pi\}$.
\end{lemma}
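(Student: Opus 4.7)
The plan is to handle the two halves of the bound separately. The bound by $\pi$ is immediate from the preliminaries: any hyperbolic triangle has area $\pi - \alpha$ where $\alpha$ is the sum of its interior angles, so its area is strictly less than $\pi$. All the work goes into the bound by $a$.

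For the $a$ bound, I would drop the altitude from the vertex opposite the given side onto the line through that side, denoting its foot by $F$ and its length by $h$. This is exactly the decomposition shown in the two subfigures of Figure~\ref{fig:outerplanarity_triangles}: either $F$ lies inside the side of length $a$, or it falls on the extension of that side beyond one endpoint. In the first (``inside'') configuration, the altitude splits the triangle into two right-angled triangles with horizontal legs $a_1,a_2$ satisfying $a_1+a_2=a$ and a common vertical leg $h$. Using the area formula for right-angled hyperbolic triangles recalled just before the lemma, combined with the elementary inequalities $\arctan x\le x$, $\tanh(x/2)\le x/2$, and $\tanh(h/2)\le 1$, each subtriangle has area at most $a_i$. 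Summing yields the desired bound $a$.

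The ``outside'' configuration is the only place that requires a little care, because the triangle is now a \emph{difference} of right triangles rather than a union. Writing the horizontal distances from $F$ to the two endpoints as $x$ and $x+a$, the area equals $f(x+a)-f(x)$, where $f(t):=2\arctan\!\bigl(\tanh(t/2)\tanh(h/2)\bigr)$. A direct differentiation gives
$$f'(t) \;=\; \frac{\tanh(h/2)\,\mathrm{sech}^2(t/2)}{1+\tanh^2(t/2)\tanh^2(h/2)} \;\le\; 1,$$
since the numerator is at most $1$ (as $\mathrm{sech}^2\le 1$ and $\tanh\le 1$) while the denominator is at least $1$. The mean value theorem then yields $f(x+a)-f(x)\le a$.

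The degenerate subcase where $F$ coincides with an endpoint of the side is subsumed by the inside case with $a_1=0$ or $a_2=0$, and the purely degenerate case $h=0$ is trivial. The only conceptual obstacle is the outside configuration, where we cannot simply sum areas; the elementary derivative bound above resolves it cleanly.
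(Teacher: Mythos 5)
Your proof is correct and uses the same altitude decomposition and case split as the paper. The only difference is in the harder (``outside'') case: the paper derives the bound from subadditivity of $\arctan$ and $\tanh$, while you instead bound $f'(t)\le 1$ directly and apply the mean value theorem, which is a valid and arguably slightly cleaner technique (it would in fact subsume the inside case as well, taking $x=0$).
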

\begin{proof}
  We first need that $\arctan(\tanh x) \leq \min\{x, \pi/4\}$ for
  $x \geq 0$.  Here, $\tanh x < 1$ implies
  $\arctan(\tanh x) < \arctan1 = \pi/4$ as $\arctan$ is strictly
  increasing.  For the other part, using that $\tanh(x) \le x$ and
  $\arctan(x) \le x$ for $x \ge 0$ yields
  $\arctan(\tanh x) \le \tanh x \le x$.

  Consider the line perpendicular to the side of length $a$ through
  its opposite vertex.  It either splits the triangle in two right
  triangles as shown in
  Figure~\ref{fig:outerplanarity_triangles_perpendicular_inside} or it
  extends the triangle into a larger one as shown in
  Figure~\ref{fig:outerplanarity_triangles_perpendicular_outside}.  In
  both cases, let $b$ be the length of the new perpendicular line.
  Moreover, in the first case, the side of length $a$ is split into
  two sides of length $a_1$ and $a_2$.  In the second case, the side
  of length $a$ is extended by $c$ yielding a right triangle with side
  length $c + a$.

  For the first case, the area of the triangle (blue in
  Figure~\ref{fig:outerplanarity_triangles_perpendicular_inside}) is
  the sum of the two triangles, i.e.,
  \begin{align*}
    2\arctan \left( \tanh\frac{a_1}{2} \tanh\frac{b}{2} \right) &+ 
    2\arctan \left( \tanh\frac{a_2}{2} \tanh\frac{b}{2} \right)\\
    &\leq 2\arctan \left( \tanh\frac{a_1}{2} \right)
      + 2\arctan \left( \tanh\frac{a_2}{2} \right) \\
    &\leq \min\{a_1, \pi/2\} + \min\{a_2, \pi/2\} \\
    &\leq \min\{a, \pi\}.
  \end{align*}

  For the second case, we are interested in the area of the blue
  triangle in
  Figure~\ref{fig:outerplanarity_triangles_perpendicular_outside}.
  Using that $\arctan(x)$ and $\tanh(x)$ are both subadditive\footnote{A
    function $f$ is subadditive if $f(x + y) \le f(x) + f(y)$} for $x\geq 0$ and
  $\arctan$ is increasing, we can estimate the area of the right
  triangle with side lengths $b$ and $c + a$
  ($\text{red} + \text{blue}$ in
  Figure~\ref{fig:outerplanarity_triangles_perpendicular_outside}) as
  \begin{align*}
    2\arctan\left(\tanh \frac{a + c}{2} \tanh \frac{b}{2} \right)
    &\le 2\arctan\left(\left(\tanh \frac{a}{2} + \tanh \frac{c}{2}\right) \tanh \frac{b}{2} \right)\\
    &\le 2\arctan\left(\tanh \frac{a}{2} \tanh \frac{b}{2} \right) + 2\arctan\left(\tanh \frac{c}{2} \tanh \frac{b}{2} \right).
  \end{align*}
  Now we get the area of the triangle with side length $a$ (blue), as
  the difference between two right triangles
  ($\text{red} + \text{blue} - \text{red}$), which yields
  \begin{align*}
    2\arctan\left(\tanh \frac{a + c}{2} \tanh \frac{b}{2} \right) -
    2\arctan\left(\tanh \frac{c}{2}  \tanh \frac{b}{2} \right)
    &\le 2\arctan\left(\tanh \frac{a}{2}  \tanh \frac{b}{2} \right)\\
    &\le 2\arctan\left(\tanh \frac{a}{2} \right)\\
    &\le \min\{a, \pi\}.\qedhere
  \end{align*}
\end{proof}

To lower bound the area of the triangles in
Figure~\ref{fig:outerplanarity_small_radius_area_layer}, we use the
following lemma.  We note that the requirement $a \le 1.8$ in this
lemma is the reason why this section only deals with the case
$r \le 1.8$.

\begin{lemma}
  \label{lem:trianglearea2}
  A right-angled triangle with short sides of length $a \leq 1.8$ and $b$
  has area at least $a/5 \cdot \min\{b, \pi\}$.
\end{lemma}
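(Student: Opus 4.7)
My plan is to start from the explicit area formula $A(a,b) := 2\arctan(\tanh(a/2)\tanh(b/2))$, observe that it is increasing in each argument, and use monotonicity to absorb the $\min\{b,\pi\}$. Concretely, for $b > \pi$ the bound $A(a,b) \geq A(a,\pi)$ shows that it suffices to establish
$$ A(a,b) \;\geq\; \frac{ab}{5} \qquad \text{for all } 0 < a \leq 1.8,\ 0 < b \leq \pi, $$
since evaluating this at $b = \pi$ gives $A(a,\pi) \geq a\pi/5$, from which the $b > \pi$ case follows by monotonicity.

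To handle the reduced statement, I will linearise the two concave factors on their relevant ranges. Since $\tanh(x)/x$ is decreasing on $(0,\infty)$, for $a \in (0,1.8]$ one has $\tanh(a/2) \geq \frac{\tanh(0.9)}{1.8}\,a$, and similarly $\tanh(b/2) \geq \frac{\tanh(\pi/2)}{\pi}\,b$ for $b \in (0,\pi]$. Set $M := \tanh(0.9)\tanh(\pi/2)$; then the product $\tanh(a/2)\tanh(b/2)$ lies in $[0, M]$. Applying the same monotonicity principle to $\arctan(x)/x$ (also decreasing on $(0,\infty)$) yields $\arctan(t) \geq \frac{\arctan(M)}{M}\,t$ for all $t \in [0,M]$. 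Chaining these inequalities gives
$$ A(a,b) \;\geq\; 2\cdot\frac{\arctan(M)}{M}\cdot\tanh(a/2)\tanh(b/2) \;\geq\; \frac{2\arctan(M)}{1.8\pi}\,ab. $$

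The remaining step, and the one I expect to be the main obstacle, is to verify the purely numerical inequality $2\arctan(M)/(1.8\pi) \geq 1/5$, i.e.\ $\arctan(M) \geq 0.18\pi$. Plugging in $M \approx 0.657$ gives $\arctan(M) \approx 0.582$ versus $0.18\pi \approx 0.565$, so the bound holds but with very little slack. A clean rigorous certificate uses explicit rational estimates such as $\tanh(0.9) > 0.716$ and $\tanh(\pi/2) > 0.917$, hence $M > 0.656$, combined with the Taylor-series lower bound $\arctan(x) \geq x - x^3/3$ to get $\arctan(M) > 0.656 - 0.656^3/3 > 0.565 > 0.18\pi$. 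The hypothesis $a \leq 1.8$ in the lemma is exactly what keeps $M$ bounded away from $1$ so that this final numerical check goes through; allowing larger $a$ would force $M$ too close to $1$ and the constant $1/5$ would no longer be attainable by this argument.
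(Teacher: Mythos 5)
Your overall strategy is essentially the same as the paper's, just unrolled: the paper observes that $f_b(a) = 2\arctan(\tanh\frac{a}{2}\tanh\frac{b}{2})$ is concave in each argument with $f_b(0)=0$, hence $f_b(a) \ge \frac{a}{1.8}f_b(1.8)$ for $a\in[0,1.8]$ and similarly in $b$; your observation that $\tanh(x)/x$ and $\arctan(x)/x$ are decreasing is exactly this concavity-through-the-origin fact applied to each factor, and after simplification you arrive at the identical constant $\frac{2\arctan(M)}{1.8\pi} = \frac{g(\pi)}{1.8\pi}$ where $g(\pi)=2\arctan(\tanh 0.9\cdot\tanh\frac{\pi}{2})$. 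So the two proofs are the same argument in different clothes, and both correctly reduce the lemma to checking that this number is at least $1/5$.

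However, your proposed ``clean rigorous certificate'' for that final numerical check is incorrect. You claim $\arctan(M) > 0.656 - 0.656^3/3 > 0.565 > 0.18\pi$, but $0.656^3 \approx 0.2823$ gives $0.656 - 0.656^3/3 \approx 0.5619$, which is \emph{less} than $0.565$; moreover $0.18\pi \approx 0.56549 > 0.565$, so the last comparison is also reversed. In other words, the two-term Taylor truncation $x - x^3/3$ is too lossy here: the true value $\arctan(0.656)\approx 0.581$ leaves only about $0.016$ of slack over $0.18\pi$, while the truncation error is about $0.019$. The fix is to take more terms (e.g.\ $\arctan x \ge x - \frac{x^3}{3} + \frac{x^5}{5} - \frac{x^7}{7}$ gives roughly $0.5787 > 0.5655$) or use a sharper elementary bound such as $\arctan x \ge \frac{x}{1+x^2/3}$, which yields about $0.5737$. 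With such a repair the proof is complete; as written, the certificate does not close.
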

\begin{proof}
	The area is given by the function
	$f_b(a) := 2\arctan \left( \tanh\frac{a}{2} \cdot \tanh\frac{b}{2} \right)$,
	which is concave, as
	$f_b''(a) = -\frac{2 \sinh a \sinh(2b)}{(2 + 2\cosh a \cosh b)^2} \leq 0$
	for any $a,b \in \mathbb R$. Because additionally $f_b(0) = 0$,
	we can say that $f_b(a) \geq a \cdot f_b(1.8) / 1.8$ for $a \in [0,1.8]$.
	What remains is to bound $g(b) := f_b(1.8)$. This function $g(b)$ is
	concave for analogous reasons as before, and again $g(0) = 0$, so we can
	say that $g(b) \geq b \cdot g(\pi) / \pi \geq 0.37 b$ for $b \in [0,\pi]$.
	Additionally, $g(b) \geq 0.37 \pi$ for $b \geq \pi$.
	Thus, $f_b(a) \geq a / 5 \cdot \min\{b, \pi\}$ for $a \in [0,1.8]$.
\end{proof}

With this, we are ready to show that the area of the layer polygons
grows exponentially.

\begin{lemma}
  \label{lem:outer-planarity-small-radius}
  Let $r \le 1.8$ and let $S$ be a set of sites in $\Hyp^2$ with
  pairwise distance at least $2r$.  Let $s \in S$ be any site and let
  $P_\ell$ be the $\ell$-th layer polygon of $s$ in the Delaunay
  complex $\mathcal D(S)$.  Then
  $\area(P_\ell) \ge \frac{10}{10 - r} \cdot \area(P_{\ell - 1})$.
\end{lemma}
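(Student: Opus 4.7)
My plan is to prove the bound by matching an upper bound on $\area(P_\ell)$ with a lower bound on the annular area $\area(P_\ell) - \area(P_{\ell-1})$, both indexed by the polygon edges of $P_\ell$.

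For the upper bound, I fan-triangulate $P_\ell$ from the central site $s$, connecting $s$ to every vertex of $P_\ell$. This yields $2|V_\ell|$ triangles, one for each polygon edge $e$ of length $a_e$, which appears as the side opposite to $s$. Applying Lemma~\ref{lem:trianglearea1} to each triangle bounds its area by $\min\{a_e, \pi\}$, so
\[
\area(P_\ell) \;\le\; \sum_{e \in E(P_\ell)} \min\{a_e, \pi\}.
\]

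For the lower bound, for each site $v_\ell^i \in V_\ell$ I construct two right triangles $T_i^+$ and $T_i^-$, one for each of the two polygon edges $e_i^\pm = \{v_\ell^i, v_\ell^{i\pm 0.5}\}$ incident to $v_\ell^i$. Each $T_i^\pm$ has a right angle at $v_\ell^i$, with legs being (a) the polygon edge $e_i^\pm$ of length $a_i^\pm$, and (b) a segment of length $r$ perpendicular to $e_i^\pm$ at $v_\ell^i$, directed into the Voronoi cell of $v_\ell^i$. Because all sites are at pairwise distance at least $2r$, the open disk of radius $r$ around $v_\ell^i$ is contained in its Voronoi cell, so the perpendicular leg has room to exist, and by convexity $T_i^\pm$ lies inside this cell. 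With the right choice of orientation, $T_i^\pm$ additionally lies in the annular region $P_\ell \setminus P_{\ell-1}$. Setting $T_i$ to be the larger of $T_i^+, T_i^-$, the triangles $\{T_i\}_i$ lie in disjoint Voronoi cells and are therefore pairwise disjoint, yielding $\sum_i \area(T_i) \le \area(P_\ell) - \area(P_{\ell-1})$. Since $r \le 1.8$, Lemma~\ref{lem:trianglearea2} applied to $T_i^\pm$ with the perpendicular leg as the short side ($a = r$ and $b = a_i^\pm$) gives $\area(T_i^\pm) \ge (r/5)\min\{a_i^\pm, \pi\}$. The inequality $\max(x, y) \ge (x+y)/2$ across the two triangles at each vertex, combined with the fact that each polygon edge of $P_\ell$ is incident to exactly one vertex of $V_\ell$ (so that $\sum_{i,\pm} \min\{a_i^\pm, \pi\} = \sum_e \min\{a_e, \pi\}$), then yields
\[
\sum_i \area(T_i) \;\ge\; \tfrac{1}{2}\sum_{i,\pm} \area(T_i^\pm) \;\ge\; \frac{r}{10}\sum_{e \in E(P_\ell)} \min\{a_e, \pi\} \;\ge\; \frac{r}{10}\area(P_\ell).
\]
Rearranging $\area(P_\ell) - \area(P_{\ell-1}) \ge (r/10)\area(P_\ell)$ gives $\area(P_\ell) \ge \tfrac{10}{10-r}\area(P_{\ell-1})$.

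\textbf{Main obstacle.} The key technical step is verifying that each right triangle $T_i^\pm$ actually lies within the annular region $P_\ell \setminus P_{\ell-1}$: containment in the Voronoi cell of $v_\ell^i$ is immediate from convexity and the disk-in-cell fact, but one must verify that the perpendicular leg at $v_\ell^i$ extends into the annulus (outside $P_{\ell-1}$) rather than across the inner boundary. This hinges on a combinatorial argument about the layer decomposition: the polygon edge $e_i^\pm$ sits on the boundary of $P_\ell$, and one of the two perpendicular directions at $v_\ell^i$ always points ``outward'' of $P_{\ell-1}$ into the annular strip between the two layer polygons. Once this containment is checked, the rest of the proof is a routine application of Lemmas~\ref{lem:trianglearea1} and~\ref{lem:trianglearea2} together with the averaging inequality, and the constants line up exactly to yield the factor $\tfrac{10}{10-r}$.
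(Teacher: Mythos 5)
Your proposal is correct and follows essentially the same approach as the paper's proof: a fan triangulation from $s$ bounded via Lemma~\ref{lem:trianglearea1} for the upper bound, and right triangles with a perpendicular leg of length $r$ at each layer-$\ell$ site (keeping the larger of the two per site, and averaging) bounded via Lemma~\ref{lem:trianglearea2} for the lower bound. The containment step you flag as the ``main obstacle'' is handled at the same level of brevity in the paper; the justification is that each $T_i^\pm$ lies entirely inside the Voronoi cell of $v_\ell^i$ (its three vertices are in the closed cell and the cell is convex), that $P_{\ell-1}$ never enters the open cell of any layer-$\ell$ site since its segments stay in the closed cells of layer-$(\ell-1)$ sites, and that the only part of $P_\ell$ inside that cell is the local path $v_\ell^{i-0.5} v_\ell^i v_\ell^{i+0.5}$, so choosing the perpendicular to point to the interior side keeps the triangle inside $P_\ell$.
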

\begin{proof}
  We give an upper bound on $\area(P_\ell)$ as well as a lower bound
  on $\area(P_\ell) - \area(P_{\ell - 1})$ and then relate these two.
  For the upper bound on $\area(P_\ell)$, we cover $P_\ell$ with
  triangles as shown in
  Figure~\ref{fig:outerplanarity_small_radius_area_polygon} and sum
  the area of these triangles.  For each $v_i \in V_\ell$, we get two
  triangles with sides $v_\ell^{i - 0.5} v_\ell^{i}$ and
  $v_\ell^i v_\ell^{i + 0.5}$, respectively.  Using
  Lemma~\ref{lem:trianglearea1} to bound their area, we obtain
  \begin{equation*}
    \area(P_\ell) \leq \sum_{i=1}^{|V_\ell|} \min\{|v_\ell^{i - 0.5}
    v_\ell^{i}|, \pi\} + \min\{|v_\ell^{i} v_\ell^{i + 0.5}|, \pi\}.
  \end{equation*}

  For the lower bound, let $v_i \in V_\ell$ and consider the triangle
  with a right angle at $v_\ell^i$ where the two sides forming that
  right angle are $v_\ell^iv_\ell^{i + 0.5}$ and the perpendicular
  line segment of length $r$ that lies in the interior of $P_\ell$;
  see Figure~\ref{fig:outerplanarity_small_radius_area_layer}.
  Analogously, we define a triangle with side
  $v_\ell^{i + 0.5}v_\ell^{i}$.  Using Lemma~\ref{lem:trianglearea2}
  the larger of the two triangles has area at least
  $r/5 \cdot \min \big\{ \max\{|v_\ell^{i-0.5} v_\ell^i|, |v_\ell^i
  v_\ell^{i+0.5}|\},\ \pi \big\}$, which is at least
  $r/10 \cdot \big( \min\{|v_\ell^{i-0.5} v_\ell^i|, \pi\} +
  \min\{|v_\ell^i v_\ell^{i+0.5}|, \pi\} \big)$.  Note that these two
  triangles lie inside $P_\ell$ but outside $P_{\ell - 1}$ and they do
  not intersect other triangles that are obtained in the same way for
  other vertices in $V_\ell$.  Thus, we obtain
  \begin{equation*}
    \area(P_\ell) - \area(P_{\ell - 1}) \ge \frac{r}{10}
    \sum_{i=1}^{|V_\ell|} \min\{|v_\ell^{i - 0.5} v_\ell^{i}|, \pi\} +
    \min\{|v_\ell^{i} v_\ell^{i + 0.5}|, \pi\}.
  \end{equation*}

  Observe that this lower bound is the same as the upper bound except
  for the factor of $r / 10$.  Thus, we obtain
  $\area(P_\ell) - \area(P_{\ell - 1}) \ge \frac{r}{10}
  \area(P_\ell)$.  Rearranging yields the claimed bound.
\end{proof}

\subsection{Combining the two}
\label{sec:combining-small-and-large-radius}

Combining the results for small and large radius, we obtain the
desired theorem.

\thmOuterplanarity*
\begin{proof}
  First, assume $r \ge 1.8$.
  As $1.8 > \log 6$, it follows from Lemma~\ref{lem:number-of-voronoi-neighbors} that any inner vertex of $\mathcal D(S)$ has degree at least $e^r > 6$.
  We can thus apply Lemma~\ref{lem:highdegree-outerplanar} to obtain that $\mathcal D(S)$ is $k$-outerplanar for $k < 1 + \log_{e^r / 6} n$, which matches the claimed bound.

  For the case where $r \le 1.8$, first note that the claimed bound
  trivially holds for $r \le 1/n$ as any planar graph is clearly
  $\Oh(n \log n)$-outerplanar.  For $1/n < r \le 1.8$, let $s \in S$
  be any site and consider the layer polygons $P_1, \dots, P_L$ where
  $L + 1$ is the distance of $s$ in $\mathcal D(S)$ to the closest
  outer vertex.  Then by
  Lemma~\ref{lem:outer-planarity-small-radius} the area of the
  polygons grows by a factor of at least $\frac{10}{10 - r}$ in every
  step and thus we get
  \begin{equation*}
    \left( \frac{10}{10 - r} \right)^L \cdot \area(P_1)
    \le \area(P_L) \quad\Rightarrow\quad
    L \le \frac{\log \frac{\area(P_L)}{\area(P_1)}}{\log\frac{10}{10 - r}}.
  \end{equation*}

  First note that $\log\frac{10}{10 - r} \ge \frac{r}{10}$ for
  $r < 10$ and thus it remains to show that
  $\frac{\area(P_L)}{\area(P_1)}$ is polynomial in $n$.  For this
  observe that $\area(P_L) \le 2 \pi n$ as it is a polygon with
  less than $2n$ vertices.  Moreover, $P_1$ at least includes the
  Voronoi cell of $s$, which includes a disk of radius $r$ as sites
  have pairwise distance at least $2r$.  Thus,
  $\area(P_1) \ge 4\pi \sinh^2(r / 2) \in \Omega(1/n^2)$ as
  $r \ge 1/n$ and $\sinh x$ behaves like $x$ for $x$ close to $0$.  It
  follows that $L \in \Oh\big(\frac{\log n}{r}\big)$, which concludes
  the proof.
\end{proof}

\section{Exact algorithm for independent set}
\label{sec:exact_algo_appendix}

In the following, we give an exact algorithm for computing an
independent set of a given size in a hyperbolic uniform disk graph $G$
of radius $r$ that is given via its geometric realization.  Let $S$ be
an independent set of $G$.  As there are no edges between vertices in
$S$, they have pairwise distance at least $2r$.  Thus,
Theorem~\ref{thm:treewidth-of-voronoi-diagram} gives us an upper bound
on the outerplanarity of the Delaunay complex $\cD(S)$, which implies
that $\cD(S)$ has small treewidth and thereby small balanced
separators.  In a nutshell, our algorithm first lists all relevant
separators of the Delaunay complexes of all possible independent sets.
We then use a dynamic program to combine these candidate separators
into a hierarchy of separators, maximizing the number of vertices of
the corresponding Delaunay complex and thereby the size of the
independent set $S$.

A crucial tool for dealing with these separators are sphere cut
decompositions.  They are closely related to tree decompositions but
represent the separators as closed curves called \emph{nooses}.  We
introduce sphere cut decompositions in
Section~\ref{sec:sphere-cut-decomp}.  Afterwards, in
Section~\ref{sec:exact_algo:delaunay} we
define normalized geometric realizations of nooses for sphere cut
decompositions of Delaunay complexes.  In
Section~\ref{sec:exact-algo:cand-noose}, we prove that there are not
too many candidate nooses and that selecting a subset of candidate
nooses that can be combined into a hierarchy of separators actually
yields an independent set.  Finally, in
Section~\ref{sec:exact-algo:solv-indep-set}, we describe the dynamic
program for computing the independent set, which follows almost
immediately from the previous sections.

\subsection{Sphere cut decompositions}
\label{sec:sphere-cut-decomp}

A \emph{branch decomposition}~\cite{Robertson_Seymour_1991} of a graph
$G$ is an unrooted binary tree $T$, i.e., each non-leaf node has
degree exactly 3, and there is a bijection between the leaves of $T$
and the edges of $G$.  Observe that removing an edge $e_T \in E(T)$
separates $T$ into two subtrees and thereby separates its leaves and
thus the edges of $G$ in two subsets.  Let $E_1\subset E(G)$ and
$E_2 \subset E(G)$ be these two edge sets and let $G[E_1]$ and
$G[E_2]$ be their induced subgraphs.  The vertices shared by $G[E_1]$
and $G[E_2]$, i.e., vertices that appear in edges of $E_1$ and of
$E_2$, are called the \emph{middle set} of $e_T$.  The \emph{width} of
the branch decomposition $T$ is the size of the largest middle set
over all edges of $T$.  The \emph{branch-width} of $G$ is the minimum
width of all branch decompositions of $G$.  Robertson and
Seymour~\cite{Robertson_Seymour_1991} showed that the branch-width of
a graph is within a constant factor of its treewidth.

\begin{figure}[tbp]
  \centering
  \includegraphics[]{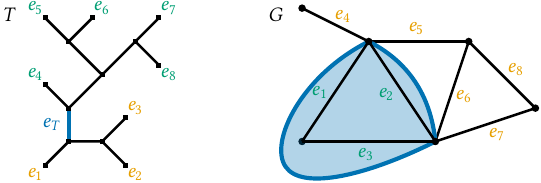}
  \caption{A branch decomposition $T$ of a graph $G$ with a
    highlighted edge $e_{T}$ of $T$ and the corresponding
    noose in $G$, that separates the two parts of $E(G)$.}
  \label{fig:sphere_cut_definition_help}
\end{figure}

If $G$ is a plane graph, results by Seymour and
Thomas~\cite{DBLP:journals/combinatorica/SeymourT94} imply that we can
wish for some additional structure without increasing the width of the
branch
decomposition~\cite{Dorn_Penninkx_Bodlaender_Fomin_2010,MarxP22}.
Specifically, a \emph{sphere cut decomposition} of a plane graph $G$
is a branch decomposition $T$ such that every tree edge $e_T \in E(T)$
is associated with a so-called \emph{noose}.  The noose $O$ of $e_T$
is a simple closed curve that intersects the vertices of $G$ exactly
in the middle set of $e_T$ and separates $G[E_1]$ from $G[E_2]$; see
also Figure~\ref{fig:sphere_cut_definition_help}.  Moreover, removing the
vertices of the middle set from $O$ separates $O$ into \emph{noose
  segments} such that each segment lies entirely in the interior of a
face of $G$ and every face contains at most one noose segment, except
if $E_1$ or $E_2$ contains just a bridge of $G$.  In the latter case,
the noose consists of two noose segments through the face incident to
the bridge.  It should be mentioned that this differs from the
original definition by Dorn, Penninkx, Bodlaender and
Fomin~\cite{Dorn_Penninkx_Bodlaender_Fomin_2010}.  Here, a noose may
intersect any face at most once and thus the graph cannot have
bridges, as discussed by Marx and Pilipczuk~\cite{MarxP22}.  With our
relaxed definition of nooses however, we get the following
theorem.

\begin{theorem}[\cite{Dorn_Penninkx_Bodlaender_Fomin_2010,MarxP22}]
  Let $G$ be a plane graph with branch-width $b$.  Then $G$ has a
  sphere cut decomposition, where every noose intersects at most $b$
  vertices of $G$.
\end{theorem}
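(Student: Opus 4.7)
I would derive this from the strict sphere cut decomposition theorem of Dorn, Penninkx, Bodlaender and Fomin~\cite{Dorn_Penninkx_Bodlaender_Fomin_2010} for bridgeless plane graphs by a local surgery that handles bridges one at a time. The classical version delivers nooses that visit every face at most once, which already gives the desired width bound; the only new feature here is permitting a noose to traverse the unique face incident to a bridge twice when that bridge is all that lies on one side of the noose. A bridge-duplication reduction is exactly the right tool to bridge these two statements.

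\textbf{Reduction via bridge duplication.} For every bridge $e=\{u,v\}$ of $G$ I would draw a parallel edge $e'$ infinitesimally close to $e$ in the embedding, so that $e$ and $e'$ bound a small bigon face $f_e$; let $G'$ be the resulting plane multigraph, which is bridgeless because $e$ and $e'$ now lie on a common cycle. A branch decomposition $T$ of $G$ of width $b$ extends to a branch decomposition $T'$ of $G'$ of width $\max(b,2)$ by replacing the leaf of $e$ with a cherry whose two new leaves are labelled $e$ and $e'$: the only new middle set is $\{u,v\}$ of size two, so assuming the nontrivial case $b\ge 2$ the branch-width of $G'$ remains at most $b$. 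Apply the classical theorem to $G'$ to obtain a strict sphere cut decomposition of $G'$ in which every noose crosses at most $b$ vertices and meets each face of $G'$ in at most one segment.

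\textbf{Transfer back to $G$.} To recover a sphere cut decomposition of $G$, I would collapse each bigon $f_e$ back to the single bridge $e$ and simultaneously contract the corresponding cherry in $T'$, merging the leaves for $e$ and $e'$ into one leaf labelled $e$. Any noose that avoided $f_e$ descends unchanged and is a valid (strict) noose of $G$. The unique noose that separated $e$ from $e'$ in $G'$ necessarily entered and exited $f_e$; after the collapse, its two segments lie in the single face of $G$ incident to $e$ on both sides, and exactly one of the two pieces of $E(G)$ it separates consists of the bridge $e$ alone. This is precisely the relaxation permitted by the noose definition in the statement. The middle sets are preserved because the contracted cherry has middle set $\{u,v\}$ throughout, which matches the endpoints of $e$.

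\textbf{Main obstacle.} The delicate points are two: first, ensuring that the classical theorem applies to the intermediate plane multigraph $G'$ (since some formulations require simple graphs, one may need to subdivide $e'$ once and track the subdivision vertex through the reduction, which can be done without affecting branch-width for $b\ge 2$); and second, checking that after collapsing all bigons in parallel, the ``one segment per face'' property fails exactly on the bridge faces and only for the bridge-separating nooses. Since the local surgeries around distinct bridges are carried out in disjoint neighbourhoods of the embedding, these two checks decouple into independent local verifications, which is what makes the argument go through cleanly.
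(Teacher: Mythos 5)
The paper does not prove this statement itself; it is quoted as a black box from the cited references, with the relaxation of the noose definition introduced precisely so that the bridge case can be accommodated. So there is no in-paper proof to compare against. That said, your bridge-duplication argument has a substantive gap.

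You propose to duplicate each bridge $e=uv$ by a parallel edge $e'$ inside the bridge's face $f$, apply the strict sphere cut theorem to the now-bridgeless $G'$, and then collapse the bigon $f_e$ and contract the cherry $\{e,e'\}$ in the decomposition tree. The issue is that the strict theorem, applied to $G'$, hands you \emph{some} sphere cut decomposition of $G'$ — not the cherried $T'$ you built as a width certificate. Worse, no strict sphere cut decomposition of $G'$ can have $e$ and $e'$ as siblings when both $u$ and $v$ have degree at least $2$ in $G$: the cherry's parent edge would have middle set $\{u,v\}$, its noose would consist of two $u$--$v$ segments, and each segment must lie in a face incident to both $u$ and $v$. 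Since $e$ is a bridge of $G$, the only such faces in $G'$ are the bigon $f_e$ and the face $f'$ that remains from $f$. A segment through $f_e$ would separate $e$ from $e'$, so both segments would have to lie in $f'$, violating the one-segment-per-face requirement for $G'$ that you are relying on. Hence the cherry never appears, and the phrase ``the unique noose that separated $e$ from $e'$'' is misplaced: in any actual strict decomposition of $G'$, $e$ and $e'$ are leaves joined by a nontrivial tree path, and \emph{every} noose along that path crosses the bigon $f_e$. After you collapse $f_e$ and $f'$ back into the single face $f$ of $G$, all of those nooses cross $f$ twice, yet they generally do not have just $\{e\}$ on one side, so they violate even the relaxed definition of this paper (which permits the second crossing only for the trivial bridge-leaf noose). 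The transfer back therefore does not go through. Fixing it needs a genuinely different idea — e.g., decomposing along the bridge-block tree, gluing strict sphere cut decompositions of the $2$-edge-connected blocks, and inserting bridge leaves with small nooses that loop twice through $f$ — rather than a local bigon surgery and cherry contraction.
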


In the following, we assume sphere cut decompositions to be rooted at
a leaf node.  Let $e_T \in E(T)$ be a tree edge between $u_T \in V(T)$
and $v_T \in V(T)$ such that $u_T$ is the parent of $v_T$.  Moreover,
let $G[E_1]$ and $G[E_2]$ be the two subgraphs defined by $e_T$ and
assume that $E_1$ corresponds to the leaves that are descendants of
the child $v_T$ in $T$.  Then, we call the side of the noose $O$ of
$e_T$ that contains $G[E_1]$ its \emph{interior} and the side
containing $G[E_2]$ its \emph{exterior}.  We note that, if the root of
$T$ is appropriately chosen, this corresponds to the typical
definition of interior and exterior of closed curves in the plane.  In
the following, when specifying a noose (or a closed curve that could
be a noose), its interior and exterior is implicitly given by assuming
a clockwise orientation, i.e., we assume that the interior of a noose
lies to its right when traversing it in the given orientation.

We note that rooting the sphere cut decomposition also defines a
child--parent relation on the nooses.  Consider a non-leaf node
$v_T \in V(T)$.  It has an edge $e_P$ to the parent, and two edges
$e_L$ and $e_R$ to the left and right child corresponding to the
nooses $O_P$, $O_L$, and $O_R$, respectively.  We say that $O_P$ is
the \emph{parent noose} of the \emph{child nooses} $O_L$ and $O_R$.
Note that the subgraph in the interior of $O_P$ is the union of the
subgraphs in the interior of $O_L$ and $O_R$.

\subsection{Geometric nooses of Delaunay complexes}
\label{sec:exact_algo:delaunay}

In this section, we are interested in the sphere cut decomposition of
the Delaunay complex $\mathcal D(S)$ of a set of sites $S$.  When
working with a sphere cut decomposition, the exact geometry of the
nooses is often not relevant, i.e., it is sufficient to know that
there exists a closed curve with the desired topological properties.
In our case, however, the geometry is actually important.  Our goal
here is the following.  Given a noose $O$ of a sphere cut
decomposition of $\mathcal D(S)$, we define a normalized geometric
realization of $O$, i.e., a fixed closed curve satisfying the noose
requirements.  Afterwards, we will observe that this normalized
realization has some nice properties.

Our normalized nooses will be generalized polygons (recall the
definition in Section~\ref{sec:prelim}).  We note that generalized
polygons are technically not closed curves in the hyperbolic plane.
However, they are closed curves in the Poincaré disk when perceived as
a disk of the Euclidean plane.  Thus, generalized polygons are suited
to represent nooses.

\begin{figure}[tbp]
  \centering
  \begin{subfigure}[b]{0.26\textwidth}
    \centering
    \includegraphics[page=4]{figures/nooses.pdf}
    \subcaption{}
    \label{fig:nooses:b}
  \end{subfigure}
  \hfill 
  \begin{subfigure}[b]{0.23\textwidth}
    \centering
    \includegraphics[page=2]{figures/nooses.pdf}
    \subcaption{}
    \label{fig:nooses:c}
  \end{subfigure}
  \hfill 
  \begin{subfigure}[b]{0.23\textwidth}
    \centering
    \includegraphics[page=3]{figures/nooses.pdf}
    \subcaption{}
    \label{fig:nooses:d}
  \end{subfigure}
  \caption{
    Part (a)
    visualizes the choice of the ideal point $p_{u}$ for a Voronoi
    ray, as needed for the normalization of the green noose.  Part (b)
    shows the normalized version of the noose from part (a) alongside
    two other normalized nooses, each separating a single edge of
    $\mathcal{D}(S)$.  Observe that the depicted normalized nooses
    include zero, one, or two ideal arcs, respectively.  Part (c)
    shows a parent noose (blue) and its child nooses, with interiors
    colored green and orange.}
  \label{fig:nooses}
\end{figure}

To define the normalized nooses, let $O$ be a noose and consider an
individual noose segment of $O$ that goes from $u$ to $v$ (which are
vertices of $\cD(S)$) through the face $f$ of $\cD(S)$.  If $f$ is an
inner face, then $f$ corresponds to a vertex of the Voronoi diagram
$\mathcal V(S)$.  Let $p_f$ be the position of this vertex.  Then, the
\emph{normalized noose segment} from $u$ to $v$ through $f$, consists
of the two straight line segments from $u$ to $p_f$ and from $p_f$ to
$v$.

If $f = f_\infty$ is the outer face, $u$ and $v$ can have multiple
incidences to $f_\infty$ (in fact $u$ and $v$ could be the same
vertex).  To make the situation more precise, assume that we traverse
the boundary of $f_\infty$ such that $f_\infty$ lies to the left and
let $e_u$ be the edge incident to $f_\infty$ that precedes the
incidence of $u$ where the noose enters $f_\infty$; see
Figure~\ref{fig:nooses:b}.  Recall that the dual Voronoi edge
$e_u^\star$ is unbounded and ends in some ideal point $p_u$. Let $e_v$
and $p_v$ be defined analogously for $v$.  Then, for the normalized
noose segment from $u$ to $v$ in $f_\infty$, we use the ray between
$u$ and $p_u$, the ideal arc from $p_u$ to $p_v$, and the ray between
$p_v$ and $v$.

Observe that combining the geometric noose segments as defined above
yields a generalized polygon for each noose; see
Figure~\ref{fig:nooses:c}.  Also, note that this definition also works
in the special case where the noose goes only through one vertex with
two incidences to the outer face, in which case the noose consists of
just two rays and an ideal arc.

It is easy to observe that the above construction in fact yields
geometric realizations of the nooses, i.e., the resulting curves have
the desired combinatorial properties in regards to which elements of
$\cD(S)$ lie inside, outside, or on a noose.  We call a sphere
cut decomposition of $\cD(S)$, with such normalized nooses a
\emph{normalized sphere cut decomposition}.  In the following, we
summarize properties of these nooses that we need in later arguments.
The first lemma follows immediately from the construction above.

\begin{lemma}
  \label{lem:normalized_noose_summary}
  Let $S$ be a set of sites with Delaunay complex $\cD(S)$ and Voronoi
  diagram $\mathcal V(S)$.  Let $O$ be a noose of a normalized sphere
  cut decomposition of $\cD(S)$.  Then $O$ is a generalized polygon
  and each vertex of $O$ is either a site, a vertex of
  $\mathcal V(S)$, or an ideal vertex of $\mathcal V(S)$.  Between any
  two subsequent sites visited by $O$, there is either exactly one
  vertex of $\mathcal{V}(S)$ or two ideal vertices of $\mathcal{V}(S)$
  in $V(O)$.  In the latter case, the ideal vertices are connected via
  an ideal arc.
\end{lemma}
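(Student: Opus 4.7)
The plan is to verify each of the three claims directly from the explicit construction of normalized noose segments given just before the lemma statement. The argument is essentially a per-segment case analysis: $O$ is, by construction, the cyclic concatenation of one noose segment per face of $\cD(S)$ that the underlying (unnormalized) noose visits, and each such segment contributes a small, prescribed collection of vertices and edges. Once we check the two cases of a segment locally, the three global claims are just the cyclic sum of these local statements.

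Concretely, I would consider a single noose segment between two consecutive sites $u,v \in V(O)$ and split into two cases. If the segment passes through an inner face $f$ of $\cD(S)$, the construction adds the single intermediate point $p_f$, which is exactly the vertex of $\mathcal V(S)$ dual to $f$, and joins it to $u$ and $v$ by two hyperbolic line segments. If the segment passes through $f_\infty$, the construction adds two intermediate points $p_u$ and $p_v$, which are by definition the ideal endpoints of the unbounded Voronoi edges $e_u^\star$ and $e_v^\star$, i.e.\ ideal Voronoi vertices, and joins these to $u$ and $v$ by two rays and to each other by an ideal arc. In both cases the freshly introduced vertices are of the types listed in the lemma, and between $u$ and $v$ on $O$ they appear in exactly the configuration asserted (one Voronoi vertex, or two ideal Voronoi vertices connected by an ideal arc). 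Summing over all segments, $O$ is a closed curve whose vertices are sites, Voronoi vertices, or ideal Voronoi vertices, and whose edges are line segments, rays, or ideal arcs; hence $O$ is a generalized polygon in the sense of \Cref{sec:prelim}.

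The main subtlety, and the place I would be most careful, is the degenerate situation in which the noose separates off a bridge of $\cD(S)$, so that the cycle of faces visited repeats $f_\infty$, or the situation of a single-vertex noose whose two incidences to $f_\infty$ make $u=v$. In these cases one has to check that the construction picks two different preceding edges $e_u, e_v$ incident to $f_\infty$ for each occurrence of the segment, so that the ideal endpoints $p_u, p_v$ and the ideal arc between them are well-defined and distinct, and that the two segments do not geometrically collapse onto each other. This is where the careful choice of ``the edge preceding the incidence where the noose enters $f_\infty$'' in the construction is used. Once this has been verified, the three claims follow exactly as in the generic case, completing the sketch.
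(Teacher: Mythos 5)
Your proposal is correct and coincides with what the paper has in mind: the paper explicitly states that this lemma ``follows immediately from the construction above,'' and your per-segment case analysis is exactly the unpacking of that remark. Your attention to the bridge and single-vertex-noose degeneracies is a useful extra check that the paper leaves implicit, but does not change the argument.
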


The previous lemma summarizes the core properties of individual
nooses. Next, we describe additional properties of how nooses interact
in their parent-child relationships. See also Figure~\ref{fig:nooses:d}.

\begin{lemma}
  \label{lem:normalized_noose_combination}
  Let $S$ be a set of sites with Delaunay complex $\cD(S)$.
  Let $O_P$, $O_L$, and $O_R$ be three nooses of a normalized sphere
  cut decomposition of $\cD$ such that $O_P$ is the parent
  noose of $O_L$ and $O_R$.  Then $O_{L}$, $O_{R}$ and $O_{P}$
  intersect in exactly two points $p_{1}$, $p_{2}$ and
  $O_P \setminus \{p_{1}, p_{2}\}$ is the symmetric difference of $O_L$ and
  $O_R$.  Further, the interiors of $O_L$ and $O_R$ are subsets of the
  interior of $O_P$.
\end{lemma}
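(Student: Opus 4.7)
The plan is to return to the sphere cut decomposition underlying the three nooses, with tree edges $e_P, e_L, e_R$ and the partition $E_P^{\mathrm{int}} = E_L \sqcup E_R$. Two ingredients drive the argument: the sphere-cut property, which guarantees that around every face $f$ of $\cD(S)$ the incident edges of each class $E_L, E_R, E_P^{\mathrm{ext}}$ form a contiguous arc in cyclic order; and the rigidity of normalization, which pins down each noose segment through a face to a uniquely determined curve (the two straight segments via the Voronoi vertex $p_f$ for an inner face, or the two rays and ideal arc via combinatorially determined ideal vertices for $f_\infty$).

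The interior containment is almost immediate from the tree structure: the edges of $\cD(S)$ enclosed by a noose $O$ are exactly those corresponding to the leaf descendants in $T$ of the child endpoint of $O$'s tree edge. Since the child endpoints of $e_L$ and $e_R$ are themselves descendants of the child endpoint $v_T$ of $e_P$, the edges enclosed by $O_L, O_R$ lie in the set enclosed by $O_P$. Combined with each noose being a Jordan curve, this yields $\mathrm{int}(O_L)\cup\mathrm{int}(O_R)\subseteq \mathrm{int}(O_P)$.

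For the remaining two claims I propose a face-by-face analysis. For each face $f$ of $\cD(S)$, let $\kappa(f)$ count how many of $E_L(f), E_R(f), E_P^{\mathrm{ext}}(f)$ are non-empty. The cases $\kappa(f)\le 1$ are trivial. When $\kappa(f)=2$ with $E_P^{\mathrm{ext}}(f)=\emptyset$, both child nooses have a segment in $f$ between the same pair of middle-set vertices; by rigidity these segments coincide and lie in $\mathrm{int}(O_P)$. When $\kappa(f)=2$ with $E_P^{\mathrm{ext}}(f)\ne\emptyset$, exactly one of $O_L, O_R$ has a segment in $f$, and it coincides verbatim with the $f$-segment of $O_P$. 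When $\kappa(f)=3$, all three nooses have a segment in $f$; normalization forces them to share the common point $p_f$, and the three segments form a ``Y'' with three half-segments out of $p_f$, so the triple intersection inside $f$ is exactly $\{p_f\}$. Assembling these contributions shows that $O_L\cap O_P$ and $O_R\cap O_P$ are two complementary arcs of $O_P$ (each a concatenation of shared segments) meeting at two endpoints $p_1, p_2\in O_L\cap O_R\cap O_P$; the sphere-cut structure---which forces $O_L$, being a Jordan curve contained in the closed region of $O_P$, to meet $O_P$ in a single arc---ensures there are exactly two such endpoints. The non-shared segments of $O_L$ and of $O_R$ coincide and form the common sub-arc $O_L\cap O_R$ lying in $\mathrm{int}(O_P)$, giving both $O_L\cap O_R\cap O_P=\{p_1,p_2\}$ and $O_L\triangle O_R = O_P\setminus\{p_1,p_2\}$.

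The main obstacle is the outer-face case $f = f_\infty$, where normalized segments involve ideal vertices and ideal arcs and where the noose definition permits two segments per face when a bridge is enclosed. The essential verification is that the normalization's choice of the ideal vertex $p_u$---the ideal endpoint of the Voronoi ray dual to the $\cD(S)$-edge preceding the incidence of $u$ in the oriented traversal of $f_\infty$---is a local combinatorial datum independent of which noose is being considered, so that shared $f_\infty$-portions of different nooses select matching ideal vertices and matching ideal arcs. The bridge exception is reduced to the standard analysis by treating the two noose segments around a bridge as coming from two separate ``half-faces'', after which the $\kappa$-case analysis applies verbatim.
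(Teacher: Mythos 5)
Your proposal is correct and essentially matches the paper's proof: both reduce to the combinatorial structure of how the three nooses traverse faces---the paper via vertex--face incidences (defining a ``combinatorial representation'' of a noose and showing the parent's is the symmetric difference of the children's), you via a per-face $\kappa(f)$ case split---and then invoke normalization rigidity to lift the resulting symmetric-difference identity to the generalized polygons, with the two triple points falling out from the Jordan-curve structure. The only organizational difference is that you derive interior containment up front from the branch-decomposition tree structure, whereas the paper obtains it last as a corollary of the symmetric-difference identity together with the Jordan-curve property; both routes are sound.
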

\begin{proof}
  Recall that a noose is a simple closed curve intersecting $G$
  precisely at its associated vertex separator.  It can be viewed as a
  sequence of segments, each traversing a face by connecting two
  incident vertices.  For the proof of the lemma, we first take a step
  back from geometric perspective and define the \emph{combinatorial
    representation} of a noose as the (multi-)set of its vertex--face
  incidences.  Note that the combinatorial representation of noose $O$
  is already uniquely determined by its interior and exterior edges.

  We show that the combinatorial representation of $O_{P}$ is the
  symmetric difference of the combinatorial representations of $O_{L}$
  and $O_{R}$.  For this, we partition the edges of $G$ into edges
  $E_{L}$ in the interior of $O_{L}$, $E_{R}$ in the interior of
  $O_{R}$ and $E_{\mathrm{ex}}$ in the exterior of $O_{P}$.  Then the
  edges in the interior of $O_{P}$ are $E_{L} \cup E_{R}$.  We
  consider a vertex--face incidence $i = (v,f)$ of one of the child
  nooses, without loss of generality $O_{L}$.  As $O_{L}$ separates
  $E_{L}$ from $E_{R} \cup E_{\mathrm{ex}}$, next to $i$ on the
  boundary of $f$ there is an edge $e_{1}\in E_{L}$ and an edge
  $e_{2} \in E_{R} \cup E_{\mathrm{ex}}$.  If $e_{2} \in E_{R}$, then
  $O_{R}$ also enters $f$ via $v$ and also contains the vertex--face
  incidence $i$, while $O_{P}$ does not contain $i$.  If otherwise
  $e_{2} \in E_{\mathrm{ex}}$, then $O_{P}$ also visits $f$ through
  the same vertex incidence, while $O_{R}$ does not.  All other cases
  are analogous and in particular any vertex--face incidence of the
  parent noose occurs in exactly one of the child nooses.  It follows
  that the vertex--face incidences of $O_{P}$ are the symmetric
  difference of those of $O_{L}$ and $O_{R}$.

  The normalized geometric realization of a noose contains a line
  segment or ray for every vertex--face incidence.  This line segment
  or ray is uniquely determined by the incidence's vertex and face according to
  the construction of normalized nooses.  As $O_{P}$, $O_{L}$, and
  $O_{R}$ are nooses of a normalized sphere cut decomposition, it
  follows that $E(O_{P})$ is the symmetric difference of $E(O_{L})$
  and $E(O_{R})$.\footnote{Recall that $O_{P}$ is a generalized
    polygon and thus $E(O_{P})$ refers to the edges of a
    plane graph.}  Thus, $O_{P}$ is the symmetric difference of
  $O_{L}$ and $O_{R}$, except for exactly two points that are visited
  by all three normalized nooses.  These correspond either to a vertex
  of $\mathcal{D}(S)$ at which all three nooses meet or to an (ideal)
  Voronoi vertex located in a face of $\mathcal{D}(S)$ that is visited
  by all three nooses.  As the nooses are simple closed curves, it
  directly follows that the interiors of $O_L$ and $O_R$ are subsets
  of the interior of $O_P$.
\end{proof}

Finally, our last lemma states that a normalized noose does not get
too close to sites not visited by that noose.

\begin{lemma}
  \label{lem:normalized_nooses_distance}
  Let $S$ be a set of sites with pairwise distance at least $2r$, let
  $\cD(S)$ be its Delaunay complex, and let $O$ be a noose of a
  normalized sphere cut decomposition of $\cD(S)$.  Then $O$ has
  distance at least $r$ from any site not visited by $O$.
\end{lemma}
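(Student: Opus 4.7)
The plan is to show that every finite point of $O$ lies in the closed Voronoi cell $\overline{\text{Vor}(u)}$ of some site $u \in S$ visited by $O$, while ideal points on $O$ have infinite hyperbolic distance to every site. Combined with the fact that the pairwise distance assumption forces the open hyperbolic disk of radius $r$ centered at any $s \in S$ to be contained in $\text{Vor}(s)$, and in particular forces the boundary of $\text{Vor}(s)$ to be at distance at least $r$ from $s$, this will immediately yield the claim for any unvisited site.

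By \Cref{lem:normalized_noose_summary}, $O$ decomposes into normalized noose segments of two kinds. An inner-face segment runs from a site $u$ to the Voronoi vertex $p_f$ in the interior face $f$ and on to a site $v$, where $p_f$ is a common boundary vertex of $\overline{\text{Vor}(u)}$ and $\overline{\text{Vor}(v)}$. Since Voronoi cells in $\Hyp^2$ are intersections of half-planes bounded by perpendicular bisectors, they are hyperbolic-convex, so the line segment $up_f$ lies in $\overline{\text{Vor}(u)}$ and $p_f v$ lies in $\overline{\text{Vor}(v)}$. An outer-face segment from $u$ to $v$ consists of a geodesic ray from $u$ to an ideal point $p_u$, an ideal arc from $p_u$ to $p_v$, and a ray from $p_v$ to $v$. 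By construction, $p_u$ is the ideal endpoint of the Voronoi ray dual to a Delaunay edge incident to $u$, so $p_u$ lies on the closure of $\text{Vor}(u)$ in the Poincaré disk; convexity of the bounding half-plane containing $\text{Vor}(u)$ then places the geodesic ray from $u$ to $p_u$ inside $\overline{\text{Vor}(u)}$, and analogously for the ray to $v$. The ideal arc consists of boundary points of the Poincaré disk and therefore has infinite hyperbolic distance to every point of $\Hyp^2$.

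Now suppose $s \in S$ is not visited by $O$. By the previous paragraph, every finite point of $O$ lies in $\overline{\text{Vor}(u)}$ for some $u \neq s$ visited by $O$, and since different Voronoi cells have disjoint interiors, no finite point of $O$ enters the interior of $\text{Vor}(s)$. As every other site has distance at least $2r$ from $s$, any point in the open disk of radius $r$ around $s$ is strictly closer to $s$ than to any other site and hence lies in $\text{Vor}(s)$; equivalently, every point of $\Hyp^2 \setminus \text{Vor}(s)$ is at hyperbolic distance at least $r$ from $s$. Combining these, every finite point of $O$ is at distance at least $r$ from $s$, and ideal points trivially contribute no finite distance. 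The main subtlety to handle carefully will be the convexity argument at an ideal vertex $p_u$: standard hyperbolic convexity is stated for pairs of interior points, so I would either verify directly in the Poincaré disk (or the half-plane model with $p_u$ at $\infty$) that the geodesic ray from $u$ to $p_u$ stays in the half-plane bounding $\text{Vor}(u)$ along the Voronoi edge ending at $p_u$, or argue it by a limiting argument applied to finite points on the Voronoi ray approaching $p_u$.
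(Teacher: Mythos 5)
Your proof is correct and follows essentially the same approach as the paper: decompose $O$ into normalized noose segments via \Cref{lem:normalized_noose_summary}, use convexity of Voronoi cells to show each segment stays within the cells of the sites it visits (with the rays and ideal arcs handled separately for the outer face), and conclude with the observation that the open radius-$r$ disk around an unvisited site is contained in its Voronoi cell. The paper glosses over the ideal-point convexity subtlety you flag; your caveat and proposed fixes (Beltrami--Klein convexity or a limiting argument) are sound and would only strengthen the exposition.
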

\begin{proof}
  Let $s_1, s_2 \in S$ be two consecutive sites visited by $O$ and let
  $O'$ be the segment of $O$ between $s_1$ and $s_2$.  We first argue
  that $O'$ does not go through the interior of any Voronoi cell of a
  site other than $s_1$ or $s_2$.  As $O$ is a noose, $O'$ stays
  inside one face $f$ of $\cD(S)$ with $s_1$ and $s_2$ on its
  boundary.  If $f$ is an inner face, then due to
  Lemma~\ref{lem:normalized_noose_summary}, $O'$ is comprised of the
  two line segments from $s_1$ to the Voronoi vertex corresponding to
  $f$ and from there to $s_2$.  As Voronoi cells are convex, $O'$ does
  not enter any other Voronoi cell. Similarly, if $f$ is the outer
  face, then the line segment of $s_i$ for $i \in \{1, 2\}$ to an
  ideal vertex $q_i$ of $\mathcal V(S)$ on the boundary of the Voronoi
  cell of $s_i$ does not leave the cell of $s_i$. Also, the ideal arc
  between $q_1$ and $q_2$ does not enter the interior of any Voronoi
  cell.

  As this holds for any two consecutive sites visited by $O$, it
  follows that $O$ goes only through Voronoi cells of sites it visits.
  Let $s \in S$ be a site not visited by $O$.  As all other sites have
  pairwise distance at least $2r$ to $s$, the open disk of radius
  $r$ around $s$ lies inside the Voronoi cell of $s$.  As the noose
  does not enter the Voronoi cell of $s$, it does not enter this open
  disk and thus has distance at least $r$ from~$s$.
\end{proof}

\subsection{Candidate nooses and noose hierarchies}
\label{sec:exact-algo:cand-noose}

In this section, we come back to the initial problem of computing an
independent set of a hyperbolic uniform disk graph $G$.  From the previous section, we know
that any independent set $S \subseteq V(G)$ of $G$ has a Delaunay
complex $\cD(S)$ with a sphere cut decomposition in which all nooses
satisfy the properties stated in
Lemmas~\ref{lem:normalized_noose_summary}--\ref{lem:normalized_nooses_distance}.
In the following, we show that these properties are also sufficient in
the sense that a collection of nooses satisfying them corresponds to
an independent set of $G$.  Thus, instead of looking for an
independent set itself, we can search for a hierarchy of nooses that
satisfy these properties.

To make this precise, we call a generalized polygon $O$ a
\emph{$k$-candidate noose} for $G$ if there exists an independent set
$S$ of size $|S| \le k$ of $G$ and a minimum width normalized sphere
cut decomposition of its Delaunay complex $\cD(S)$ that has $O$ as a
noose.  The following lemma states that there are not too many
candidate nooses and that we can compute them efficiently.  It in
particular implies that even though there may be exponentially many
independent sets in $G$, the normalized sphere cut decompositions of
their Delaunay complexes contain substantially fewer different nooses.

\begin{lemma}
  \label{lem:candidate_nooses_upper_bound}
  Let $G$ be a hyperbolic uniform disk graph with radius $r$, let
  $k \le n$ and let $\mathcal C$ be the set of all $k$-candidate
  nooses for $G$.  Then
  $|\mathcal C| \in n^{\Oh(1 + \frac{\log k}{r})}$.  Moreover, a set
  of generalized polygons that contains $\mathcal C$ can be computed
  in $n^{ \Oh(1 + \frac{\log k}{r}) }$ time.
\end{lemma}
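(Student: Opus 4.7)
The plan is to bound the combinatorial complexity of a normalized noose via Theorem~\ref{thm:treewidth-of-voronoi-diagram} and then observe that every corner of such a noose is determined by only a constant-size tuple of vertices of $G$. A counting argument then bounds $|\mathcal{C}|$, and a direct enumeration produces a superset of $\mathcal{C}$ within the same time budget.

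To start, let $O$ be a $k$-candidate noose, witnessed by an independent set $S\subseteq V(G)$ of size at most $k$ and a minimum-width normalized sphere cut decomposition of $\cD(S)$ having $O$ as a noose. Since $S$ is independent in the hyperbolic uniform disk graph of radius $r$, its points have pairwise hyperbolic distance at least $2r$, so Theorem~\ref{thm:treewidth-of-voronoi-diagram} applied to $S$ yields outerplanarity $1+\Oh(\log k/r)$ for $\cD(S)$. Since outerplanarity, treewidth, and branchwidth of a plane graph are all within constant factors of each other, the branchwidth of $\cD(S)$ is at most $b:=\Oh(1+\log k/r)$, so $O$ visits at most $b$ sites of $S\subseteq V(G)$.

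Next I would count the distinct shapes $O$ can take. By Lemma~\ref{lem:normalized_noose_summary}, between two consecutive sites of $O$ either one non-ideal Voronoi vertex appears, which is the hyperbolic circumcenter of three sites in $S$, or two ideal Voronoi vertices appear, each being one of two ideal endpoints of the perpendicular bisector of two sites of $S$. Thus every corner of $O$ is determined by an unordered tuple of at most three vertices of $G$ (plus at most one bit in the ideal case). Choosing the cyclic sequence of $\Oh(b)$ sites on $O$ together with the appropriate triple or pair-of-endpoints for each of the $\Oh(b)$ gaps between them yields at most $n^{\Oh(b)}=n^{\Oh(1+\log k/r)}$ generalized polygons, bounding $|\mathcal{C}|$. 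For the algorithm one iterates over all such tuples of vertices of $G$, computes the associated Voronoi/ideal vertices in $\Oh(1)$ time each using the standard hyperbolic circumcenter and perpendicular-bisector formulas, and assembles the corresponding generalized polygon in $\Oh(b)$ time. Since we need only a superset of $\mathcal{C}$, all polygons produced may be kept, even those not actually realized as nooses, so no realizability check is needed. The overall running time matches the counting bound, $n^{\Oh(1+\log k/r)}$.

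The main obstacle is the first step: invoking Theorem~\ref{thm:treewidth-of-voronoi-diagram} for the unknown set $S$ so that the branchwidth bound depends on $\log k$ rather than $\log n$. This is possible because the theorem's bound involves the number of sites in the Delaunay complex, which is $|S|\le k$, not the ambient graph size. Once the noose length is controlled by $b=\Oh(1+\log k/r)$, the counting and enumeration are routine combinatorial consequences of Lemma~\ref{lem:normalized_noose_summary}, and the claim follows.
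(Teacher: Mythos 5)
Your proposal is correct and follows essentially the same approach as the paper: invoke Theorem~\ref{thm:treewidth-of-voronoi-diagram} on the (unknown) independent set $S$ of size at most $k$ to bound branchwidth of $\cD(S)$ by $\Oh(1+\frac{\log k}{r})$, then use Lemma~\ref{lem:normalized_noose_summary} to observe that each corner of a normalized noose is determined by a constant-size tuple of vertices of $G$, yielding the $n^{\Oh(1+\log k/r)}$ bound and the matching enumeration. The only cosmetic discrepancy is that the paper filters out enumerated point sequences that fail to form generalized polygons (e.g., self-intersecting ones) so that the output is indeed a set of generalized polygons as the statement promises, whereas you state no check is needed; this is trivially patched by keeping the filter.
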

\begin{proof}
  Let $S$ be a independent set of $G$ with $|S| \le k$.  As the
  vertices of $S$ have pairwise distance at least $r$, the Delaunay
  complex $\cD(S)$ of $S$ is
  $1+\Oh\big(\frac{\log k}{r}\big)$-outerplanar by
  Theorem~\ref{thm:treewidth-of-voronoi-diagram}.  The outerplanarity
  of a graph gives a linear upper bound on its treewidth~\cite[Theorem
  83]{Bodlaender98} and thus also its
  branchwidth~\cite[(5.1)]{Robertson_Seymour_1991}.  Thus,
  $\mathcal{D}(S)$ has branchwidth $w \in \Oh\big(1+\twformulak\big)$,
  i.e., each noose of any minimum width sphere cut decomposition of
  $\cD(S)$ visits at most $w$ sites. As candidate nooses are required
  to be nooses in minimum width branch decompositions, each candidate
  noose visits at most $w$ vertices.
  Each visited vertex is one of $n$ vertices of $G$ and thus there are
  at most $n^w$ different sequences in which a candidate noose can
  visit at most $w$ vertices of
  $G$. 

  For a fixed sequence of visited vertices, there is the additional
  choice of how the candidate noose gets from one vertex to the next.
  Let $u$ and $v$ be two vertices that are consecutive in this
  sequence of vertices.  As candidate nooses have to be nooses of
  normalized sphere cut decompositions, we can use that any such
  normalized noose satisfies Lemma~\ref{lem:normalized_noose_summary},
  i.e., the normalized noose is a generalized polygon and between $u$
  and $v$ there is either one Voronoi vertex of $\mathcal V(S)$ or two
  ideal Voronoi vertices of $\mathcal V(S)$ with an ideal arc between
  them.  Although we do not know $S$, there are not too many choices
  for (ideal) Voronoi vertices.  Each Voronoi vertex of
  $\mathcal V(S)$ is the unique point that has equal distance from
  three vertices in $S$ and is thus determined by choosing three
  vertices of $G$.  This mean that, although there may be many
  independent sets $S$, there are only $\Oh(n^3)$ positions where
  Voronoi vertices of $\mathcal V(S)$ can be.  Similarly, each ideal
  Voronoi vertex is the ideal endpoint of a perpendicular bisector
  between two vertices of $G$.  Thus, there are only $\Oh(n^4)$ ways
  to choose two ideal Voronoi vertices.  As the noose visits at most
  $w$ vertices, there are up to $w$ pairs of consecutive vertices $u$
  and $v$ and for each we can choose one of the $n^{\Oh(1)}$ ways to
  connect them, amounting to $n^{\Oh(w)}$ options in total.

  To summarize, there are $n^w$ sequences of vertices of $G$ that can
  be visited by a candidate noose.  For each of these sequences, there
  are $n^{\Oh(w)}$ ways of how a candidate noose can connect these
  vertices.  This gives the upper bound of $n^{\Oh(w)}$ for the number
  of different nooses.  As $w \in \Oh\big(1+\twformulak\big)$, this
  gives the desired bound on $|\mathcal C|$.

  Observe that the above estimate is constructive, i.e., we can
 efficiently enumerate all possible options and filter out sequences
 of points that do not give a generalized polygon (e.g., as it would
 be self-intersecting).  Note that not all options actually yield
 valid candidate nooses.  However, we clearly get a superset of
 $\mathcal C$ in the desired time.
\end{proof}

Next, we go from considering individual candidate nooses to how
candidate nooses can be combined into a sphere cut decomposition.  To
this end, let $G$ be a hyperbolic uniform disk graph with radius $r$.
We define a \emph{polygon hierarchy} as a rooted full binary tree
of generalized polygons that visit vertices of $G$ (but can also
contain other (ideal) points).  Each polygon in the hierarchy is
either a leaf or has two child polygons.  Let $O_P$ be a parent
polygon with children $O_L$ and $O_R$ and let
$S(O_P, O_L, O_R) \subseteq V(G)$ be the set of vertices of $G$
visited by at least one of these generalized polygons.  We say that
this child--parent relation is \emph{valid} if $O_{L}$, $O_{R}$ and
$O_{P}$ meet in exactly two points $p_{1}, p_{2}$ such that
$O_P \setminus \{p_{1}, p_{2}\}$ is the symmetric difference of $O_L$
and $O_R$ and the interiors\footnote{As before, we assume that the
  region to the right of a closed curve is its interior.} of $O_L$ and
$O_R$ are subsets of the interior of $O_P$.  Moreover, the
child--parent relation is \emph{well-spaced} if the vertices in
$S(O_P, O_L, O_R)$ have pairwise distance at least $2r$ and, for each
$i \in \{P, L, R\}$, the generalized polygon $O_i$ has distance at
least $r$ to each vertex of $S(O_P, O_L, O_R)$ that is not visited by
$O_i$.  We call the whole polygon hierarchy valid and well-spaced if
each child--parent relation is valid and well-spaced, respectively.

Now consider any independent set $S$ of $G$ with Delaunay complex
$\cD(S)$.  Each noose of a minimum width normalized sphere cut
decomposition of $\cD(S)$ is a generalized polygon and the
nooses of such a sphere cut decomposition form a polygon hierarchy.
Observe that the above definition of being valid is directly derived
from the properties stated in
Lemma~\ref{lem:normalized_noose_combination} and thus this hierarchy
is valid.  Moreover, due to
Lemma~\ref{lem:normalized_nooses_distance}, it is also well-spaced.
Note that the statement of Lemma~\ref{lem:normalized_nooses_distance}
is actually stronger, as it guarantees the distance requirements
globally and not only locally for every child--parent relation.  Thus,
the sphere cut decomposition of $\cD(S)$ yields a valid and
well-spaced polygon hierarchy, leading directly to the following
corollary.
\begin{corollary}
  \label{cor:is_has_noose_hierarchy}
  Let $G$ be a hyperbolic uniform disk graph with radius $r$ and let
  $S$ be an independent set of $G$.  Then the nooses of a normalized
  sphere cut decomposition of $\mathcal{D}(S)$ form a valid and
  well-spaced polygon hierarchy of $G$ whose generalized polygons
  visit all vertices of $S$.
\end{corollary}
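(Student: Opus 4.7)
The plan is to show that the polygon hierarchy in question is simply the set of nooses of a minimum-width normalized sphere cut decomposition of $\cD(S)$, organized by the rooted branch decomposition tree, and that the three structural lemmas about normalized nooses proved just above already supply every property demanded by the definitions of \emph{valid} and \emph{well-spaced}. So the proof is essentially an assembly of Lemmas~\ref{lem:normalized_noose_summary}, \ref{lem:normalized_noose_combination} and \ref{lem:normalized_nooses_distance}, with a small argument at the end that every vertex of $S$ indeed appears on some noose.

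First I would observe that such a decomposition exists. Because $S$ is an independent set in a radius-$r$ HUDG, its sites are pairwise at distance at least $2r$, so by \Cref{thm:treewidth-of-voronoi-diagram} the Delaunay complex $\cD(S)$ has bounded outerplanarity, hence bounded treewidth and therefore bounded branchwidth. The theorem of Seymour--Thomas/Dorn--Penninkx--Bodlaender--Fomin quoted in \Cref{sec:sphere-cut-decomp} then yields a minimum-width sphere cut decomposition; normalizing each noose by the construction of \Cref{sec:exact_algo:delaunay} produces a normalized sphere cut decomposition whose nooses are generalized polygons (Lemma~\ref{lem:normalized_noose_summary}). Rooting the branch decomposition at an arbitrary leaf turns it into a rooted full binary tree of such generalized polygons, i.e.\ a polygon hierarchy of $G$ in the sense defined above.

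Next I would check validity and well-spacedness bag by bag. For every internal node with parent noose $O_P$ and children $O_L,O_R$, Lemma~\ref{lem:normalized_noose_combination} gives exactly the three geometric conditions in the definition of a valid child--parent relation: the three polygons meet in exactly two points $p_1,p_2$, the set $O_P\setminus\{p_1,p_2\}$ is the symmetric difference of $O_L$ and $O_R$, and the interiors of $O_L$ and $O_R$ are nested inside the interior of $O_P$. For well-spacedness, the vertices in $S(O_P,O_L,O_R)$ are all sites of the independent set $S$ and hence pairwise at distance at least $2r$, while Lemma~\ref{lem:normalized_nooses_distance} guarantees that any single normalized noose stays at distance at least $r$ from every site of $S$ it does not visit; restricting this to the sites in $S(O_P,O_L,O_R)$ gives the second condition.

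Finally I would verify the ``visits all vertices of $S$'' claim. For $|S|\le 1$ this is trivial, so assume $|S|\ge 2$; then every site $s\in S$ is an endpoint of at least one edge of $\cD(S)$ (the Delaunay complex of at least two sites has no isolated vertices). The leaf of the branch decomposition corresponding to such an edge has middle set equal to that edge's endpoints, so its noose visits $s$. I do not foresee a genuine obstacle: the only points that need slight care are the trivial $|S|\le 1$ edge cases and the bookkeeping that the notions of ``interior'' and ``child'' used in the polygon-hierarchy definition match the ones induced by rooting the branch decomposition and by our clockwise orientation convention, but this is exactly what is set up in \Cref{sec:sphere-cut-decomp,sec:exact_algo:delaunay}.
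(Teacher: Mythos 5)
Your argument takes the same route as the paper: normalize a minimum-width sphere cut decomposition of $\cD(S)$, read off validity from Lemma~\ref{lem:normalized_noose_combination}, and read off well-spacedness from Lemma~\ref{lem:normalized_nooses_distance}. The paper states exactly this chain of observations in the prose immediately preceding the corollary.

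Where you go beyond the paper is in explicitly arguing that the nooses visit all of $S$ (the paper simply takes this for granted). Your justification, however, contains a gap: you claim that the leaf noose of a branch decomposition edge corresponding to a graph edge $e$ incident to a site $s$ has middle set equal to the two endpoints of $e$. But a vertex lies in a middle set only if it is incident to edges of $\cD(S)$ on both sides of the corresponding tree edge, so a site $s$ of degree one in $\cD(S)$ appears in no middle set at all and is visited by no noose of any sphere cut decomposition. Degree-one vertices in $\cD(S)$ do arise, e.g.\ when three of the sites are collinear, which is a perfectly valid independent set of a HUDG. Since the paper elides this verification entirely, the issue is arguably inherited from the source rather than introduced by your proposal, but as written your final step does not close it.
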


Next, we show the converse, i.e., that the vertices visited by
generalized polygons of a valid and well-spaced hierarchy form an
independent set.  Together with the corollary, this shows that finding
a size $k$ independent set of $G$ is equivalent to finding a valid and
well-spaced polygon hierarchy whose generalized polygons visit $k$
vertices.

\begin{figure}[tbp]
  \centering
  \begin{subfigure}[b]{0.30\textwidth}
    \centering
    \includegraphics[page=2]{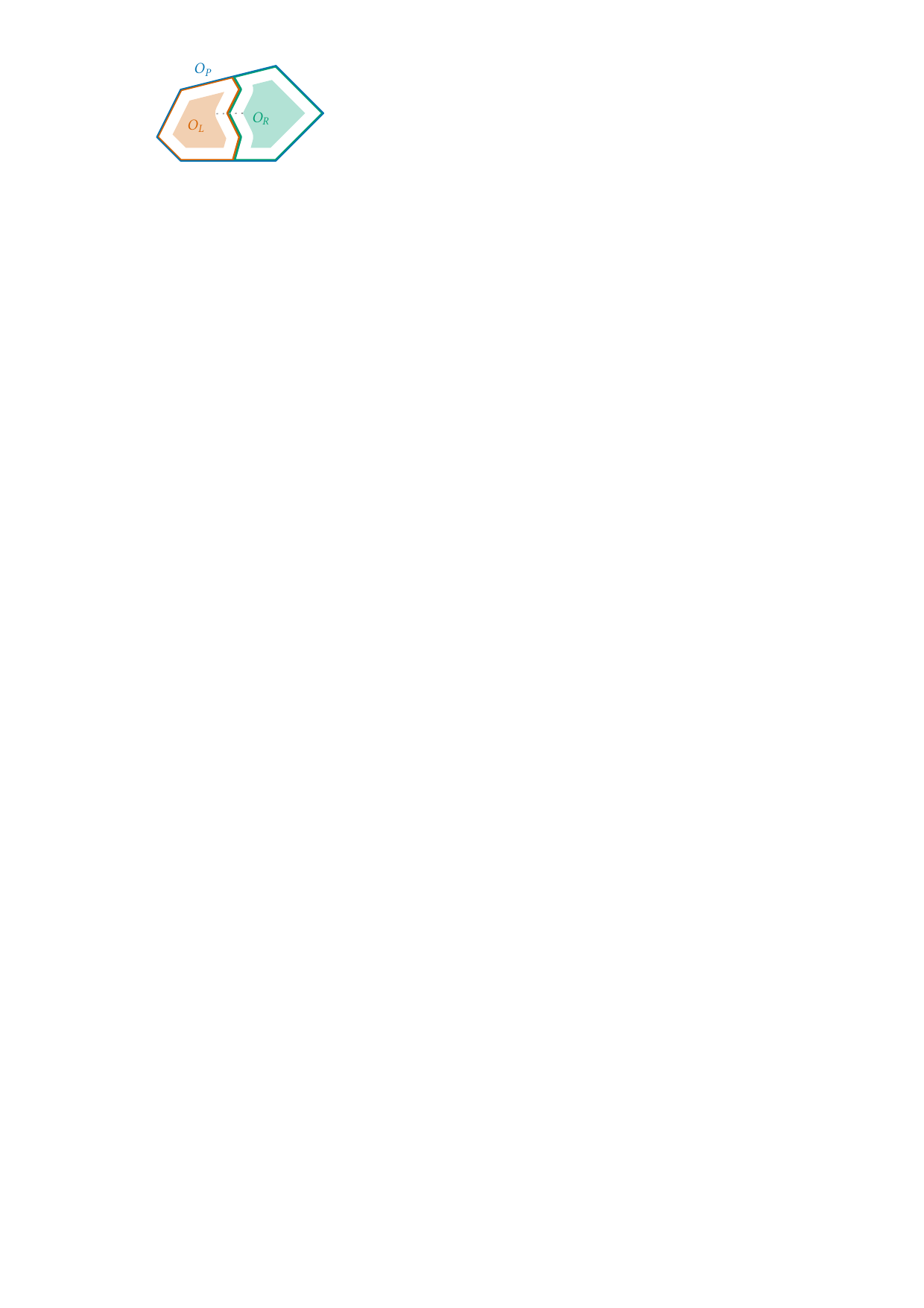}
    \subcaption{}
    \label{fig:nooses_ind:b}
  \end{subfigure}
  \begin{subfigure}[b]{0.30\textwidth}
    \centering
    \includegraphics[page=3]{figures/nooses_independence.pdf}
    \subcaption{}
    \label{fig:nooses_ind:c}
  \end{subfigure}
  \begin{subfigure}[b]{0.30\textwidth}
    \centering
    \includegraphics[page=4]{figures/nooses_independence.pdf}
    \subcaption{}
    \label{fig:nooses_ind:d}
  \end{subfigure}
  \caption{Visualization for the case distinction in the proof of
    Lemma~\ref{lem:noose_hierarchy_yields_is}: either (a) both points
    are in $S(O_{L})$ or $v$ is visited by either (b) $O_{R}$ or (c) a
    descendant of $O_{R}$. The dotted gray line represents a distance
    of at least $2r$, the dashed lines distances of at least $r$.}%
  \label{fig:nooses_ind}
\end{figure}

\begin{lemma}
  \label{lem:noose_hierarchy_yields_is}
  Let $G$ be a hyperbolic uniform disk graph with radius $r$.  The
  vertices visited by generalized polygons of a valid and well-spaced
  polygon hierarchy form an independent set of $G$.
\end{lemma}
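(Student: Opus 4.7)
The plan is to prove a stronger statement by induction on the hierarchy: for any valid and well-spaced polygon hierarchy $H$ rooted at $O^*$, the set $S_H$ of $G$-vertices visited by some polygon of $H$ satisfies (i) all pairs in $S_H$ lie at distance at least $2r$, and additionally (ii) every $u\in S_H$ not lying on $O^*$ satisfies $d(u, O^*)\ge r$. Part (i) applied to the whole hierarchy is the lemma itself, because edges of $G$ connect only vertices at distance at most $2r$.

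For the inductive step, let $O^*$ have children $O_L, O_R$ with sub-hierarchies $H_L, H_R$ and visited sets $S_L, S_R$; by induction, (i) and (ii) hold for each. To prove (i) for $H$, take $u, v\in S_H$. If both lie in $S_L$ (or both in $S_R$), the inductive (i) finishes the case, so assume $u\in S_L$ and $v\in S_R$. If $u$ is visited by $O_L$ and $v$ by $O_R$, both lie in $S(O^*, O_L, O_R)$ and the well-spacedness of this triple directly yields $d(u,v)\ge 2r$. If instead $u$ is visited only by a proper descendant of $O_L$, the inductive (ii) applied to $H_L$ gives $d(u, O_L)\ge r$, and symmetrically if $v$ lies deeper in $H_R$ we obtain $d(v, O_R)\ge r$. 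The validity condition makes $\text{int}(O_L)$ and $\text{int}(O_R)$ disjoint subregions of $\text{int}(O^*)$, so the geodesic from $u$ to $v$ must cross $O_L$ at some point $q_L$ and then $O_R$ at some point $q_R$, yielding
\[
  d(u,v) = d(u,q_L) + d(q_L,q_R) + d(q_R,v) \ge d(u, O_L) + d(v, O_R) \ge 2r.
\]
The mixed situation in which one vertex is visited at the top level and the other only deeper combines one direct well-spacedness bound with one application of (ii) in the same fashion. Cases where $u$ or $v$ is visited by $O^*$ are analogous, using that well-spacedness forces any vertex of $O^*$ lying on the curve of $O_L$ to be visited by $O_L$ too (otherwise the distance-$r$ requirement would be violated at distance $0$).

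For (ii), if $u\in S_H$ is not on $O^*$, then $u$ is visited by some polygon in $H_L$ or $H_R$, say $H_L$. If $u$ is visited by $O_L$ directly, then $u\in S(O^*, O_L, O_R)$ but not on $O^*$, so well-spacedness yields $d(u, O^*)\ge r$. Otherwise, the inductive (ii) on $H_L$ gives $d(u, O_L)\ge r$, and the nesting $\text{int}(O_L)\subseteq\text{int}(O^*)$ guaranteed by validity forces any geodesic from $u$ to a point of $O^*$ to cross $O_L$ first, whence $d(u, O^*)\ge d(u, O_L)\ge r$. The main obstacle, in my view, is the careful bookkeeping around the shared arc between sibling nooses and the two meeting points $p_1, p_2$, where a $G$-vertex could a priori be visited by both children but not by the parent (breaking the clean ``visited on a root-to-leaf path'' picture); here well-spacedness itself rules out pathological placements, since any $G$-vertex lying on a polygon's curve must either be a visited vertex of that polygon or be at distance at least $r$ from it, which excludes the degenerate configurations.
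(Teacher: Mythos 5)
Your proposal follows essentially the same strategy as the paper: strengthen the statement to an inductive claim with two parts (pairwise distance $\ge 2r$ within the subtree; distance $\ge r$ from the root noose for vertices not on it) and then run a case analysis on where the two vertices first appear in the hierarchy. The case breakdown (both in one subtree; one on $O_L$, one on $O_R$; one deeper in each subtree; mixed cases) matches the paper's.

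There is, however, one genuine omission in your induction hypothesis. Your part~(ii) only asserts $d(u, O^*) \ge r$, but the geodesic-crossing arguments you invoke require more: you use that if $u$ is visited only by a proper descendant of $O_L$, then the geodesic from $u$ to $v$ (or to a point of $O^*$) ``must cross $O_L$''. That is true only if $u$ lies in the \emph{interior} of $O_L$; a vertex at distance $\ge r$ from $O_L$ but in its exterior would not force a crossing. The paper avoids this by explicitly including ``and lies in the interior of $O$'' in the inductive claim (ii), and your induction should carry this too — it is easy to propagate (a vertex visited only by a descendant of $O_L$ is in $\mathrm{int}(O_L) \subseteq \mathrm{int}(O^*)$ by the nesting from validity), but as written it is assumed rather than proved. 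With that addition your argument closes and coincides with the paper's; your closing remark about bookkeeping near $p_1, p_2$ is addressed in the paper by the observation that a vertex on $O_L$ but not on $O_P$ must also lie on $O_R$, which follows from the symmetric-difference condition in the definition of validity.
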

\begin{proof}
  Consider a valid and well-spaced polygon hierarchy.  Let $O$ be a
  generalized polygon in this hierarchy and let $S(O)$ be the set of
  vertices of $G$ that are visited by $O$ or by descendants of~$O$.
  We prove the following claim by induction.

  \begin{claim}
    The vertices in $S(O)$ have pairwise distance $2r$ and each vertex
    in $S_O$ not visited by $O$ lies in the interior of $O$ and has
    distance at least $r$ from $O$.
  \end{claim}

  Note that the first part of the claim that the vertices in $S(O)$
  have pairwise distance $2r$ already implies the lemma statement when
  choosing $O$ to be the root of the hierarchy.  The second part of
  the claim is only there to
  enable the induction over the tree structure.
  For the base case, $O$ is a leaf and $S(O)$ is exactly the set of
  vertices visited by $O$.  Moreover, as the hierarchy is well-spaced,
  all vertices visited by $O$ have pairwise distance $2r$.  Hence, the
  claim holds for leaves.

  For the general case, let $O_P$ be a parent polygon with two
  children $O_L$ and $O_R$.  To first get the simpler second part of
  the claim out of the way, let $v \in S(O_P)$ be a vertex not visited
  by $O_P$.  As the child--parent relation is valid, $v$ is either
  visited by $O_L$ and $O_R$ or by a descendant of one of the two.  If
  $v$ is visited by $O_L$ and $O_R$ it lies in the interior of $O_P$
  and has distance at least $r$ to $O_P$ as the child--parent relation
  is well-spaced.  Otherwise, if $v$ is visited by neither $O_L$ nor
  $O_R$ but, without loss of generality, by a descendant of $O_L$.
  Then by induction, it lies in the interior of $O_L$ and has distance
  at least $r$ to $O_L$.  As the interior of $O_L$ is a subset of the
  interior of $O_P$, the same holds for $O_P$.

  It remains to show the first part of the claim, i.e., that any two
  vertices $u, v \in S(O_P)$ have distance at least $2r$.  If $u$ and
  $v$ are both visited by one of the polygons $O_i$ for
  $i \in \{P, L, R\}$, then this follows directly from the fact that
  the child--parent relation is well-spaced.  Otherwise, assume
  without loss of generality that $u$ is not visited by one of these
  three polygons, but is instead visited by a descendant of $O_L$.  We
  distinguish the following three cases of where $v$ lies; see also
  Figure~\ref{fig:nooses_ind}.  The first case is that $v$ is visited by
  $O_L$ or a descendant of $O_L$.  The second case is that it is not
  visited by $O_L$ but by $O_R$ and the third case is that it is
  visited by neither $O_L$ nor $O_R$ but by a descendant of $O_R$.
  Clearly, this covers all possibilities.

  In the first case, $v$ is also visited by $O_L$ or one if its
  descendants.  Thus $u, v \in S(O_L)$ and they have pairwise distance
  at least $2r$ by induction.  In the second case, $v$ lies on $O_R$
  but not on $O_L$ and thus in the exterior of $O_L$.  As the
  hierarchy is well-spaced, $v$ has distance at least $r$ from $O_L$.
  Moreover, by induction, $u$ lies in the interior of $O_L$ and has
  distance at least $r$ from $O_L$.  Thus, $u$ and $v$ have distance
  at least $2r$.  For the third and final case, $v$ lies in the
  interior of $O_R$ and has distance at least $r$ from $O_R$ by
  induction.  As the interiors of $O_L$ and $O_R$ are disjoint, the
  line segment between $u$ and $v$ has to cover the distance of at
  least $r$ from $u$ to $O_L$ and the distance of at least $r$ from
  $O_R$ to $v$.  Thus, also in this final case, $u$ and $v$ have
  distance at least $2r$.
\end{proof}

\subsection{Solving independent set}
\label{sec:exact-algo:solv-indep-set}

Lemma~\ref{lem:noose_hierarchy_yields_is} tells us that for any
hyperbolic uniform disk graph $G$, any valid and well-spaced polygon
hierarchy yields an independent set.  Moreover, by
Corollary~\ref{cor:is_has_noose_hierarchy}, we can obtain any
independent set of $G$ from such a hierarchy.  Thus, finding an
independent set of size $k$ is equivalent to finding a valid and
well-spaced polygon hierarchy that visit $k$ vertices of $G$.  This
can be done using a straightforward dynamic program on the set of all
$k$-candidate nooses or, to be exact, on the not too large superset
due to Lemma~\ref{lem:candidate_nooses_upper_bound}.

The dynamic program processes all $k$-candidate nooses in an order
such that a noose $O$ is processed after a noose $O'$ if the interior
of $O'$ is a subset of the interior of $O$.  When processing a noose
$O$, we compute a valid and well-spaced candidate noose hierarchy with
root $O$ such that the number of vertices visited by nooses in the
hierarchy is maximized.  We call this the \emph{partial solution for
  $O$}.  Note that the maximum over the partial solutions of all
nooses clearly yields a independent set with at least $k$ vertices if
such an independent set exists.  Also observe that when processing
$O$, we have already computed the partial solutions of all possible
child nooses of $O$ as we are only interested in valid hierarchies.
Thus, to compute the partial solution for $O$, it suffices to consider
all pairs of previously processed candidate nooses as potential
children of $O$.  For two such child candidates $O_L$ and $O_R$, we
only need to check whether a child--parent relation with $O$ would be
valid and well-spaced, which can be easily checked by only considering
$O_L$ and $O_R$.  Moreover, if this combination is valid, the number
of vertices visited by the resulting hierarchy with $O$ as root is the
sum of the partial solutions for $O_L$ and $O_R$ minus the vertices
visited by $O_L$ and $O_R$.  Finally, note that the start of the
dynamic program is also easy, as each individual candidate noose by
itself is a valid and well-spaced noose hierarchy (if all its visited
vertices are sufficiently far apart).  This concludes the description
of the dynamic program.  Note that the number of partial solutions and
thus the running time increase excessively if $r$ is chosen
sufficiently small depending on $n$.  In this case, the algorithm is
dominated by an algorithm for Euclidean intersection graphs.  We
obtain the following theorem.

\thmExactalgorithm*
\begin{proof}
  The three upper bounds for the running time follow via different
  algorithms. The first one, which depends on $r$, follows via the
  dynamic program described above. Here, we first enumerate all
  $k$-candidate nooses in $n^{\Oh(1+\twformulak)}$ time
  (Lemma~\ref{lem:candidate_nooses_upper_bound}). Then, we determine
  partial solutions in the form of a polygon hierarchy for each
  $k$-candidate noose, by considering the partial solutions of all
  pairs of smaller polygons. In total, this means that the dynamic
  program can be evaluated in time cubic in the number of
  $k$-candidate nooses to find a valid and well-spaced polygon
  hierarchy maximizing the number of visited vertices in
  $n^{\Oh(1+\twformulak)}$ time. As the vertices visited by a such a
  hierarchy form an independent set by
  Lemma~\ref{lem:noose_hierarchy_yields_is} and the enumerated nooses
  admit a hierarchy corresponding to a size $k$ independent set in $G$
  by Corollary~\ref{cor:is_has_noose_hierarchy} if such an independent
  set exists, this concludes the proof for the first algorithm.

  At the same time \textsc{Independent Set} in $G\in \HUDG(r)$ can
  also be decided in $2^{\Oh(\sqrt{n})}$ time or $n^{\Oh(\sqrt{k})}$
  time.  To see this, recall that the uniformly sized hyperbolic disks
  representing the vertices of $G$ can also be viewed as Euclidean
  disks in the Poincaré disk.  This means that $G$ is also an
  intersection graph of $n$ disks in the Euclidean plane.  As shown by
  De Berg, Bodlaender, Kisfaludi-Bak, Marx, and Van der
  Zanden~\cite[Corollary 2.4]{BergBKMZ20} and by Marx and
  Pilipczuk~\cite{MarxP22}, \textsc{Independent Set} in a disk graph
  can be decided in $2^{\Oh(\sqrt{n})}$ time or $n^{\Oh(\sqrt{k})}$
  time.
\end{proof}

\section{Independent set approximation}
\label{sec:approx_algo_appendix}

Our approximation algorithm in Theorem~\ref{thm:approx-independent-set} works
similarly to Lipton and Tarjan's version for planar graphs~\cite{LiptonT80}: we
repeatedly apply a separator until we get small patches and then solve each
patch individually using the algorithm of Theorem~\ref{thm:exact}. Unlike in
planar graphs, we do not have a strong and readily available lower bound on the
size of a maximum independent set to inform us how much of the graph we can
afford to ignore, so we will start by proving one based on \emph{degeneracy}.

A graph is said to be \emph{$k$-degenerate} if every subgraph has a vertex of
degree at most $k$. By picking this vertex $v$ and recursing on the subgraph
with $v$ and its neighbors removed, we can always get an independent set of
size at least $n/k$. Thus, this gives a lower bound on the size of a maximum
independent set. We will use the following lemma to prove degeneracy.

\begin{lemma}
    \label{lem:degeneracy}
    For any hyperbolic uniform disk graph $G$, there is a vertex whose
    neighborhood can be covered with three cliques and stabbed with four points.
\end{lemma}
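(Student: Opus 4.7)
The plan is to select a vertex $v$ that is geometrically extremal and exploit the restricted region in which its neighbors' centers must lie. Following the classical Euclidean leftmost-vertex argument for unit disk graphs, we work in the Beltrami--Klein model (where hyperbolic lines are Euclidean lines) and let $v$ be any vertex whose center $c_v$ has, say, minimum $x$-coordinate among all disk centers. Then every other center lies in the closed half-plane $H$ bounded by the Euclidean (hence hyperbolic) vertical line through $c_v$, so combined with the distance bound $d_{\Hyp}(c_u, c_v) \le 2r$ for $u \in N(v)$, the neighbors' centers are confined to the half-disk $R := B_{\Hyp}(c_v, 2r) \cap H$.

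Given this containment, the plan is to partition the centers in $R$ into three subsets, each of hyperbolic diameter at most $2r$. Any two vertices whose centers lie within distance $2r$ of each other are adjacent (their radius-$r$ disks intersect), so such a partition directly yields a three-clique cover of $N(v)$. Helly's theorem for hyperbolic disks then provides, for each of the three cliques, a common point stabbing all its disks. A fourth stabbing point, $c_v$ itself, is added so that $D_v$ is also stabbed; this is precisely what allows the lemma to furnish the $4\ell-1$ degeneracy bound via ply (each stabbing point meets at most $\ell$ disks, and $D_v$ is among those accounted for by the fourth point).

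The main obstacle is step two: constructing the three diameter-$2r$ subregions of $R$. The naive adaptation of the Euclidean $60^\circ$-sector argument \emph{fails} in the hyperbolic setting. For any $r>0$, a hyperbolic isoceles triangle with two sides of length $2r$ and apex angle $\pi/3$ has its third side strictly longer than $2r$, because this reduces to $\cosh(2r) > (\cosh(2r)+1)/2$; more generally, the diameter bound on a sector of radius $2r$ and opening angle $\alpha$ forces $\cos\alpha \ge \cosh(2r)/(\cosh(2r)+1)$, which makes $\alpha$ exponentially small in $r$. A working decomposition must therefore use genuinely hyperbolic structure. One promising route takes the central ball $B_{\Hyp}(c_v, r) \subseteq R$ as the first region (a clique by the triangle inequality) and covers the remaining ``outer collar'' of $R$ by two carefully shaped regions whose diameters are controlled via horocyclic boxes in the style of \Cref{sec:cover-separ-with-cliques} together with angle-of-parallelism bounds in the spirit of \Cref{lem:number-of-voronoi-neighbors}. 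Executing this requires delicate hyperbolic trigonometry that exploits the extremal position of $v$ to rule out the ``worst'' configurations that would otherwise require more than three cliques.
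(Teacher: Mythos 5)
Your diagnosis of the naive $60^\circ$-sector argument's failure in hyperbolic geometry is correct, and your formula $\cos\alpha\ge\cosh(2r)/(\cosh(2r)+1)$ is the right one. The problem is that this same calculation dooms your proposed workaround. The outer annulus $B_{\Hyp}(c_v,2r)\setminus B_{\Hyp}(c_v,r)$ restricted to the half-plane $H$ has boundary points at hyperbolic distance $2r$ from $c_v$ whose pairwise distance exceeds $2r$ whenever their angular separation exceeds roughly $e^{-r}$; hence for large $r$ there are $\Omega(e^r)$ points in the collar that are pairwise at distance more than $2r$, so no constant number of diameter-$2r$ regions (let alone two) can cover it. More fundamentally, the ``leftmost vertex in Beltrami--Klein'' choice only confines the neighbor centers to the half-disk $R$ and provides \emph{no} further structure within $R$, so there is nothing for ``delicate trigonometry'' to exploit; the extremal vertex you picked simply may not satisfy the lemma.

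The paper's proof sidesteps the hyperbolic-diameter issue entirely. It works in the Poincaré disk (not Beltrami--Klein), where hyperbolic disks appear as Euclidean disks whose Euclidean radii shrink with distance from the origin, and it picks the disk $D$ farthest from the origin --- thereby the one with \emph{smallest Euclidean radius}. Every neighbor $D'$ is then shrunk (as a Euclidean disk) to a sub-disk of $D'$ with the same Euclidean radius as $D$ while keeping the point of $D'$ nearest $c$ fixed; the shrunk disk still meets $D$, lies in the same $60^\circ$ wedge around $c$, and its center lies within Euclidean distance $2\rho$ of $c$. The three lower wedges are empty because any center there would lie farther from the origin than $c$. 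At that point the problem \emph{is} the Euclidean equal-radius case and the classical $60^\circ$ argument gives the three cliques; the four stabbing points come from covering the Euclidean half-disk of radius $2\rho$ by four radius-$\rho$ disks, whose centers stab the shrunk (hence also the original) disks. Your Helly step is also flawed independently: a clique of disks has pairwise intersections, but Helly's theorem requires triple intersections, and three mutually intersecting equal disks need not share a common point. So the missing idea is exactly the model switch plus the shrink-to-smallest trick; without it, neither the three-clique cover nor the four-point stab goes through.
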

\begin{proof}
    Consider the disks of $G$ in the Poincaré disk model. For the remainder of
    the proof we will treat these as Euclidean disks that happen to get smaller
    as they get further from the origin. Take a disk $D$ with maximal
    distance to the origin and without loss of generality, assume its center $c$
    lies on the negative part of the $y$-axis.

    \begin{figure}[tbp]
        \centering
        \hfill
        \begin{subfigure}[t]{0.31\textwidth}
            \centering
            \includegraphics[page=1]{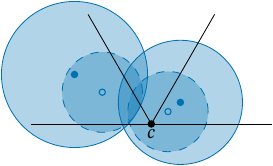}
            \subcaption{Shrinking disks.}
            \label{fig:degeneracy_bound:a}
        \end{subfigure}
        \hfill
        \begin{subfigure}[t]{0.31\textwidth}
            \centering
            \includegraphics[page=2]{figures/degeneracy_bound.pdf}
            \subcaption{Each wedge induces one clique.}
            \label{fig:degeneracy_bound:b}
        \end{subfigure}
        \hfill
        \begin{subfigure}[t]{0.31\textwidth}
            \centering
            \includegraphics[page=3]{figures/degeneracy_bound.pdf}
    \captionsetup{textformat=simple}
            \subcaption{A half-disk can be covered with four disks of half the radius.}
            \label{fig:degeneracy_bound:c}
        \end{subfigure}
        \hfill
        \caption{Figures for the proof of Lemma~\ref{lem:degeneracy}.}
        \label{fig:degeneracy_bound}
    \end{figure}

	Draw a horizontal line through $c$ and two lines at angle $\pi/3$ to form
	six wedges. The lower three wedges cannot contain any disk centers, because
	they would be further from the origin than $c$. Now, shrink each disk
	$D'$ intersecting $D$ until it has the same radius as $D$, while keeping
	the point of $D'$ closest to $c$ fixed; see
	Figure~\ref{fig:degeneracy_bound:a}. The resulting disk will be a subset
	of $D'$ and have its center in the same wedge. This makes the situation as
	in Figure~\ref{fig:degeneracy_bound:b} and means that we can use
	the same arguments as for unit disks \cite{Peeters91}:
	for each wedge, the disks with center in that wedge that intersect $D$ must
	form a clique with $D$, giving three cliques in total.

	Now, we will stab $D$ and the disks it intersects. For this, assume without
	loss of generality that $D$ has unit radius, then consider the problem of
	using unit disks to cover the upper half of a disk of radius $2$ centered
	at $c$. This can be done with four disks, as shown in
	Figure~\ref{fig:degeneracy_bound:c}. The centers of these covering disks
	are our stabbing points:
	any disk $\overline D$ intersecting $D$ must have its center in
	the half-disk and thus $\overline D$ will be stabbed by the center of
	the covering disk that contains the center of $\overline D$.
\end{proof}

By repeatedly picking the vertex given by Lemma~\ref{lem:degeneracy},
the clique bound gives a simple 3-approximation algorithm for
\textsc{Independent Set}. This generalizes the algorithm for unit
disks by Marathe, Breu, Hunt III, Ravi, and
Rosenkrantz~\cite{MaratheBHRR95}. Additionally, we get that any
hyperbolic uniform disk graph is $(3\omega - 3)$-degenerate with
$\omega$ being the size of the largest clique. It is also
$(4\ell - 1)$-degenerate for ply $\ell$, so the maximum independent
set has size $\Omega(n / \ell)$.

As first step in the approximation algorithm, we will use the
separators of De Berg, Bodlaender, Kisfaludi-Bak, Marx, and Van der
Zanden~\cite{BergBKMZ20} to separate the graph into many small
patches\footnote{For certain $r$ it will be better to use the
  separator of Theorem~\ref{thm:separator}, but this only improves
  constants in exponents.}.  These separators are $36/37$-balanced,
have size $\Oh(\sqrt n)$, and can be found in $\Oh(n^4)$ time. The
following proof matches that of Frederickson~\cite{Frederickson87}
that is the main lemma for so-called \emph{$r$-divisions}\footnote{We
  are forced to use $t$ because $r$ already refers to the disk
  radius.}.

\begin{lemma}
	\label{lem:r-division}
	For any $t \geq 1$, we can in $\Oh(n^4 \log(n/t))$ time partition $G$ into
	$\Oh(n / t)$ subgraphs of at most $t$ vertices each and $\Oh(n / \sqrt t)$
	cliques that separate the subgraphs from each other.
\end{lemma}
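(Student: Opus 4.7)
The plan is to follow Frederickson's classical $r$-division construction, adapted from planar separators to the clique-based separator of De Berg et al.\ for disk graphs (which hyperbolic uniform disk graphs are, once viewed in the Poincar\'e disk). Starting from $G$, I will recursively apply this $36/37$-balanced $\Oh(\sqrt{|P|})$-clique separator to any current piece $P$ with $|P| > t$, taking $\Oh(|P|^4)$ time per step, removing the separator cliques into a global separator list, and recursing on the two resulting subgraphs of size at most $(36/37)|P|$. Pieces of size at most $t$ become leaves of the recursion, and if necessary a post-processing pass merges adjacent leaves that share a separator clique until every leaf has size at least $t/2$, which only removes cliques from the separator list.

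The timing analysis is routine. Since the pieces at any single recursion level are vertex-disjoint subgraphs of $G$, the total separator-computation work at a level is at most $\sum_i |P_i|^4 \leq (\max_i |P_i|)^3 \cdot \sum_i |P_i| \leq n^4$, and the recursion depth is $\Oh(\log(n/t))$ by the $36/37$-balance condition, for a total of $\Oh(n^4 \log(n/t))$. After the merging step, every leaf has size at least $t/2$, so the number of leaves is at most $2n/t = \Oh(n/t)$.

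The main obstacle is bounding the total number of separator cliques by $\Oh(n/\sqrt{t})$, since a naive Cauchy--Schwarz summation across recursion levels produces a spurious $\sqrt{\log(n/t)}$ factor. To eliminate this, let $T(n)$ denote the separator-clique count produced on a piece of $n$ vertices; it satisfies $T(n) \leq c\sqrt{n} + T(n_1) + T(n_2)$ with $n_1 + n_2 \leq n$ and $\max(n_1,n_2) \leq 36n/37$, while $T(n) = 0$ for $n \leq t$. I will strengthen the inductive hypothesis to $T(n) \leq A\, n/\sqrt{t} - B\sqrt{n}$ for $n \geq t$, with constants $A, B > 0$ chosen so the induction closes. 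The critical subcase (both children exceed $t$) reduces to verifying $c\sqrt{n} \leq B\bigl(\sqrt{n_1} + \sqrt{n_2} - \sqrt{n}\bigr)$, and the balance condition $n_1, n_2 \in [n/37,\, 36n/37]$ implies $\sqrt{n_1}+\sqrt{n_2} \geq (7/\sqrt{37})\sqrt{n} = (1+\delta)\sqrt{n}$ for $\delta = 7/\sqrt{37} - 1 > 0$; choosing $B \geq c/\delta$ handles this case, and choosing $A$ sufficiently large compared to $B + c$ handles the remaining boundary cases where one or both children fall below $t$ (so that the $c\sqrt{n}$ overhead from splitting a small piece is absorbed by the slack in the $An/\sqrt{t}$ term). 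The resulting inequality $T(n) \leq An/\sqrt{t}$ yields the claimed $\Oh(n/\sqrt{t})$ separator-clique bound.
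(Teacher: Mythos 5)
Your proposal matches the paper's proof essentially step by step: both recurse with the De~Berg et al.\ balanced clique-based separator, bound the total clique count by induction with a strengthened hypothesis of the form $A\,n/\sqrt{t} - B\sqrt{n}$ (the paper uses $43c\,n/\sqrt{t} - 7c\sqrt{n}$) whose inductive step closes via $\sqrt{n_1}+\sqrt{n_2} \geq \frac{7}{\sqrt{37}}\sqrt{n}$, and charge $\Oh(n^4)$ work per recursion level over $\Oh(\log(n/t))$ levels. Your extra leaf-merging pass is not needed --- the $36/37$-balance already forces each leaf to have size $\Omega(t)$ once $t$ exceeds a constant, and for constant $t$ the $\Oh(n/t)$ piece bound is trivial --- and as described (merging two pieces of size up to $t$) it could actually produce pieces larger than $t$, so it is best dropped.
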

\begin{proof}
  We will iteratively apply a balanced clique-based separator until
  the size of the graph drops below $t$ vertices.  In particular, we
  use the separator of De Berg, Bodlaender, Kisfaludi-Bak, Marx, and
  Van der Zanden~\cite{BergBKMZ20}, which for a disk graph of $n$
  vertices consists of up to $c \sqrt n$ cliques, for some constant
  $c$. Let $B(n,t)$ denote the number of cliques we need to remove
  from a graph of size $n$ and target size $t$; we will use induction
  to prove that $B(n,t) < 43c n / \sqrt t - 7c \sqrt{n}$ for
  $n > t/37$.  As base case we have $t/37 < n \leq t$. Seeing as
  $n \leq t$, nothing needs to be done, so
  $B(n,t) = 0$. Simultaneously, $n > t/37$ implies
  $43c n / \sqrt t > \frac{43}{\sqrt{47}} c \sqrt{n} > 7c \sqrt{n}$,
  so indeed $B(n,t) = 0 < 43c n / \sqrt t - 7c \sqrt{n}$.
  
  If $n > t$, we use the separator to get two induced subgraphs that are both
  guaranteed to have size at least $n / 37$, then recurse on these.
  This gives the following:
  \begin{align*}
    B(n,t)
    &\leq c \sqrt n + \max_{\alpha \in [\frac{1}{37}, \frac{1}{2}]}
    B(\alpha n, t) + B((1-\alpha)n, t) \\
    &< c \sqrt n + 43c n / \sqrt t - 7c \sqrt{\frac{n}{37}} - 7c \sqrt{\frac{36 n}{37}} \\
    &= 43c n / \sqrt t - 7c \sqrt{n} \cdot \left( \sqrt{1/37} + \sqrt{36/37} - 1/7 \right) \\
    &< 43c n / \sqrt t - 7c \sqrt n.
  \end{align*}
  Thus, $B(n,t) \in \Oh(n / \sqrt t)$.
  The recursion has depth $\Oh(\log(n/t))$ and at each level
  we find separators in disjoint subgraphs of $G$.
  Thus, this procedure takes at most $\Oh(n^4)$ time per level
  and $\Oh(n^4 \log(n/t))$ time in total.
\end{proof}

This gives us all we need to prove Theorem~\ref{thm:approx-independent-set}.
\thmApprox*
\begin{proof}
	Let $t = \ell^2 / \eps^2$ and apply Lemma~\ref{lem:r-division},
	taking $\Oh(n^4 \log n)$ time and removing $\Oh(\eps n / \ell)$ cliques.
	Each of these cliques could only have contributed one point to the
	independent set. Since we know the maximum independent set has size
	$\Omega(n / \ell)$, we can only have thrown away a $\Oh(\eps)$ fraction of
	it, so finding the exact maximum independent set in the remaining graph
	gives a $(1 - \Oh(\eps))$-approximation. We do this by applying
	Theorem~\ref{thm:exact} separately to each of the $\Oh(n/t)$ subgraphs of
	size $t$. This takes $t^{\Oh(1 + \frac{\log t}{r})} =
	(\frac{\ell}{\eps})^{\Oh(1 + \frac{1}{r} \log\frac{\ell}{\eps})}$
	time for each subgraph and thus
	$n \cdot (\frac{\ell}{\eps})^{\Oh(1 + \frac{1}{r} \log\frac{\ell}{\eps})}$
	time in total.
	This makes the total running time $\Oh\left( n^4 \log n\right) + n \cdot
	(\frac{\ell}{\eps})^{\Oh(1 + \frac{1}{r} \log\frac{\ell}{\eps})} $
	to find a $(1 - \Oh(\eps))$-approximation, so by appropriately adjusting
	$\eps$ we can also get a $(1 - \eps)$-approximation with the same
	asymptotic running time.
\end{proof}

Using the results of Har-Peled~\cite{HarPeled23},
Theorem~\ref{thm:approx-independent-set} also directly implies a similarly
efficient approximation algorithm for \textsc{Minimum Vertex Cover}. A set $S$ of vertices in $G$ forms a \emph{vertex cover} if every edge of $G$ is incident to some vertex of $S$. In the \textsc{Minimum Vertex Cover} problem we are given a graph $G$ and $\eps>0$, and we wish to output that a vertex cover $S$ of $G$ that has size $(1+\eps)k^*$, where $k^*$ is the size of the minimum vertex cover.

\begin{corollary}
	Let $\eps\in (0,1)$ and let $G$ be a hyperbolic uniform disk graph
	with radius $r$ and ply $\ell$.
    Then a $(1+\eps)$-approximate minimum vertex cover of $G$ can be computed in
	$\Oh\left( n^4 \log n \right) + n \cdot
	(\frac{\ell}{\eps})^{\Oh(1 + \frac{1}{r} \log\frac{\ell}{\eps})}$
	time.
\end{corollary}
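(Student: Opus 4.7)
The plan is to exploit the complementarity between independent sets and vertex covers in $G$: a set $S\subseteq V(G)$ is an independent set if and only if $V(G)\setminus S$ is a vertex cover, so $\alpha^*(G)+k^*(G)=n$, where $\alpha^*$ denotes the maximum independent set size. A naive ``take the complement'' reduction of a $(1-\eps)$-approximate MIS only incurs additive error $\eps\cdot \alpha^*$ on the vertex cover side, which translates to multiplicative error $1+\eps\cdot\alpha^*/k^*$. Controlling the ratio $\alpha^*/k^*$ is where the bounded ply enters, together with the reduction framework of Har-Peled~\cite{HarPeled23}.

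First, I would invoke the degeneracy bound already established in Section~\ref{sec:approx_algo_appendix}: via Lemma~\ref{lem:degeneracy}, a hyperbolic uniform disk graph of ply $\ell$ is $(4\ell-1)$-degenerate, and hence admits an independent set of size at least $n/(4\ell)$. This lower bound on $\alpha^*$ is precisely the type of hypothesis that Har-Peled's reduction~\cite{HarPeled23} needs, and it bounds the worst-case ratio $\alpha^*/k^*$ by $\Oh(\ell)$. Armed with this, I would then apply that reduction, which converts a $(1-\eps')$-approximation for MIS into a $(1+\eps)$-approximation for MVC by choosing $\eps'\in\Theta(\eps\cdot k^*/\alpha^*)=\Theta(\eps/\ell)$. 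Plugging $\eps'$ into the running time of Theorem~\ref{thm:approx-independent-set} gives $\Oh(n^4\log n)+n\cdot(\ell/\eps')^{\Oh(1+(1/r)\log(\ell/\eps'))}$; substituting $\eps'=\Theta(\eps/\ell)$ replaces $\ell/\eps'$ by $\Theta(\ell^2/\eps)$. Since this quantity appears polynomially as the base and only logarithmically inside the exponent, the extra factor of $\ell$ is absorbed into the big-$\Oh$ in the exponent, yielding exactly the claimed bound.

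The main obstacle is correctly invoking Har-Peled's black-box reduction and checking that its hypotheses are satisfied by our setting; the crucial ingredient is the lower bound $\alpha^*\ge n/(4\ell)$, which is immediate from the degeneracy argument. A secondary point is to confirm that rescaling $\eps\mapsto\Theta(\eps/\ell)$ does not change the asymptotic form of the running time, which it does not because $\ell$ enters only inside a logarithm in the exponent. Once these two items are in place the corollary follows by direct substitution, with no new geometric arguments needed.
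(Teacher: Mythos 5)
There is a genuine gap in your justification: the degeneracy bound $\alpha^* \ge n/(4\ell)$ does \emph{not} imply $\alpha^*/k^* = \Oh(\ell)$. The degeneracy argument gives a lower bound on $\alpha^*$, but what you would need to control the multiplicative error of ``take the complement'' is a lower bound on $k^*$. These are unrelated: consider a hyperbolic uniform disk graph consisting of two overlapping disks together with $n-2$ pairwise-distant disks. The ply is $\ell = 2$, yet $\alpha^* = n - 1$ and $k^* = 1$, so $\alpha^*/k^* = n - 1$, which is unbounded in terms of $\ell$. So the ``key idea'' you identify as the crucial ingredient simply fails.

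The actual mechanism behind Har-Peled's reduction~\cite{HarPeled23} (which the paper invokes in one line without elaboration) is Nemhauser--Trotter-style kernelization and the hereditary property of the graph class, not ply. One first computes the LP (or crown) decomposition $V_0, V_{1/2}, V_1$ of the vertex set in polynomial time, where $V_1$ belongs to some optimal cover, $V_0$ is disjoint from some optimal cover, and the kernel $G[V_{1/2}]$ satisfies $k^*(G[V_{1/2}]) \ge |V_{1/2}|/2 \ge \alpha^*(G[V_{1/2}])$. Since $\HUDG(r)$ with bounded ply is hereditary, one applies Theorem~\ref{thm:approx-independent-set} to the kernel with the \emph{same} $\eps$ (no rescaling) to get an independent set $S$ with $|S| \ge (1-\eps)\alpha^*(G[V_{1/2}])$, whose complement inside $V_{1/2}$ has size at most $k^*(G[V_{1/2}]) + \eps\,\alpha^*(G[V_{1/2}]) \le (1+\eps)k^*(G[V_{1/2}])$; adding $V_1$ back gives a $(1+\eps)$-approximate vertex cover of $G$. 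Your rescaling $\eps \mapsto \Theta(\eps/\ell)$ was thus not only unjustified but also unnecessary; it happens not to change the stated asymptotics (since $\ell$ only enters inside a $\log$ in the exponent and $\ell \le \ell/\eps$ in the base), so your final running time is coincidentally correct, but the reasoning leading to it is wrong. The degeneracy bound of Lemma~\ref{lem:degeneracy} plays no role in this corollary.
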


\section{Algorithmic consequences of our separator}
\label{sec:algos_from_sep_appendix}

We can directly use our clique-based separator to compute independent sets and prove Corollary~\ref{cor:indepset_algo} via a simple divide-and conquer algorithm as in Corollary 2.4 of~\cite{BergBKMZ20}. In our case, we use the trivial bound of $n$ on the size of the cliques in Theorem~\ref{thm:separator}, which means that the weight of each clique in the separator is at most $\log(n+1)$. Consequently, the separator's weight is at most $\Oh((1+1/r)\log^2 n)$. The proof of Corollary~\ref{cor:indepset_algo} now follows that of Corollary 2.4 of~\cite{BergBKMZ20}, we merely replace the separator weight $\Oh(n^{1-1/d})$ with $\Oh((1+1/r)\log^2 n)$. We give the proof for completeness.

\corIndepSetAlgo*
\begin{proof}
The algorithm works as follows: First, we find a balanced separator according to Theorem~\ref{thm:separator} in $\Oh(n\log n)$ time. We can then iterate over all possible intersections $X$ between a fixed optimum independent set and the separator. For every clique we can pick one of at most $n$ vertices, or nothing, so there are $(n+1)^{\Oh(1+1/r)\log n} = 2^{\Oh(1+1/r)\log^2 n}$ options. For each of these possible intersections, we remove all disks intersecting $X$, and recurse on both sides of the separator.

The running time follows the recursion $T(n)=2^{\Oh(1+1/r)\log^2 n}T(\frac{2}{3}n)+n^{\Oh(1)}$. Denoting $2^x$ by $\exp_2(x)$, we can unroll this as follows:
\[T(n)=\exp_2\left(\Oh(1+1/r)\sum_{i=0}^{\log_{3/2} n} \log^2 \left(\left(\frac{2}{3}\right)^i n \right)\right)=2^{\Oh(1+1/r)\log^3 n}.\] This concludes the proof.
\end{proof}

We note that the recursion added an extra logarithmic factor, which can be avoided via $\cP$-flattened treewidth. We do not optimize this running time as the resulting algorithm would still be slower than that of Theorem~\ref{thm:exact}. However, we do need to bound the $\cP$-flattened treewidth for Corollary~\ref{cor:algos}.

Consider a graph $G\in \HUDG(r)$ where $r\in \Oh(1)$, let $C$ be any clique of $G$, and let $s$ be the center of some disk of $C$. Then all other disk centers from $C$ are contained in a disk of radius $2r$ around $s$. Since $r\in \Oh(1)$, a disk of radius $r$ has area $\Theta(r^2)$, and the above argument shows that the union of the disks in $C$ occupy area $\Theta(r^2)$. We will now decompose $\Hyp^2$ into regions of diameter at most $2r$ such that each region has an inscribed disk of radius at least $\Omega(r)$. The following can be derived from Lemma 2.1 (ii) of~\cite{Kisfaludi-Bak20} and from Theorem 6 of~\cite{Kisfaludi-BakW24}.

\begin{lemma}[see \cite{Kisfaludi-Bak20,Kisfaludi-BakW24}]
\label{lem:customtile}
For any $\delta\in \Oh(1)$ there exists a subdivision of the hyperbolic plane into connected regions with the following properties:
\begin{enumerate}
\item Each region has diameter less than $\delta$ and thus area $\Oh(\delta^2)$.
\item Each region has an inscribed disk of radius $\Omega(\delta)$, and thus area $\Omega(\delta^2)$,
\end{enumerate}
Moreover, the regions containing some given set of $n$ points can be computed in $\Oh(n)$ time.
\end{lemma}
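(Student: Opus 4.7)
The plan is to derive the statement directly from the two cited results together with the standard hyperbolic area formula, with only a small amount of care needed to cover the full range $\delta \in \Oh(1)$.

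First I would apply Lemma~2.1(ii) of~\cite{Kisfaludi-Bak20}, which, for any $\delta$ below a universal constant $\delta_0$, constructs a subdivision of $\Hyp^2$ into congruent convex cells arising from a regular hyperbolic $(p,q)$-tiling whose diameter is at most $\delta$ and whose inradius is $\Omega(\delta)$. The inradius lower bound is a direct consequence of the regularity of the construction: a regular hyperbolic polygon has inradius and circumradius that agree up to a multiplicative constant depending only on $(p,q)$. By varying over the infinitely many integer pairs with $(p-2)(q-2)>4$, one can realize every $\delta \in (0,\delta_0]$ up to constants; for $\delta$ larger than $\delta_0$ but still $\Oh(1)$, we can just use the tiling for $\delta_0$, whose cells have diameter $\Theta(\delta_0) = \Theta(\delta)$ in that constant regime.

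The area bounds then follow immediately from the hyperbolic disk area formula $4\pi\sinh^2(\rho/2) = \Theta(\rho^2)$ for $\rho\in\Oh(1)$: the cell lies inside a disk of radius $\delta$, giving area $\Oh(\delta^2)$, and contains a disk of radius $\Omega(\delta)$, giving area $\Omega(\delta^2)$. For the computational claim, I would invoke Theorem~6 of~\cite{Kisfaludi-BakW24}, which provides an $\Oh(1)$-time point-location primitive for any such regular hyperbolic tiling given the coordinates of the query point in a standard Euclidean model of $\Hyp^2$. Applying this primitive to each of the $n$ input points gives the $\Oh(n)$ total bound.

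The main obstacle I expect is cleanly matching the discrete set of natural diameters of regular hyperbolic tilings to an arbitrary $\delta \in \Oh(1)$, since unlike in the Euclidean plane one cannot simply scale a tiling continuously. This is handled by switching $(p,q)$ to approximate $\delta$ within a constant factor, and, if one wishes to hit very small $\delta$ exactly, by refining a single hyperbolic cell with a nearly Euclidean square grid at scale $\delta$ (valid since $\Hyp^2$ is quasi-Euclidean at sub-constant scales). Either way both the diameter upper bound and the inradius lower bound remain proportional to $\delta$, and the $\Oh(1)$ point-location time is preserved because the refinement has constant combinatorial size.
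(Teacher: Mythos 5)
The paper never actually proves this lemma --- it simply cites Lemma~2.1(ii) of~\cite{Kisfaludi-Bak20} and Theorem~6 of~\cite{Kisfaludi-BakW24} and says the statement ``can be derived'' from them. Your proposal is therefore supplying detail the paper omits, and at the level of which external results are invoked and how the area bounds fall out of $4\pi\sinh^2(\rho/2)=\Theta(\rho^2)$ for bounded $\rho$, it matches the paper's intent.

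However, the central step of your derivation contains a factual error. You claim that ``by varying over the infinitely many integer pairs with $(p-2)(q-2)>4$, one can realize every $\delta\in(0,\delta_0]$ up to constants.'' This is false: regular hyperbolic $\{p,q\}$-tilings have a \emph{minimum} cell size, achieved around $\{3,7\}$/$\{7,3\}$, and as $p$ or $q$ increases the cells only get larger (in the limit $q\to\infty$ the polygon becomes ideal). Hyperbolic geometry has an absolute length scale, so there is no family of regular tilings whose cell diameters tend to $0$. Consequently, for all $\delta$ below that fixed threshold, the $(p,q)$-switching gives you nothing; the ``refine a constant-size cell with a near-Euclidean grid at scale $\delta$'' step that you present as an optional refinement for ``hitting very small $\delta$ exactly'' is in fact the \emph{only} mechanism that produces the required small cells, and the argument should lead with it. Once framed that way the rest goes through: each coarse tile has $\Oh(1)$ diameter, so the metric inside it is bi-Lipschitz to Euclidean with a universal distortion constant, a Euclidean square grid of side $\Theta(\delta)$ gives diameter $<\delta$ and inradius $\Omega(\delta)$ after adjusting constants, and point location is one $\Oh(1)$ query for the coarse tile (via the cited point-location theorem) plus one arithmetic step to read off grid coordinates, giving $\Oh(n)$ overall.
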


We are now ready to prove Corollary~\ref{cor:algos}. The proof merely needs to establish a clique partition $\cP$ with a corresponding weighted treewidth bound, and the property that the graph $G_\cP$ has maximum constant degree, that is, each clique of $\cP$ has at most constantly many neighboring cliques. Then the machinery of~\cite{BergBKMZ20} yields the desired algorithms automatically.

\corAlgos*
\begin{proof}
Use the subdivision of Lemma~\ref{lem:customtile} with $\delta=2r$. We can define a clique-partition $\cP$ of $G$ by putting disks in the same partition class whenever their centers fall in the same region of the subdivision. Clearly each created class induces a clique in $G$ as the disk centers have pairwise distance less or equal to the diameter of the region, which is at most $2r$.

Fix such a clique partition $\cP$, and consider now a separator from Theorem~\ref{thm:separator}. If $X$ is one of the cliques in the separator, then let $p_x$ be the center of some disk of $X$. If $C\in \cP$ and $X$ contain disks that intersect each other, then the graph distance of any pair of disks in $C$ and $X$ is at most $3$. Consequently, all disk centers in $C\cup X$ are covered by a disk $D$ of radius $3\cdot 2r=6r$ around $p_x$. Since each region of $\cP$ has area $\Omega(r^2)$, there can be at most $\Oh(1)$ regions of the subdivision contained inside $D$. In particular, $X$ can intersect at most $\Oh(1)$ cliques from $\cP$.

Consider all cliques from $\cP$ that are intersected by the separator cliques of Theorem~\ref{thm:separator}: they clearly form a separator, and the above argument shows that there are $\Oh((1+1/r)\log n)$ such cliques in $\cP$. Since each clique of $\cP$ has weight at most $1+\log n$, we get a separator of weight $\Oh((1+1/r)\log^2 n)$ over $\cP$. Now~\cite{BergBKMZ20} (see also \cite{Kisfaludi-Bak20}) implies that the $\cP$-flattened treewidth of $G$ is $\Oh((1+1/r)\log^2 n)$. Since $\cP$ is a clique-partition with the additional property that each partition class is neighboring to $\Oh(1)$ other cliques, we can apply the entire machinery of~\cite{BergBKMZ20} and derive the required algorithms.
\end{proof}

\section{Conclusion}\label{sec:conclusion}

\subparagraph*{Summary.}
In this article we have explored how the structure of hyperbolic uniform disk graphs depends on their radius~$r$, and showed some algorithmic applications for the \textsc{Independent Set} problem. We proved a clique-based separator theorem, which showed that algorithms for the independent set of unit disks become more efficient as one increases the radius $r$ from the almost-Euclidean setting of $r=1/\sqrt{n}$ to $r=1$, while radii $r>1$ did not give any further gains on the separator size, but maintained a logarithmic separator. This result had some more or less immediate algorithmic consequences that we explored at the end of the paper.

After providing the separator for hyperbolic uniform disk graphs, we studied the Delaunay complexes of point sets that have pairwise distance at least $2r$, and uncovered further separator improvements as $r$ increases from $1$ to $\log n$. We used the outerplanarity bound of such Delaunay complexes to design an exact algorithm for \textsc{Independent Set}. The algorithm is based on dynamic programming using an unknown sphere cut decomposition of the solution's Delaunay complex. The resulting running times became polynomial for $r\in\Omega(\log n)$

Finally, we used a separator-based subdivision together with our exact algorithm to give an approximation scheme for \textsc{Independent Set} for hyperbolic uniform disk graphs of a given ply, further extending our exact algorithm, with only quasi-polynomial dependence on $1/\eps$ and the ply.

\subparagraph*{Future directions.} There are several intriguing future directions to explore in this space. First, is there a fully polynomial approximation scheme for \textsc{Independent Set} for $r\geq 1$? If not, is there at least an approximation scheme that is polynomial in $n$, quasi-polynomial in $1/\eps$, but independent of the ply?

Second, it would be interesting to explore separators in higher-dimensional hyperbolic spaces. The only case studied there is $r=\Theta(1)$~\cite{Kisfaludi-Bak20}. One may expect that separator size should also decrease with growing $r$. Is there a constant $c_d$ such that \textsc{Independent Set} can be solved in polynomial time in $\Hyp^d$ when $r\geq c_d\log n$?

Third, we should explore uniform disk graphs in surfaces of constant curvature, i.e., on the sphere, flat torus, and especially hyperbolic surfaces. What is the size of the best clique-based separator for a uniform disk graph of radius $r$ on a hyperbolic surface of genus~$g$?

Fourth, it would be interesting to investigate the complexity of problems other than \textsc{Independent Set}, and study how their complexity scales with the radius of the disks, or equivalently, with the curvature of the underlying space.

Finally, it is unclear whether our bound on balanced separators is tight for all values of $r$.  For constant $r$, regular hyperbolic tilings are hyperbolic uniform disk graphs with constant clique number and logarithmic treewidth, making our bound asymptotically tight.  However, for larger radii, it could be possible to reduce the logarithmic factor.



\bibliography{references.bib}

\appendix

\end{document}